\newcommand{\ran}{{\mathrm{Ran\,}}}
\newcommand{\vol}{{\mathrm{vol\,}}}   
\newcommand{\dom}{{\mathrm{Dom\,}}}
\newcommand{\Fix}{{\mathrm{Fix\,}}}
\newcommand{\pr}{{\mathrm{pr\,}}}
\newcommand{\spec}{{\mathrm{Spec \,}}}   
\newcommand{\mitad}{{\frac{1}{2}}}   
\newcommand{\Cl}{{\mathrm{Cl\,}}} 
\newcommand{\mirror}{{\mathrm{mirror}}} 
\newcommand{\ext}{{\mathrm{ext}}}   
\newcommand{\R}{{\mathbb{R}}}         
\newcommand{\C}{{\mathbb{C}}}         
\newcommand{\Z}{{\mathbb{Z}}}   
\newcommand\be{\begin{equation}}
\newcommand\ee{\end{equation}}
\newcommand\I{\mathbb{I}}
\newcommand{\beq}{\begin{eqnarray}}
\newcommand{\eeq}{\end{eqnarray}}
\newcommand{\Dsl}{{\,\raise0.7pt\hbox{/}\mkern-12.5mu D}}
\newcommand{\dsl}{{\,\raise0.7pt\hbox{/}\mkern-14mu D}}
\newcommand{\D}{\mathcal{D}}
\newcommand{\norm}[1]{\|#1\|}
\newcommand{\pO}{{\partial \Omega}}
\newcommand{\scalar}[2]{\langle#1\,,#2\rangle}
\newcommand{\matriz}[2]{\left( \begin{array}{#1} #2 \end{array}\right) }
\newcommand{\pois}[2]{\{\,#1,\,#2\,\}}     
\newcommand{\ellip}{{\mathrm{ellip}}}   
\newcommand{\Tr}{{\mathrm{Tr \,}}} 
\newcommand{\set}[1]{\{\,#1\,\}}
\renewcommand{\sp}[1]{{\mathcal{ #1}}}
\renewcommand{\d}{\mathrm{d}}
\renewcommand{\H}{{\mathbb{H}}}
\begin{document}

%
\catchline{}{}{}{}{}
%


\title{THE TOPOLOGY AND GEOMETRY OF SELF-ADJOINT AND ELLIPTIC BOUNDARY CONDITIONS FOR DIRAC AND LAPLACE OPERATORS}
  
\author{M. Asorey}
\address{Departamento de F\'{\i}sica Te\'orica, Facultad de Ciencias,ŧ 
Universidad de Zaragoza,  \\ E-50009 Zaragoza, Spain.\\
\email{asorey@unizar.es}}

\author{A. Ibort}
\address{ICMAT \& Depto. de Matem\'aticas, Universidad Carlos III de
Madrid, \\ Avda. de la Universidad 30, E-28911 Legan\'es, Madrid, Spain. \\
\email{albertoi@math.uc3m.es}}
 
\author{G. Marmo}
\address{Dipartimento di Fisica, INFN Sezione di Napoli, Universit\'a di
Napoli, \\ I-80125 Napoli, Italy. \\
\email{marmo@na.infn.it}}

\maketitle

\begin{history}
\received{(Day Month Year)}
\revised{(Day Month Year)}
\end{history}
 
 
\begin{abstract}
The theory of self-adjoint extensions of first and second order elliptic
differential operators on manifolds with boundary is studied via its most representative instances: Dirac and
Laplace operators.   

The theory is developed by exploiting the geometrical
structures attached to them and, by using an adapted Cayley transform
on each case, the space $\mathcal{M}$ of such extensions is shown to have a canonical group
composition law structure.     

The obtained results are compared with von Neumann's Theorem 
characterising the self-adjoint extensions of densely defined symmetric
operators on Hilbert spaces.  The 1D case is thoroughly investigated.

The geometry of the submanifold of elliptic self-adjoint extensions $\mathcal{M}_\ellip$ is 
studied and it is shown that it is a Lagrangian submanifold of the universal Grassmannian $\mathbf{Gr}$.
The topology of $\mathcal{M}_\ellip$ is also explored and it is shown that
there is a canonical cycle whose dual is the Maslov class of the manifold.
Such cycle, called the Cayley surface, plays a relevant role in the study of
the phenomena of topology change.

Self-adjoint extensions of Laplace operators are discussed in the path
integral formalism, identifying a class of them for which both treatments
leads to the same results.

A theory of dissipative quantum systems is proposed based on this theory
and a unitarization theorem for such class of dissipative systems is proved.

The theory of self-adjoint extensions with symmetry of Dirac operators is 
also discussed and a reduction theorem for the self-adjoint elliptic Grasmmannian
is obtained.

Finally, an interpretation of spontaneous symmetry breaking is offered from the
point of view of the theory of self-adjoint extensions.

\end{abstract}



\keywords{Self-adjoint extensions, elliptic boundary conditions, Dirac operators, Laplace operators.}


\tableofcontents

\markboth{M. Asorey, A. Ibort, G. Marmo}
{The Topology and Geometry of Self-adjoint and Elliptic Boundary Conditions}


\section{Introduction}


The construction and discussion of quantum mechanical systems requires a
detailed analysis of the boundary conditions (BC) imposed on the
system.  Often such boundary conditions are imposed by the observers and their
experimental setting or are inherent to the system.
This is a common feature of all quantum systems, even the simpler ones. 

The outcomes of measurable quantities of the system will change with
the choice of BC because the spectrum of the quantum observables will vary with the
different self-adjoint extensions obtained for them depending on the chosen BC.  

The Correspondence Principle provides a useful guide to the analysis of quantum mechanical
systems, but it is not obvious how it extends to include boundary conditions
both in classical and quantum systems.  For instance, Dirichlet's boundary
conditions corresponds to impenetrability of the classical walls
determining the boundary of the classical system in configuration space,
but, what are the corresponding classical conditions for mixed BC?   
Then, we are facing the problem of determining the classical limit of
quantum BC.   Conversely, we can address the question of `quantizing' classical boundary conditions.  In
particular we can ask if the classical determination of BC is enough to
fully describe a `quantized' system.   

As the experimental and observational capabilities are getting more and more powerful, we are
being forced to consider boundary conditions beyond the
standard ones, Dirichlet, Neumann, etc.   For instance, in Condensed matter, models
with `sticky' boundary conditions are seen to be useful to understand
certain aspects in the Quantum Hall effect \cite{Jo95}; in quantum
gravity, self-adjoint extensions are used to understand signature change
\cite{Eg95}.  Even more fundamental, topology change in quantum systems
is modelled using dynamics on BC's \cite{Ba95}.    Physical implications
of the problem were already analyzed in \cite{As12}.  

Following Dirac's approach, we can develop a canonical quantization
program for classical systems with boundary.   Such program requires
a prior discussion on the dynamics of Hamiltonian systems with
boundary.   

Without entering a full discussion of this important
problem, we may assume that a classical Hamiltonian system with
boundary is specified by a Hamiltonian function $H$ defined on the phase
space $T^*\Omega$ of a configuration space $\Omega$ with boundary
$\partial \Omega$, and a canonical transformation $S$ of the symplectic
boundary $T^*\partial \Omega$, obtained by considering the quotient of the restriction $T_{\partial\Omega}^*\Omega$ of the
cotangent bundle of $\Omega$ to $\partial \Omega$ and then, taking its quotient by the characteristic distribution of the restriction of the canonical symplectic structure on $T^*\Omega$ to it.  Thus, classical
boundary conditions (CBC's) are defined by canonical maps
\begin{equation}\label{CBCs}
S \colon T^*\partial \Omega \to T^*\partial \Omega \, ,
\end{equation}
and form a group, the group of symplectic
diffeomorphisms of $T^*(\partial\Omega)$\footnote{Similar
considerations can be made for more general classical phase space.}.

Dirac's quantization rule will be stated as follows:  Given a
CBC $S$, and two classical observables $f,g$ on $T^*\Omega$, determine
a quantum boundary condition (QBC) $\hat{S}$ and
two self-adjoint operators $\hat{f}_S$, $\hat{g}_S$ depending on
$\hat{S}$, such that
\begin{equation}\label{DQ1}
[\hat{f}_S , \hat{g}_S ] = i\hbar \widehat{\pois f g}_S \, ,
\end{equation}
and 
\begin{equation}\label{DQ2} 
\hat{S} \circ \hat{R} = \widehat{S R} ,
\end{equation}
where the composition on the left is a group composition on the space of
QBC's to be discussed later on.  It is obvious that as in the
boundaryless situation such quantization rule could not be implemented
for all observables and all CBC's.  So, one important question for such
program will be how to select subalgebras of classical observables and
subgroups of CBC's suitable for quantization.  

Before embarking in such
enterprise, some relevant aspects of the classical and the quantum
picture need to be clarified.  For instance, we have to understand the
structure of the self-adjoint extensions of the operator corresponding to a
classical observable.  The most important class of operators arising in
the first quantization of classical systems are first and second
order elliptic differential operators: the Laplace-Beltrami operator
when quantizing a classical particle without spin, the Dirac operator
for the quantization of particles with spin.  

This family of operators are
certainly the most fundamental of all elliptic operators, and in fact,
in a sense all elliptic operators are obtained from them.
Thus for Dirac and Laplace operators we will like to understand their
self-adjoint realizations in terms of CBC's.  For that we need to
understand first their self-adjoint realizations in terms of QBC's.  

Von Neumann developed a general theory of self-adjoint extensions of
symmetric operators \cite{Ne29}.  Such theory is often presented in the
realm of abstract Hilbert space theory that causes that the relevant
features attached to the geometry of the operators is
lost.  Thus we will proceed following a direct approach to the theory
of self-adjoint extensions exploiting the geometry of first and second order elliptic
differential operators, and obtaining a fresh interpretation of von
Neumann's theory directly in terms of boundary data.  A consequence of this
analysis will be an interpretation of QBCs in terms of the unitary group at the boundary,
that provides the group composition law needed for the implementation of Dirac's quantization rule
Eq. (\ref{DQ2}).

Elliptic differential operators have been exhaustively studied
culminating with the celebrated Atiyah-Singer
index theorem that relates the analytical index of such operators with the
topological invariants of the underlying spaces \cite{At68}.   Such analysis extends
to the boundary situation provided that appropriate elliptic boundary
conditions are used.  A remarkable example is provided by the global
elliptic boundary conditions introduced in \cite{At75}, or APS conditions,
that allow to describe the index of Dirac operators on manifolds with boundary.  
Such extensions have been adequately generalized for higher order
elliptic operators giving rise to interesting constructions of boundary
data \cite{Fr97}.   We should also recall here the 
important contributions by Lesch and Wojciechowski on the theory
of elliptic boundary conditions and spectral invariants and the geometry
of the elliptic Grassmannian (see for instance \cite{Le96}).

However, from a physical viewpoint another family of boundary conditions
has been considered for quark fields in bag models of quark confinement in QCD.
The theory of chiral boundary conditions is not well developed although many
physical applications have been analyzed from this perspective \cite{mit}, \cite{chiral}, \cite{As13}, \cite{As15}.

The program sketched so far concerns
exclusively  with first quantization of classical systems, but second
quantization is needed to truly deal with their physical nature. 
First quantization of classical systems,
requires to consider the quantization of boundary conditions, which
leads automatically to consider a collection of QBC's for the first
quantized system.   Even for very simple systems, like a fermion
propagating on a disk we need to consider `all' self-adjoint extensions
of Dirac's operator on the disk.   Thus, to proceed to second
quantization we need to understand the global structure of such space of
extensions.   We will show that such space lies naturally in the
infinite dimensional Grassmannian manifold and defines a Lagrangian
submanifold of it, that will be called the self-adjoint Grassmannian. 

Such infinite dimensional Grassmannian was introduced in the study of
the KdV and KP integrable hierarchies of nonlinear partial differential equations \cite{Se85}.    
It represents a `universal phase space'
for a large class of integrable evolution problems.  Lately, such
infinite dimensional Grassmannian was introduced again as the phase space of
string theory and its quantization was discussed \cite{Al87},
\cite{Wi88}.     

Our approach here is different, the infinite dimensional
Grassmannian appears as the natural setting to discuss simultaneously
all QBC's for a first quantized classical system of arbitrary
dimension.   In fact, the relevant QBC's are contained in a Lagrangian submanifold of
the elliptic Grassmannian that should be subjected to second
quantization.  Lagrangian submanifolds of symplectic manifolds play the
role of `generalized functions', thus such program would imply
quantizing a particular observable of the Grassmannian, making contact
again with string theory.  We should stress here that string theory is
genuinely 2D whereas we are dealing with a classical point-like theory
for classical objects in arbitrary dimensions.

The realization of such program would eventually introduce quantum
dynamical effects on the space of QBC's, suggesting that QBC's could
actually change (as suggested in \cite{Ba95}), or that there is the possibility of non-vanishing
probability amplitudes between states characterized by different
boundary conditions.   This observation imply that probably our first
statement concerning the structure of classical systems with boundary is
not totally correct, and we should extend it as follows:  In the
boundaryless situation, a Hamiltonian function defines a flow of
symplectic diffeomorphisms $\phi_t$ on the phase space of the system. 
We must replace then the specification of CBC's by a fixed symplectic
diffeomorphism on the boundary and to allow for a flow of symplectic
diffeomorphisms on the boundary too, i.e., by a boundary Hamiltonian
$H_B$.  Upon quantization such Hamiltonian will define a quantum
Hamiltonian on the space of QBC's that eventually will lead to a
propagator on second quantization.  

Apart from the difficulties and
physical implications of such ideas, we must point out an immediate
consequence related to the topology of the systems studied. 
Changing the classical and quantum boundary conditions implies that
possibility of changes in the topology of the system.  
For instance the quantization of a fermion moving
on a disk with changing boundary conditions can change the topology of
the disk and evolve into a surface of higher genus.  
Such process was analyzed by Asorey, Ibort and Marmo \cite{As05} and 
it was shown that in the first quantized scheme such change can only occur if the trajectory
in the space of self-adjoint extensions of the system cuts a
submanifold where the spectrum of the operator diverges. Such
submanifold is called the Cayley submanifold and describes its
topology.  Thus, in this sense the Cayley submanifold acts as a
`wall' in the space of QBC's for a first quantized system with
boundary, even though this will not preclude tunnelling in second
quantization (see also \cite{Sh12} for arguments along similar lines).  
  
There are many other physical phenomena involving the `boundary'
of physical systems that it would be simply impossible to enumerate here.
We would just like to mention a few recent contributions related to
different physical problems: the analysis of 
boundary conditions and the Casimir effect in \cite{As06}, \cite{As07}, the role of boundary conditions in topological insulators \cite{As13}, and boundary conditions generated entanglement \cite{Ib14b}.

In this work we will try to offer a panorama of the field by presenting
a discussion on some fundamental aspects of the theory of self-adjoint extensions
of Laplace and Dirac-like operators, as well as
a variety of related ideas and problems still on development.  There is a vast 
literature on the subject covering its mathematical flank and it would be impossible to reproduce here.
We must cite however, apart from the reference to the work by von Neumann already quoted, 
the pioneering work by Weyl \cite{We09}, Friedrichs' construction \cite{Fr34} or the contributions by Krein and Naimark\cite{Kr47,Na43} to
the general theory as a few historical landmarks.   

On the other hand the theory of extensions of elliptic operators has attracted a lot of attention 
and, apart from the index theorem related works cited above, there is a number of fundamental 
results on the fields.  We will just quote the analysis 
of non-local extensions of elliptic operators by Grubb \cite{Gr68} and the theory of
singular perturbations of differential operators by Albeverio  and Kurasov \cite{Al99} because of
their influence on this work (see also \cite{Ib14} for 
a quadratic forms based analysis of the extensions of the Laplace-Beltrami operator and \cite{Ib12}
where the reader will find a reasonable list of references on the subject).

Thus, we will proceed by reviewing first the geometric theory of self-adjoint extensions of Dirac and Laplace
operators where the role of the Cayley transform at the boundary will be emphasized and the group structure
of the space of self-adjoint extensions will be described explicitly. This will be the content of Sects. \ref{section:Dirac} and \ref{section:Laplace}. Afterwards, Section \ref{section:Neumann} the relation with von Neumann's theorem will be discussed and the one-dimensional situation will be discussed thoroughly.   The spectral function for arbitrary boundary conditions will 
be obtained explicitly and the quantum analogue of Kirchoff's laws will be obtained.

In Section \ref{section:Feyman} we will begin to discuss the semiclassical theory of
self-adjoint extensions and its implementation in the path integral representation of quantum systems.
Section \ref{section:Grassmann} will be devoted to study the global structure of the space of
extensions both elliptic and self-adjoint.  The infinite-dimensional Grassmannian will be 
discussed as well as the Lagrangian submanifold of elliptic self-adjoint extensions.

We will introduce the study of dissipative quantum systems via non-self-adjoint 
boundary conditions in Section \ref{section:dissipative} where a unitarization theorem
for dissipative systems will be stated.

The problem of dealing with symmetries of quantum systems will be addressed in Section \ref{section:symmetry},
that is, if the quantum system has a symmetry, which are the self-adjoint extensions compatible with it.
This problem will be stated and partially solved (see also \cite{Ib14c}) and some examples will be exhibited.   
Particular attention will be paid to the space of self-adjoint extensions of the quotient Dirac operator of a Dirac
operator with symmetry and its characterization as the symplectic manifold of fixed points of the self-adjoint Grassmannian.

It could also happen that even if we have a symmetry of a symmetric operator, the self-adjoint extension describing the corresponding quantum system will not share the same symmetry.   We will say then that there is a spontaneous breaking of the symmetry.  This situation will be succinctly dealt with in Section \ref{section:breaking}. 

\newpage


\section{Self-adjoint extensions of first order elliptic operators: Dirac operators}\label{section:Dirac}


\subsection{Dirac operators}\label{sec:Dirac_operators}

As it was indicated before, Dirac operators constitute an important class of 
first order elliptic operators, to the extent that most relevant elliptic operators 
arising in geometry and physics are directly related to them.  
Let us set the ground for them.  We will consider a
Riemannian manifold $(\Omega, \eta )$ with smooth boundary
$\partial\Omega$.  We denote by $\Cl(\Omega)$ the Clifford
bundle over $\Omega$, defined as the algebra bundle whose fibre at $x\in
\Omega$ is the Clifford algebra $\Cl(T_x\Omega)$ generated by vectors $u$ in
$T_x\Omega$ with relations
$$
u\cdot v + v\cdot u = - 2\eta (u,v)_x \, , \qquad \forall u,v\in T_x\Omega \, .
$$ 

Let $\pi \colon S\to \Omega$ be a $\Cl (\Omega)$-complex vector bundle over $\Omega$, i.e.,
for each $x\in \Omega$, the fibre $S_x = \pi^{-1}(x)$ is a
$\Cl(\Omega)_x$-module, or in other words, there is a representation of
the algebra $\Cl(\Omega)_x = \Cl (T_x\Omega)$ on the complex space $S_x$.  We will indicate by $u\cdot \xi$ the action of the element $u\in \Cl(\Omega)_x$ on the vector $\xi \in S_x$.  We will also assume in
addition that the bundle $S$ carries a hermitian metric denoted by
$(\cdot, \cdot)$ such that Clifford multiplication by unit vectors in $T\Omega$
is unitary, that is,
\begin{equation}\label{unit_cl} 
( u\cdot \xi, u\cdot \zeta )_x = ( \xi,\zeta )_x \, ,  \qquad 
\forall \xi,\zeta \in S_x, \, u\in T_x\Omega \, , \,\, ||u||^2 = 1 \, .
\end{equation}
Finally, we will assume that there is a Hermitean connection $\nabla$ on $S$ such
that 
\begin{equation}\label{derD} 
\nabla ( V \cdot \xi ) = (\nabla_\eta V)\cdot \xi +
V\cdot \nabla \xi ,
\end{equation}
where $V$ is a smooth section of the Clifford bundle $\Cl
(\Omega)$, $\xi \in \Gamma
(S)$ and $\nabla_\eta$ denotes the canonical connection on
$\Cl(\Omega)$ induced by the Riemannian structure $\eta$ on $\Omega$.  

A bundle $\pi\colon S\to \Omega$ with the structure described above is commonly called a Dirac
bundle \cite{La89} and they provide the natural framework to define Dirac
operators.  Thus, if $\pi\colon S\to \Omega$ is a Dirac bundle, we can define a
canonical first-order differential operator $\Dsl\colon \Gamma (S) \to
\Gamma (S)$ by setting:
\begin{equation}\label{dirac_op}
 \Dsl\xi = e_j \cdot \nabla_{e_j} \xi \, ,
\end{equation}
where $e_j$ is any orthonormal frame at $x\in \Omega$. 

There is a natural inner product on the space of smooth sections $\Gamma (S)$ of the Dicrac bundle $S$ induced from the
pointwise inner product $(\cdot, \cdot )$ on $S$ by setting
$$ 
\langle \xi , \zeta \rangle = \int_\Omega  (\xi (x), \zeta (x))_x \,  \vol_\eta (x) \, ,
$$ 
where $\vol_\eta$ is the volume form defined by the
Riemannian structure on $\Omega$.   We will denote the associated norm by
$|| \cdot ||_2$ and $L^2 (\Omega, S)$ is the  corresponding Hilbert space of square integrable sections 
of $S$.  

The Dirac operator $\Dsl$ is defined on the Frechet space of smooth
sections of $S$,
$\Gamma (S)$, however this is not the largest domain where it can be
defined.  The largest domain in $L^2(\Omega,S)$ where this operator can be defined
consists of the completion of
$\Gamma (S)$ with respect to the Sobolev norm $|| \cdot ||_{1,2}$ 
defined by
\begin{equation}\label{sobov_k} 
||\xi ||_{k,2}^2 = \int_\Omega (\xi (x), (I + \nabla^\dagger\nabla)^{k/2}
\xi (x) )_x \vol_\eta (x) \, ,
\end{equation}
with $k =1$,
where $\nabla^\dagger$ is the adjoint operator to $\nabla$ in $L^2(\Omega,S)$. 
In fact, such space is the space of sections of $S$ possessing first
weak derivatives which are in $L^2(\Omega,S)$.  This
Hilbert space will be denoted by
$H^1(\Omega,S)$ (see for instance \cite{adams03}). 

The operator $\Dsl$ defined on $H^1 (\Omega, S)$ is not self-adjoint as
we will discuss in what follows.
However it is immediate to check that the Dirac operator is
symmetric in $\Gamma_0 ({\buildrel \circ \over S})$, the space of smooth 
sections of $S$ with compact support contained in $\buildrel \circ \over 
\Omega$, the interior of $\Omega$.  In fact, after an
integration by parts we obtain immediately,
\begin{equation}\label{symD} 
\langle \Dsl \sigma , \rho \rangle = \langle \sigma, \Dsl \rho
\rangle \, , \qquad  \forall \sigma, \rho \in \Gamma_0 ({\buildrel \circ \over
S}) \, .
\end{equation}
The operator $\Dsl$ with this domain
is closable on $H^1 (\Omega, S)$ and its closure is the completion of $\Gamma_0
({\buildrel \circ \over S})$ with respect to the norm $||\cdot ||_{1,2}$. 
Such domain will be denoted by $H_0^1 (\Omega, S)\subset H^1(\Omega,S)$.  The elements
of $H_0^1 (\Omega, S)$ are sections vanishing on $\partial \Omega$ with
$L^2$-weak derivatives.  Notice that both $H_0^1(\Omega,S)$ and $H^1(\Omega,S)$ are dense
subspaces in $L^2(\Omega,S)$.

 If we denote by $\Dsl_0$ the closure of $\Dsl$ on
$H_0^1 (S)$, it is simple to check that $\Dsl_0^\dagger = \Dsl$ with domain
$H^1(\Omega,S)$, where $\Dsl_0^\dagger$ denotes the adjoint of $\Dsl_0$ in $L^2 (S)$.  The domains of the different self-adjoint extensions, if any, of $\Dsl_0$ will be
linear dense subspaces of $H^1 (\Omega, S)$ containing $H_0^1(\Omega,S)$ such
that the boundary terms obtained in the integration by parts procedure
will vanish.  We will denote in what follows by $\dom (T)$ the
domain of the operator $T$ and by $\ran (T)$ its range, then the symmetric
extensions $\Dsl_s$ of $\Dsl_0$ will be defined on subspaces $\dom (\Dsl_s)$ such
that 
$$
H_0^1 (\Omega, S) = \dom (\Dsl_0) \subset \dom (\Dsl_s) \subset \dom (\Dsl_s^\dagger)
\subset \dom (\Dsl) = H^1 (\Omega, S) \, , 
$$
and $\Dsl_s\xi = \Dsl \xi$ for any $\xi\in \dom (\Dsl_s)$.  A self-adjoint extension of $\Dsl$ is 
a symmetric extension $\Dsl_s$ of $\Dsl_0$ such that $\dom (\Dsl_s) = \dom (\Dsl_s^\dagger)$.
 
We will characterize such self-adjoint extensions by using first the geometry of some Hilbert
spaces defined on the boundary of $\Omega$ and, later on, we will compare
with the classical theory of self-adjoint extensions of densely defined symmetric operators developed by von Neumann \cite{Ne29}.  To do that, we will repeat
first the well-known integration by part process used to derive formula
(\ref{symD}).

Let $x\in \Omega$ and $\{ e_j\}$ an orthonormal frame in a neighborhood of
$x$ so that $\nabla_{e_j} e_i = 0$ at $x$ for all $i,j$.   If
$\xi$, $\zeta$ are sections of $S$, then they define a vector field
$X$ in a neighborhood of $x$ by the condition
$$
\eta ( X, Y ) = - ( \xi, Y\cdot \zeta ) \, , \qquad \forall Y \, .
$$
Then, at $x$, we get:
$$\langle \Dsl\xi, \zeta \rangle = \langle e_j \nabla_{e_j} \xi, \zeta
\rangle = - \nabla_{e_j} \langle \xi, e_j\zeta \rangle + \langle \xi , \Dsl\zeta
\rangle ,$$
but, by definition $\mathrm{div\, } (X) = \eta (\nabla_{e_j}X, e_j )$,
hence,
$$
\mathrm{div\,} (X) = \nabla_{e_j} \eta ( X, e_j ) - \eta (X, \nabla_{e_j}e_j) =  - (
e_j\cdot \xi , e_j\cdot \zeta ) .$$ 
Namely,
$$ 
( \Dsl\xi (x), \zeta (x) )_x - (\xi (x) , \Dsl\zeta (x))_x =  \mathrm{div}_x
(X) \,.
$$ 
Integrating both parts of the equation, we will find,
$$ 
\langle \Dsl\xi, \zeta\rangle - \langle \xi , \Dsl\zeta \rangle = \int_\Omega
\mathrm{div}_x (X) \vol_\eta (x) = \int_\Omega d (i_X \vol_\eta) = \int_{\partial
\Omega} i^* (i_X \vol_\eta) ,$$
where we denote by $i\colon \partial
\Omega \to \Omega$ the canonical inclusion. 
If $\nu$ denotes
the inward unit vector on the normal bundle to
$\partial \Omega$, the volume form $\vol_\eta$ can be written as
$\theta\wedge \vol_{\partial\eta}$, where $\vol_{\partial\eta}$ is an
extension of the volume form defined on
$\partial \Omega$ by the restriction of the Riemannian metric $\eta$ to it, and $\theta$ is a
1-form such that $\theta (Y) = \eta (Y, \nu)$.  Then we easily get,
$$ 
i_X \vol_\eta = (i_X\theta) \vol_{\partial\eta} = \eta (X,\nu )
\vol_{\partial \eta} = ( \xi, \nu\cdot \zeta )\, \vol_{\partial\eta} \, ,
$$
and, finally, we get
\begin{equation}\label{int_part_D1} 
\langle \Dsl\xi, \zeta\rangle - \langle \xi ,
\Dsl\zeta \rangle =  \int_{\partial \Omega} i^* (\nu\cdot \xi,
\zeta )\,  \vol_{\partial \eta} \, .
\end{equation}


\subsection{The geometric structure of the space of boundary data}\label{sec:Dirac_bc}

We will denote by $S_{\partial \Omega}$ the restriction of the Dirac bundle $S$
to $\partial\Omega$, i.e., $\pi_{\partial \Omega} \colon S_{\partial \Omega}= S\mid_{\partial \Omega} = i^*S \to \partial \Omega$, $\pi_{\partial \Omega} (\xi) = \pi(\xi)$ for any $\xi \in S_x$, $x\in \partial \Omega$. 
Notice that $S_{\partial \Omega} $ becomes a Dirac bundle over $\partial
\Omega$.  It inherits an inner product $(\cdot, \cdot )$ induced from the Hermitean
product on $S$ as well as an Hermitean
connection $\nabla_{\partial\Omega}$, defined again by restricting the
connection $\nabla$ on $S$ to sections along $\partial\Omega$.  Thus the
boundary Dirac bundle $S_{\partial \Omega}$, carries a canonical
Dirac operator denoted by $\Dsl_{\partial \Omega}$.  

Notice that $\partial
\Omega$ is a manifold without boundary, thus the boundary Dirac operator
is essentially self-adjoint and has a unique self-adjoint extension
(see \cite{La89}, Thm. 5.7).  We shall use this fact later on.   

We will denote as before by $L^2 (\partial \Omega,  S_{\partial \Omega})$ the Hilbert space of
square integrable sections of $S_{\partial \Omega}$ and by $\langle
\cdot ,\cdot \rangle_{\partial\Omega}$ its Hilbert product structure:
\begin{equation}\label{bounD} 
\langle \phi , \psi \rangle_{\partial \Omega} =
\int_{\partial \Omega} (\phi(x), \psi (x) )_x \, \vol_{\partial\eta} (x) \, , \qquad \psi, \phi \in L^2(\partial \Omega,  S_{\partial \Omega})\, .
\end{equation}
Because of Lions trace theorem \cite{Li72} the restriction map $i^* \colon \Gamma (S) \to \Gamma (S_{\partial \Omega})$,
$\sigma \mapsto \phi =i^* \sigma$, extends to a continuous linear map:
$$
b\colon H^1(\Omega,S) \to H^{1/2} (\partial \Omega, S_{\partial \Omega}) \, .
$$

The Hilbert space $H^{1/2} (\partial \Omega, S_{\partial\Omega})$ will be called the Hilbert space of
boundary data for the Dirac operator $\Dsl$ and will be denoted in what follows by $\sp
H_D$.  It carries an interesting extra geometrical structure induced by
the boundary form  defined by Green's formula (\ref{int_part_D1}) responsible for the non self-adjointeness of the Dirac operator $\Dsl$ in $H^1 (\Omega, S)$. 
\begin{equation}\label{int_part_D2} 
\Sigma (\phi , \psi ) = \int_{\partial \Omega}
( \nu(x)\cdot \phi (x), \psi (x) )_x \, \vol_{\partial \eta} (x) \,.
\end{equation}

The normal vector field $\nu$ defines an automorphism of the Dirac bundle $S_{\partial \Omega}$ and
an isomorphism:
$$
\nu\colon \Gamma (S_\pO ) \to \Gamma (S_\pO ) \, ,
$$
by $\nu (\phi ) (x) = \nu (x) \cdot \phi (x)$, for all $x\in \partial
\Omega$, $\phi\in \Gamma (S_\pO )$.   Such automorphism extends to a
continuous complex linear operator of $\sp H_D$ denoted now by $J$.  
Because $\nu^2 = - 1$ in the Clifford algebra, such operator $J$ verifies
$J^2 = -I$. In addition, because of the Dirac bundle structure, eq.
(\ref{unit_cl}), $J$ is an isometry of the Hilbert space product, 
\begin{equation}
 \langle J\phi, J\psi \rangle_{\partial\Omega} =  \langle \phi,\psi
\rangle_{\partial\Omega}, ~~~~~ \forall \phi, \psi \in \sp H_D \, ,
\label{complexstructure}
\end{equation}
i.e., $J$ defines a compatible complex structure on
$\sp H_D$.  In general, a complex Hilbert space
$\sp H$ with inner product $\langle \cdot, \cdot \rangle$ and a compatible
complex structure $J$ defines a new continuous bilinear form $\omega$ by
setting,
$$ 
\omega (\varphi,\psi) = \langle J\varphi, \psi\rangle \, , \qquad \forall
\varphi,\psi \in \sp H \, .
$$
Such structure is skew-hermitian in the sense that 
$$
\omega (\varphi , \psi ) = - \overline{\omega (\psi, \varphi) } \,.
$$
(Notice that in the real case $\omega$ will define a symplectic structure on $\sp H$.)  
We will call such structure $\omega$ symplectic-Hermitean and the corresponding 
linear space a symplectic hermitian linear space (see for instance \cite{Ha00} for 
a discussion of finite-dimensional Hermitian symplectic geometry).

The compatible complex structure allows to decompose the Hilbert space
$\sp H$ as 
\begin{equation}\label{H+H-}
\sp H = \sp H_+ \oplus \sp H_- \, ,
\end{equation} 
where $\sp H_\pm$ are the $\mp i$
eigenspaces of $J$, that is $\phi_\pm \in \sp H_\pm$ if $J\phi_\pm = \mp
i \phi_\pm$.  The subspaces $\sp H_\pm$ are orthogonal as the following
computation shows. 
$$\langle \phi_+ , \phi_- \rangle = \langle J\phi_+ , J\phi_- \rangle = 
\langle i\phi_+ , -i\phi_- \rangle = - \langle \phi_+ , \phi_- \rangle$$
The Hilbert space
$\sp H$ will carry in this way a natural decomposition in two orthogonal
infinite dimensional closed subspaces, i.e., a polarization.   Notice that 
the Hilbert space $\sp H_D$ carries already another complex structure, 
denoted by $J_0$, multiplication by $i$, and both complex structures are 
also compatible in the sense that $[J,J_0] = 0$.    
because of the bundle $S$ is a $\Cl(\Omega)$ complex bundle.  In addition 
this implies that
$$J(\sp H_+ \pm i\sp H_-)\subset \sp H_\pm. $$   

Hence, the Hilbert space of boundary data $\sp H_D$ for the Dirac operator
$\Dsl$ is a polarized Hilbert space carrying a compatible complex
structure $J_D$ and the corresponding symplectic-hermitian structure $\omega_D$.  Using
these structures the boundary form $\Sigma$ is written as,
\begin{equation}\label{bound_D} 
\Sigma (\xi, \zeta)= \omega (b(\xi), b(\zeta)) = \langle Jb(\xi), b(\zeta) \rangle_{\partial\Omega} \, , \qquad \forall \xi, \zeta\in \sp H^1 (S) .
\end{equation}

From this characterization we immediately see that self-adjoint
extensions $\Dsl_s$ of $\Dsl$ will be obtained in domains $\dom (\Dsl_s)$ such that
their boundary image, $b(\dom (\Dsl_s))$ will be isotropic subspaces of
$\omega_D$, thus vanishing the r.h.s. of Eq. (\ref{bound_D}).  Moreover to
be self-adjoint, such domains must verify that $b(\dom (\Dsl_s)) = b(\dom
(\Dsl_s^\dagger))$, thus, they must be maximal subspaces with this property. 
We have thus proved the first part of the following theorem,

\begin{theorem}\label{unit_D}  Self-adjoint extensions of the Dirac
operator $\Dsl$ are in one to one correspondence with maximally closed
$\omega_D$-isotropic subspaces of the boundary Hilbert space $\sp H_D$. 
The domain of any of these extensions is the inverse image by the
boundary map $b$ of the corresponding isotropic subspace.   Moreover, 
each maximally closed $\omega_D$-isotropic subspace $W$ of $\sp H_D$
defines an isometry $U\colon \sp H_+\to \sp H_-$, called the Cayley 
transform of $W$ and conversely.
\end{theorem}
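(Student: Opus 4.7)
The plan is to prove the two assertions in turn. The first bijection (self-adjoint extensions $\leftrightarrow$ maximal closed $\omega_D$-isotropic subspaces) is already essentially assembled in the text: the boundary form (\ref{int_part_D2}) and its reformulation (\ref{bound_D}) say that a closed extension $\Dsl_s$ of $\Dsl_0$ is symmetric iff $W_s := b(\dom \Dsl_s)$ satisfies $\omega_D(\phi,\psi) = 0$ for all $\phi,\psi \in W_s$, and self-adjointness translates the equality $\dom\Dsl_s = \dom\Dsl_s^{\dagger}$ into the maximality of $W_s$ among closed isotropic subspaces. Lions' trace theorem provides surjectivity of $b$, so the inverse assignment $W\mapsto b^{-1}(W)$ is well defined and completes the bijection.

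For the Cayley characterization, I would expand any $\phi\in\sp H_D$ in the polarization $\phi = \phi_+ + \phi_-$ of (\ref{H+H-}). Using $J\phi_\pm = \mp i\phi_\pm$ and the orthogonality of the two eigenspaces,
\[
\omega_D(\phi,\psi) = \langle J\phi,\psi\rangle_{\partial\Omega} = -i\langle\phi_+,\psi_+\rangle_{\partial\Omega} + i\langle\phi_-,\psi_-\rangle_{\partial\Omega}.
\]
Specialising to $\phi=\psi\in W$ yields $\norm{\phi_+} = \norm{\phi_-}$, so both projections $\pi_\pm|_W$ are injective: $W$ is the graph of a linear bijection $U\colon D_U\to R_U$ with $D_U=\pi_+(W)$ and $R_U=\pi_-(W)$, and the norm equality shows $U$ is isometric. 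Closedness of $W$ together with boundedness of $U$ forces $D_U$ and $R_U$ to be closed subspaces.

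Next I would exploit maximality to promote $U$ to an isometry defined on all of $\sp H_+$. Suppose instead that $D_U\neq\sp H_+$ and $R_U\neq\sp H_-$: pick unit vectors $\phi_+\in D_U^\perp$ and $\phi_-\in R_U^\perp$ and consider $W' = W \oplus \mathbb{C}(\phi_+ + \phi_-)$. The cross pairings with $W$ vanish by the choice of orthogonal complements, while self-isotropy of the new vector holds because of the matching norms; hence $W'$ is a strictly larger closed isotropic subspace, contradicting maximality. Therefore at least one of the two projections is onto, and exploiting the natural identification between $\sp H_+$ and $\sp H_-$ provided by $J$ one obtains the required isometry $U\colon\sp H_+\to\sp H_-$.

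For the converse direction, given any isometry $U\colon\sp H_+\to\sp H_-$, the graph $W_U = \set{\phi_+ + U\phi_+ : \phi_+\in\sp H_+}$ is closed, and the two contributions to $\omega_D$ computed above cancel by the isometry property, so $W_U$ is isotropic. Its maximality follows by showing $W_U^\omega\subset W_U$: if $\phi=\phi_+ + \phi_-$ annihilates every $\psi_+ + U\psi_+$, then $\langle \phi_+ - U^{\dagger}\phi_-,\psi_+\rangle_{\partial\Omega}=0$ for all $\psi_+\in\sp H_+$, forcing $\phi_+=U^{\dagger}\phi_-$ and therefore $\phi_-=U\phi_+$. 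The main obstacle is the maximality step above: one has to check that the candidate extension vector can always be consistently chosen to enlarge $W$, and in infinite dimensions one must carefully handle the possible asymmetry between $D_U$ and $R_U$, since an isometry between two infinite-dimensional Hilbert spaces need not be surjective.
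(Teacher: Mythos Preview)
Your approach mirrors the paper's: the first bijection via the boundary form \eqref{bound_D} and maximality, and the Cayley part via the decomposition along $\sp H_+\oplus\sp H_-$ with the computation $\omega_D(\phi,\psi)=-i\langle\phi_+,\psi_+\rangle_{\partial\Omega}+i\langle\phi_-,\psi_-\rangle_{\partial\Omega}$. The paper's proof of the second part simply asserts, after showing $W\cap\sp H_\pm=0$, that ``$W$ defines the graph of a continuous linear operator $U\colon\sp H_+\to\sp H_-$''; you are more careful in observing that transversality only yields injectivity of the projections, not surjectivity, and that maximality must be invoked. Your enlargement argument (adding $\phi_++\phi_-$ with $\phi_+\in D_U^\perp$, $\phi_-\in R_U^\perp$) is correct and establishes that at least one of $D_U$, $R_U$ is the full summand.

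Your attempted resolution of the remaining asymmetry, however, does not work: $J$ does \emph{not} interchange $\sp H_+$ and $\sp H_-$. By definition these are the $\mp i$-eigenspaces of $J$, so $J$ restricts to scalar multiplication on each and preserves the decomposition. There is no natural identification coming from $J$ alone. In fact the asymmetry you flag is genuine at the level of abstract skew-Hermitian geometry: if $U\colon\sp H_+\to\sp H_-$ is a non-surjective isometry, one checks that every $\phi\in W_U^\omega\setminus W_U$ has $\omega_D(\phi,\phi)\neq 0$, so $W_U$ is already maximal isotropic; symmetrically there are maximal isotropics with $R_U=\sp H_-$ but $D_U\subsetneq\sp H_+$, which are graphs of isometries $\sp H_-\to\sp H_+$ rather than $\sp H_+\to\sp H_-$. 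So the precise correspondence is with isometries defined on all of one summand (either one), and surjectivity onto the other is an extra condition. The paper's statement and proof gloss over this point just as you suspected; in the Dirac setting the self-adjoint extensions that actually arise correspond to unitaries because the underlying deficiency indices coincide, but that uses analytic input beyond the $\omega_D$-isotropy argument.
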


\begin{proof} Let $W$ be a closed $\omega_D$-isotropic subspace of $\sp H_D$. 
Then, $b^{-1} (W)$ is a closed subspace of $H^1 (\Omega, S)$ containing $H_0^1
(\Omega, S)$.  Let $\Dsl_W$ be the extension of $\Dsl$ defined on  $b^{-1} (W)$ and
compute $\Dsl_W^\dagger$.  If $b(\xi),b(\zeta) \in W$, then $\langle
\Dsl_W^\dagger\xi, \zeta\rangle = \langle \xi, \Dsl_W \zeta \rangle + \omega
(b(\xi), b(\zeta)) = \langle \xi, \Dsl_W \zeta \rangle$ because $W$ is
$\omega_D$ isotropic.  This shows that $b^{-1}(W) \subset \dom
(\Dsl_W^\dagger)$. If there were $\xi\in \dom (\Dsl_W^\dagger) - b^{-1}(W)$,
then, the same computation shows that $\omega_D (b(\xi), \phi) = 0$
for all $\phi\in W$, and the subspace $W^\prime = W \oplus \langle
b(\xi) \rangle$ will be $\omega_D$-isotropic, which is contradictory.  Thus
$\dom (\Dsl_W) = \dom (\Dsl_W^\dagger)$ and the extension is self-adjoint.  
The converse is proved similarly.

Let us consider now a closed maximal $\omega_D$-isotropic subspace $W$.
Let us show that $W$ is transverse to $\sp H_\pm$.  Let $\phi\in W\cap
\sp H_\pm$, then $0= \omega_D (\phi, \phi) = \langle J\phi, \phi \rangle
= \mp i || \phi ||^2$, then $\phi = 0$.  Then, the subspace $W$ defines
the graph of a continuous linear operator $U\colon \sp H_+ \to \sp
H_-$ and vectors $\phi = \phi_+ + \phi_- \in W$ have the form $\phi_- =
U\phi_+$.  Then, the $\omega_D$-isotropy of $W$ implies,
\begin{eqnarray*}
 0 &=& \omega_D (\phi_+ + U\phi_+, \psi_+ + U\psi_+) \\ &=&  \langle i\phi_+
- iU\phi_+ , \psi_+ + U\psi_+ \rangle = -i \langle \phi_+, \psi_+
\rangle + i\langle U\phi_+ , U\psi_+ \rangle \, ,
\end{eqnarray*}
for every $\phi_+, \psi_-\in \sp H_+$,
that proves that $U$ is unitary.  
\end{proof}


\subsection{The Cayley transform on the boundary}\label{Cayley_D}

Theorem \ref{unit_D} is the boundary analogue of Von Neumann's description 
of selfadjoint extensions by means of isometries between the deficiency 
spaces $\sp N_\pm$ (see later on).
In spite of the inherent interest of this result, it is well-known that sometimes
it is more useful to have an alternative description of such extensions in 
terms of selfadjoint operators.  For that we
will use the
Cayley transform.  We shall define the Cayley transform on the
polarized boundary Hilbert space $\sp H_D = \sp H_+ \oplus \sp H_-$ as the
isomorphism
$C \colon \sp H_D \to \sp H_D$ defined by $C(\phi_1, \phi_2) = (\phi_1
+ i\phi_2, \phi_1 - i \phi_2)$, for every $\phi_1\in \sp H_+$, $\phi_2
\in \sp H_-$.  The complex structure $J_D$ is transformed into
$C^{-1}J_D C (\phi_1, \phi_2) = (-\phi_2, \phi_1)$ and the
symplectic-hermitian structure $\omega_D$ is transformed into the
bilinear form
\begin{equation}\label{omega_C} \sigma_D (\phi_1,\phi_2; \psi_2, \psi_2) = 2i (\langle
\phi_2,
\psi_2 \rangle - \langle \phi_1 , \psi_2 \rangle ) .
\end{equation}
    
Let $U$ be a unitary operator $U\colon \sp H_+ \to \sp H_-$ such that
$I - U$ is invertible.   Then we have $\phi_- = U\phi_+$ and using the
Cayley transform, $\phi_\pm = \phi_1 \pm i\phi_2$, we will obtain that, 
$$\phi_1 = i\frac{I+U}{I-U} \phi_2 ,$$
that defines an operator $A_U\colon \sp H_+ \to \sp H_-$.
In general it will be more convenient to consider graphs of operators
$\sp H_+ \to \sp H_-$ because 
$C$ acts on $\sp H_D$, then it actually maps subspaces of $\sp H_D$ in subspaces
of $\sp H_D$.  

If $W$ is a subspace of $\sp H_D$, then the adjoint of
$W$ will be the subspace $W^\dagger$ defined by setting,
$$ W^\dagger = \set{(\psi_1, \psi_2)\in \sp H_+\oplus \sp H_- \mid
\langle \phi_1, \psi_2 \rangle = \langle \phi_2, \psi_1 \rangle, ~~
\forall (\phi_1, \phi_2)\in \sp H_+ \oplus \sp H_-} .$$
The subspace $W$ is said to be symmetric if
$W\subset W^\dagger$ and self-adjoint if $W = W^\dagger$. 
(see \cite{Ar61} for more details on
operational calculus with closed subspaces of a Hilbert space). 
We will say that an operator $A\colon \sp H_+ \to \sp H_-$ is selfadjoint 
if its graph is a selfadjoint subspace of $\sp H_D$.
In this sense it is obvious that the Cayley transform operator $A_U$ of 
$U$ is selfadjoint.   
Moreover, it is clear that
self-adjoint subspaces are maximally isotropic subspaces of the bilinear
form $\sigma_D$ given by Eq. (\ref{omega_C}).  But $\sigma_D$ is the
transformed bilinear form on $\sp H_D$ by the Cayley transform, then,
maximally $\sigma_D$-isotropic subspaces correspond to maximally
$\omega_D$-isotropic subspaces, that is the Cayley transform maps
one-to-one graphs of isometries $U\colon \sp H_+ \to \sp H_-$ into
self-adjoint subspaces of $\sp H_D$.

Let us denote by $K\colon \sp H_+ \to \sp H_-$ a compact operator, we
denote by $W+K$ the subspace of $\sp H_D$ given by $\set{(\phi,\psi + 
K(\phi) ) \mid (\phi,\psi) \in W}$ and we will call it a compact deformation 
of $W$.   If we denote by $\sp M_D$ the space of self-adjoint subspaces of $\sp H_D
= \sp H_+ \oplus \sp H_-$.  We will introduce a topology in $\sp M_D$ as 
follows.  We shall define base for the topology by the family of balls 
given by the sets of subspaces of the form $\sp O_{W,\epsilon} = \set{ 
W+ K \mid \parallel K \parallel < \epsilon, (W+K)^\dagger = W + K }$.  
Hence $\sp M_D$ becomes a topological space (in fact as we will see later a 
smooth manifold).

We can summarize the previous discussion in
the following theorem.

\begin{theorem}\label{sa_Dirac}  The Cayley transform defines a homeomorphism between the 
space of isometries $U(\sp H_+, \sp H_-)$ from $\sp H_+$ to $\sp H_-$ with 
the operator topology
and the space of self-adjoint subspaces of $\sp H_D = \sp H_+ \oplus \sp
H_-$.  Moreover, the self-adjoint
extensions of the Dirac operator $\Dsl$ are in one-to-one correspondence
with the self-adjoint subspaces in $\sp M_D$.
\end{theorem}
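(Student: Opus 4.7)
The plan is to factor the claimed bijection as the composition
$$
U(\sp H_+, \sp H_-) \,\longrightarrow\, \{\text{maximal closed $\omega_D$-isotropic subspaces of } \sp H_D\} \,\longrightarrow\, \sp M_D,
$$
where the first arrow assigns to a unitary $U$ its graph and the second is the Cayley transform $C$, and argue the two factors separately. The first factor is precisely the second half of Theorem \ref{unit_D}, already established. For the second factor I would first verify by a direct pairing computation that $\omega_D$ is pulled back to $\sigma_D$ by $C^{-1}$, recovering the normalization displayed in Eq. (\ref{omega_C}); since $C$ is a bounded invertible operator on $\sp H_D$, it carries closed subspaces to closed subspaces and preserves maximal isotropy. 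Next I would identify maximal $\sigma_D$-isotropic subspaces with self-adjoint subspaces: comparing the explicit expression for $\sigma_D$ with the definition of $W^\dagger$ shows that the vanishing of $\sigma_D$ on pairs of vectors in $W$ is exactly the symmetric condition $W \subset W^\dagger$, while maximality of such an isotropic subspace is equivalent to $W = W^\dagger$.

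For the homeomorphism claim I would track how perturbations propagate under the bounded linear isomorphisms $C$ and $C^{-1}$. If $\norm{U-U'}<\varepsilon$ in operator norm then the graph of $U'$ is a bounded deformation of the graph of $U$ by the operator $U-U'$, so its image under $C$ lies in a neighborhood of the form $\sp O_{CW,\,c\varepsilon}$ for a constant $c$ depending only on $\norm{C}$; conversely, a self-adjoint subspace in $\sp O_{W,\varepsilon}$ corresponds under $C^{-1}$ to a graph whose defining isometry is within operator-norm distance $c'\varepsilon$ of the original. The second assertion of the theorem, identifying elements of $\sp M_D$ with self-adjoint extensions of $\Dsl$, then follows by composing this homeomorphism with the first half of Theorem \ref{unit_D}.

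The step I expect to be the main obstacle is reconciling the two topologies. The base $\sp O_{W,\varepsilon}$ for $\sp M_D$ is explicitly built from \emph{compact} deformations $K$, whereas operator-norm convergence $U_n \to U$ of isometries only produces bounded perturbations $U_n - U$ which are generically not compact in an infinite-dimensional setting. I expect the resolution to require either a reinterpretation of $\sp O_{W,\varepsilon}$ in terms of merely bounded deformations at this abstract stage — with compactness entering only later when one restricts to the elliptic Grassmannian of Section \ref{section:Grassmann} — or an appeal to smoothing properties of the boundary trace that force admissible differences between self-adjoint boundary conditions into the compact ideal.
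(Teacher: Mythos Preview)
Your approach is essentially the same as the paper's: the paper also obtains Theorem~\ref{sa_Dirac} by observing that the Cayley transform $C$ carries $\omega_D$ to $\sigma_D$ (Eq.~(\ref{omega_C})), that self-adjoint subspaces are exactly the maximal $\sigma_D$-isotropic ones, and then invokes Theorem~\ref{unit_D} for the passage from isometries to maximal $\omega_D$-isotropic subspaces. The paper presents no further argument beyond this; Theorem~\ref{sa_Dirac} is stated as a summary of the preceding discussion.

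Your concern about the topology is well placed and is not resolved in the paper either. The base $\sp O_{W,\epsilon}$ is defined there using \emph{compact} deformations $K$, so the asserted homeomorphism with the full operator-norm topology on $U(\sp H_+,\sp H_-)$ is not substantiated at this abstract level. Your proposed reading --- that compactness is only genuinely required once one restricts to the elliptic sector in Section~\ref{section:Grassmann}, where differences of admissible boundary operators are Hilbert--Schmidt --- is the correct way to interpret the statement, and is consistent with how the paper later uses it.
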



\subsection{Simple examples: The Dirac operators in 1 and 2 dimensions}

Consider $\Omega$ a 1-dimensional manifold with boundary, hence a connected
component may be either the half-line or a closed interval in the compact case. 
We may assume without loss of generality that the metric is constant.
Then the Clifford bundle becomes the trivial bundle complex line
bundle over $\Omega$, as $\mathrm{Cl}(1)$ is algebra isomorphic to $\mathbb{C}$.
Hence Dirac's operator is simply $\Dsl = i\partial/\partial x$, i.e., the momentum operator.

Notice that if $\Omega$ is the half-line $[ 0, +\infty )$, there are no self-adjoint extensions of $D$.
The Hilbert space of boundary data is simply $\mathcal{H}_D = \mathbb{C}$.  Then,
Eq. \eqref{int_part_D1} becomes:
$$
\langle \Dsl\xi, \zeta\rangle - \langle \xi , \Dsl\zeta \rangle =  - i \bar\xi (0) \zeta(0) \, ,
$$
but, the only symmetric extensions are given by the condition $\xi ( 0 ) = 0$, i.e.,
functions in $H_0^1(\Omega)$, but such domain defines just a symmetric operator
in full agreement with von Neuman's theorem (see later Sect. \ref{section:Neumann}).

However, if $\Omega$ is a closed interval $[a,b]$, then $\mathcal{H}_D \cong \mathbb{C}^2$ and
$$
\langle \Dsl\xi, \zeta\rangle - \langle \xi , \Dsl\zeta \rangle = i (\bar\xi(b)\zeta(b) - \bar\xi(a)\zeta(a) )\, .
$$
Notice that $\mathcal{H}_\pm \cong \mathbb{C}$
and the space of self-adjoint extensions is given by unitary maps from $\mathcal{H}_+$ to $\mathcal{H}_-$,
that is $U(1)$ (see also Sections \ref{squaring} and \ref{Laplace1D} for a discussion on the relation between the self-adjoint extensions of
$\Dsl$ and the Laplace operator $\Delta$)

Dirac operators in 2D have been the subject of exhaustive research both
from the mathematical and physical side.   We will consider $\Omega$ to
be now a two dimensional compact orientable manifold, i.e., a Riemann 
surface $\Sigma$ with boundary $\partial \Sigma = \cup_{\alpha = 1}^r 
S_\alpha$ where $S_\alpha \cong S^1$.  Consider now a Dirac bundle $S\to 
\Sigma$.   Notice that because of the general previous considerations $S$ 
carries a representation of the Clifford algebra bundle $\Cl(\Sigma )$.  

Take a point $p\in \Sigma$ and a local chart $(U,p)$ around it with 
local complex coordinates $z = x + iy$.   The tangent space $T_p\Sigma$ is 
spanned by the local vector fields $e_1 = \partial /\partial x$, $e_2 = \partial / 
\partial y$, and the two vectors can be taken to be orthonormal.  Hence the 
Clifford algebra at $p$ is generated by $e_1, e_2$ with the relation:
$$ 
e_1 \cdot e_2 + e_2 \cdot e_1 = 0\, , \qquad  e_1^2 = -1\, , \quad  e_2^2 = -1 \, .
$$
The generators $e_1,e_2$ of the Clifford algebra act on the tangent bundle 
as the $2\times 2$ matrices 
$$
\sigma_1 = 
\left( \begin{array}{rr} 0 & 1 \\ -1 & 0 \end{array}\right) \, , \qquad \sigma_2 = \left( \begin{array}{rr} 0 & i \\ i & 0 \end{array}\right)  \, .
$$
This representation happens to be the spinor representation of the spin 
group $\mathrm{Spin} (1)$ which is a double covering of $U(1)$ (the covering $U(1) 
\to U(1)$ given by the map $z\mapsto z^2$).  Thus if we take the 
complexified tangent space $T\Sigma\otimes\C$ as a Dirac bundle then we 
will have for the Dirac operator:
$$ \Dsl = e_1 \cdot \nabla_{e_1} + e_2\cdot \nabla_{e_2} = \matriz{cc}{0 & 
\bar{\partial} + A \\ \partial + \bar{A} & 0} ,$$
with $\partial$, $\bar{\partial}$ being the Cauchy--Riemann operators on 
the variables $x,y$, and $A$ the homogeneous part of the Levi--Civita 
connection corresponding to the given Riemannian metric on $\Sigma$.  

The $\Cl (2)$ representation above is reducible because $\Gamma_5 = i 
e_1e_2$ anticommutes with $e_1$ and $e_2$.  In fact, $\Gamma_5 = \sigma_3$ 
which has eigenvalues $\pm 1$.   The representation space of $\Cl(2)$ 
decomposes into two subspaces $S^\pm$ of eigenvectors of $\pm 1$ 
respectively.   The operator $\Gamma_5$ is known as the chirality operator, 
and the spaces of sections of $S^\pm$ are the right-handed and left-handed 
fermions respectively.   Notice that $\Gamma_5 \psi^\pm = \pm \psi^\pm$, 
and the chiral projectors, i.e., the orthogonal projectors into the 
orthogonal subspaces $S^\pm$ are given by $\Pi_\pm = 1/2 (I \pm \Gamma_5)$.  
Thus given any section $\psi$ of $T\Sigma^\C$, we have $\psi = \psi^+ + 
\psi^-$, $\psi^\pm = \Pi_\pm \psi$.  Notice finally that $[\Dsl,\Gamma_5]_+ = 
0$, where $[\cdot, \cdot]_+$ denotes the anticommutator, hence $\Dsl\colon 
S^\pm \to S^\mp$, and $\Dsl$ exchanges the irreducible representation 
spaces as it is apparent from the block structure of the matrix 
representing $\Dsl$ above.  

Assuming that the boundary is connected, i.e.,  $\partial \Sigma \cong S^1$, then 
the space of boundary data $\mathcal{H}_D$ is given as $H^{1/2}(S^1, S_{\partial \Omega})$ which
is isomorphic to the space of functions of Sobolev class $1/2$ on $S^1$ with values in $\mathbb{C}^2$.
Clifford multiplication by $\nu$, the outward normal determines a polarisation $\mathcal{H}_D = \mathcal{H}_+ \oplus \mathcal{H}_-$ as 
described in Eq. \eqref{H+H-}.  Thus, according with Thm. \ref{unit_D} the space of self-adjoint extensions of $\Dsl$ 
is given by the family of unitary operators $\mathcal{U}(\mathcal{H}_+ , \mathcal{H}_-)$.   
Notice that each one of the subspaces $\sp H_\pm$ is isomorphic to $H^{1/2}(S^1)$, hence
the space of self-adjoint extensions  can be identified with the group of 
unitary operators of the Sobolev space $H^{1/2}(S^1)$.

\newpage


\section{Self-adjoint extensions of Laplace operators}\label{section:Laplace}



\subsection{The covariant Laplacian}

We will start the discussion of the theory of self-adjoint extensions of
second order differential operators considering a particular situation
of big physical interest.  Later on, we will extend this considerations
to more general families of second order operators and we will find the
relation with the theory of extensions for Dirac operators studied in
the first part.

We will consider a physical system 
subject to the action of a Yang--Mills field.   The configuration space
of the system will be a compact connected smooth Riemannian manifold
$(\Omega ,\eta)$ with smooth boundary $\partial \Omega$ \footnote{The
theory can be generalized for boundaries with singularities and noncompact
manifolds under appropriate regularity conditions}.   We will consider a
hermitian vector bundle $E\to \Omega$ with hermitian product $(\cdot , \cdot)$. 
The Yang--Mills potential will be a connection 1-form $A$ with values on
$\mathrm{End} (E)$.   The space of smooth sections of
$E$ will be denoted by $\Gamma (E)$ and the covariant differential
operator $d_A = d + A$ will map $\Gamma (E)$ to $\Gamma (E) \otimes
\Lambda^1 (\Omega )$, i.e., it will map sections of $E$ to 1--forms in
$\Omega$ with values on $E$.   

We will define the $L^2$-product $\langle .,. \rangle$ in $\Gamma
(E)$ as usual,
\begin{equation}\label{riemp}  \langle \psi_1, \psi_2 \rangle =
\int_\Omega \, (\psi_1(x), \psi_2(x))_x \, \vol_\eta (x),
\end{equation}
where $\vol_\eta$ denotes the Riemannian volume defined
by the metric $\eta$.   We define
$L^2 (E)$ as the completion with respect to the norm $\parallel .
\parallel_2$ of the space $\Gamma (E)$.

Similarly, we will define the product of two $k$--forms $\alpha$, $\beta$
on $\Omega$ with values on $E$ by the formula,
$$ \langle \alpha, \beta \rangle = \int_\Omega (\alpha_{i_1\ldots
i_k}(x), \beta^{i_1\ldots i_k}(x) ) \vol_\eta (x) ,$$
where we have used the metric $g$ to raise the subindexes on the
components of $\beta$.  We define the Hodge operator $\star$ as a map
from $\Gamma (E) \otimes \Lambda^k (\Omega)$ to $\Gamma (E) \otimes
\Lambda^{n-k} (\Omega)$, defined by
$$\star \alpha (x) \wedge \beta (x)  = (\alpha_{i_1\ldots i_k}(x),
\beta^{i_1\ldots i_k}(x) ) \eta_g (x) ,$$
and thus,
$$ \langle \alpha, \beta \rangle = \int_\Omega \star \alpha \wedge \beta
.$$ We will consider the completion of $\Gamma (E) \otimes \Lambda^1
(\Omega )$ with respect to the norm $||\alpha ||_2^2 =\langle
\alpha, \alpha \rangle$ and then define the adjoint to the operator $d_A$
with respect to this Hilbert space structure, i.e.,
$$ \langle d_A^\dagger \alpha, \psi \rangle = \langle \alpha, d_A \psi
\rangle ,$$ for all $\psi \in \Gamma (E)$ (notice that $d_A^\dagger$ is
well defined because $\Gamma (E)$ is dense in $L^2(\Omega,E)$).

The quantum Hamiltonian for a particle moving in the presence of the
Yang--Mills potential $A$ is formally given by the second order
differential operator,
\begin{equation}\label{qham} \H  = -\frac{1}{2} \Delta_A + V,
\end{equation}
where 
$$\Delta_A =d_A^\dagger d_A + d_A d_A^\dagger ,$$ 
is the covariant Laplace-Beltrami operator and $V$ is a smooth function on
$\Omega$.

Clearly, the operator $\Delta_A$ is closable and symmetric on
$\Gamma_0^\infty (E)$.  Its closure is defined on the domain $H_0^2 (\Omega, E)$, 
i.e., the completion of $\Gamma_0^2({\buildrel \circ \over E})$ with
respect to the Sobolev norm $||  \cdot ||_{2,2}$ defined in Eq.
(\ref{sobov_k}) with $k=2$. In fact, such space is the space of sections
of $E$ vanishing on
$\partial\Omega$, such that they possess first and second weak derivatives
which are in $L^2(\Omega,E)$.  We will denote this operator by $\Delta_0$,
$\dom \Delta_0 = H_0^2(\Omega,E)$.

The operator $\Delta_A$ has another extension, the largest space where
it can be defined, the closure of $\Gamma (E)$ with respect to
$||.||_{2,2}$.  This domain is the Sobolev space $H^2(\Omega,E)$ and it is easy
to check that the adjoint of $\Delta_0$ is precisely $\Delta_A$ with
domain $H^2(\Omega,E)$, thus $\dom \Delta_0^\dagger = H^2(\Omega,E)$,
$\Delta_0^\dagger = \Delta_A$.  On the other hand the smooth function
$V$ defines a essentially self-adjoint operator on $H^2(\Omega,E)$.  
If we denote by $\H_0$ the operator defined by $\H$ with domain $H_0^2(\Omega,E)$
it is possible to check that $\H_0^\dagger = \H$, i.e., the domain
of the adjoint operator of $\H_0$ is $H^2(\Omega,E)$.  
Hence, the operator $\H$ defined on $H^2 (E)$ is not self--adjoint in
general.  We will be interested in finding extensions $\H_s$ of $\H_0$
with domain $\dom \H_s$ such that 
$$
H_0^2 (\Omega, E) = \dom \H_0 \subset \dom \H_s = \dom \H_s^\dagger \subset
\dom \H = H^2(\Omega,E),$$
and $\H_s \psi = \H_0 \psi$, for any $\psi\in H_0^2 (\Omega, E)$.
To do that, we will integrate by parts as follows.
\begin{equation}\label{simet} \langle \psi_1, \H \psi_2\rangle =
\langle \H \psi_1, \psi_2\rangle + \mitad\Sigma\left(\psi_1,
\psi_2\right),
\end{equation}
for any smooth sections $\psi_1,\psi_2\in \Gamma (E)$, where by Stokes
theorem and the parallelism of $(.,.)$ with respect to the connection
defined by $A$, we get:
\begin{equation}\label{bound_term} \Sigma\left(\psi_1, \psi_2\right) =
\int_{\partial\Omega} \,
i^*[\left( \star d_A\psi_1 , \psi_2 \right) - 
\left( \psi_1 ,\star d_A \psi_2 \right) ].
\end{equation}
Notice that $\star d_A \psi$ is an
$(n-1)$--form on $\Omega$ with values in $E$, and then $\left( \star d_A
\psi_1, \psi_2\right)$ denotes the $(n-1)$--form on $\Omega$ obtained
by taking the product of the values of $\star d_A \psi_1$ and $\psi_2$
on the fibres of $E$. 

The boundary term $\Sigma (\psi_1,\psi_2) $ has a relevant physical
interpretation. It measures  the net total probability flux across the
boundary. If the operator $\H $ were self-adjoint this flux would have
to vanish: the incoming flux would have to be equal to  the outcoming
flux because the evolution operator $\exp it\H$ in such a case would be
unitary and preserve probability. 

We will characterize first self-adjoint extensions of $\H_0$ by using the
geometry of some Hilbert spaces defined on the boundary
$\partial \Omega$ of $\Omega$ and later on, we will compare with the
classical theory of von Neumann.


\subsection{Complex structures on the Hilbert space of boundary data}\label{section:maximal}

We will denote again by $E_\pO $ the restriction of the bundle $E$ to
$\partial \Omega$.   Let us denote by $\vol_{\partial\eta}$ the
Riemannian volume form defined by the restriction $\partial \eta$ of the
Riemannian metric $\eta$ to the boundary $\partial \Omega$.  Then we will
denote by $L^2(\Omega,E_\pO)$ the Hilbert space of square integrable
sections of the bundle $E_\pO $ and by $\langle \cdot, \cdot \rangle_{\partial
\Omega}$ its Hilbert product structure given by,
\begin{equation}\label{boundD}
\langle \psi, \varphi \rangle_{\partial\Omega} = \int_{\partial \Omega} (\psi(x), \varphi(x)) \mathrm{vol}_{\partial\Omega} \, .
\end{equation}

To any section $\Psi \in \Gamma (E)$ we can associate two sections
$\varphi, \dot{\varphi} \in \Gamma (E_\pO )$ as follows,  
\begin{eqnarray}\label{nor}
i^*\Psi &=& \varphi \nonumber \\
i^*(\star d_A\Psi) &=& \dot\varphi \,
\vol_{\partial \eta} ,\end{eqnarray}
in other words, the function $\varphi$
is the restriction to $\partial \Omega$ of $\Psi$, and
$\dot{\varphi}$ is the normal derivative of $\Psi$ along the boundary. 

Thus the restriction map $b\colon \Gamma (E) \to \Gamma (E_\pO )
\times \Gamma (E_\pO )$, assigning to each section its Dirichlet
boundary data, $\Psi \mapsto b(\Psi) = (\varphi, \dot\varphi)$, extends
to a continuos linear map,
$$
b \colon H^2(\Omega,E)\to H^{3/2} (\partial \Omega, E_\pO )Ê\oplus H^{1/2} (\partial \Omega, E_\pO ) \, ,
$$ 
from the Hilbert space $H^2(\Omega,E)$ to the direct sum Hilbert space of
boundary data $H^{3/2}(\partial \Omega,E_\pO)$ and
$H^{1/2}(\partial \Omega,E_\pO)$ equipped with the canonical direct sum product,
\begin{equation}\label{hilb}
\langle (\varphi_1,\dot\varphi_1), (\varphi_2,\dot\varphi_2) \rangle =
\langle \varphi_1, \varphi_2 \rangle_{\partial \Omega}  + \langle
\dot{\varphi}_1, \dot{\varphi}_2 \rangle_{\partial \Omega} .
\end{equation} 
We will call the
Hilbert space $H^{3/2}(\Omega,E_\pO) \oplus H^{1/2}(\Omega,E_\pO)$ the Hilbert space
of boundary data of the Laplacian operator and we will denote it by $\sp
H_L$.

To avoid the difficulties arising from the different spaces contributing
as factors to $\sp H_L$ we will restrict the second factor to
$H^{3/2}(\partial\Omega,E_\pO)\subset H^{1/2} (\partial \Omega, E_\pO )$, or we imbed both of them into
$L^2(\Omega,E_\pO)$.   As it happens with the boundary data space $\sp
H_D$ for the Dirac operator, the boundary Hilbert space
$\sp H_L$ has an interesting extra geometrical structure, a canonical
compatible complex structure $J_L$.   Apart from the natural complex
structure inherited from $L^2(\Omega,E_\pO)$ there is another complex
structure defined by
\begin{equation}\label{J_B}
J_L (\varphi, \dot\varphi) = (\dot\varphi, - \varphi ) \, ,
\end{equation}
which clearly is unitary and verifies $J^\dagger = - J$.
Then the composition with the Hilbert space product defines a
skew-pseudo-hermitian product $\omega_L$ on $\sp H_L$, i.e.,
\begin{equation}\label{sigma_B}
\omega_L ((\varphi_1,\dot\varphi_1), (\varphi_2,\dot\varphi_2)) = \langle 
J_L(\varphi_1,\dot\varphi_1), (\varphi_2,\dot\varphi_2)\rangle = 
\langle \dot\varphi_1,\varphi_2\rangle - \langle
\varphi_1,\dot\varphi_2 \rangle .
\end{equation}
For practical purposes it is sometimes convenient to redefine $\omega_L$
as $\sigma_L = -\frac{i}{2}\omega_L$, then $\sigma_L$ defines a
pseudo-Hermitean structure on $\sp H_L$. 
The skew-pseudo-Hermitean structure $\omega_L$ defined on the boundary data
has been sometimes called the Lagrange form because of Lagrange's
identity in the one--dimensional case \cite{Na68}.

\bigskip

Then, the boundary term for the Laplace operator obtained in eq.
(\ref{bound_term}) can be simply written as
$$ \Sigma (\psi_1, \psi_2) = \omega_L (b(\psi_1), b(\psi_2)) ,$$
showing explicitly its geometrical nature and allowing a direct
characterization of all self-adjoints extensions of $\H_0$.

Let $\H_s$ be a self--adjoint extension of $\H_0$.  It is obvious that
because of Eq. (\ref{simet}), $b(\dom (\H_s))$ is a subspace of $\sp H_L$
such that $\omega_L\mid_{b(\dom (\H_s))} = 0$, i.e., $b(\dom (\H_s))$ is
an isotropic subspace with respect to the skew-pseudo-hermitian structure
$\omega_L$.  

\begin{theorem}\label{maximal}
There is a one-to-one correspondence between self-adjoint extensions of
the operator $\H_0$ and maximal $\omega_L$-isotropic
subspaces of $\sp H_L$.   Moreover, such self-adjoint extensions are in
one-to-one correspondence with self-adjoint subspaces of $\sp H_L$ and
the eigenspaces of $J_L$ provide a polarization of the boundary Hilbert
space $\sp H_L = \sp H_+\oplus \sp H_-$ such that the self-adjoint
extensions of $\H_0$ are in one-to-one correspondence with the unitary
operators $\mathcal{U} (\sp H_+, \sp H_-)$. 
\end{theorem}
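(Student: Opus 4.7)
The proof strategy closely mirrors that of Theorem \ref{unit_D} for the Dirac case, with $\omega_L$ and $J_L$ playing the roles of $\omega_D$ and $J_D$. The plan is to use the key identity $\Sigma(\psi_1,\psi_2) = \omega_L(b(\psi_1),b(\psi_2))$ derived above to transfer all questions about symmetric/self-adjoint extensions of $\H_0$ on $H^2(\Omega,E)$ to questions about isotropy with respect to $\omega_L$ on the boundary space $\sp H_L$. First I would show: given a closed $\omega_L$-isotropic subspace $W\subset \sp H_L$, the operator $\H_W := \H\big|_{b^{-1}(W)}$ is symmetric, and if in addition $W$ is maximal then $\H_W$ is self-adjoint. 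The symmetric part is immediate from the identity. For self-adjointness, suppose $\xi \in \dom(\H_W^\dagger)\setminus b^{-1}(W)$; then the same integration by parts forces $\omega_L(b(\xi),\phi)=0$ for all $\phi\in W$, so $W\oplus\langle b(\xi)\rangle$ would be an $\omega_L$-isotropic subspace strictly larger than $W$, contradicting maximality. Conversely, if $\H_s$ is a self-adjoint extension of $\H_0$, then $b(\dom(\H_s))$ is isotropic by symmetry, and cannot be enlarged without destroying $\dom(\H_s)=\dom(\H_s^\dagger)$, hence is maximal.

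Next I would carry out the polarization step. Because $J_L$ is unitary and satisfies $J_L^2=-I$, it decomposes $\sp H_L = \sp H_+\oplus \sp H_-$ into its $\mp i$ eigenspaces, which are mutually orthogonal by the usual computation
\begin{equation}
\langle \phi_+,\phi_-\rangle = \langle J_L\phi_+,J_L\phi_-\rangle = \langle i\phi_+,-i\phi_-\rangle = -\langle \phi_+,\phi_-\rangle.
\end{equation}
Given a maximal $\omega_L$-isotropic $W$, any $\phi\in W\cap\sp H_\pm$ would satisfy $0 = \omega_L(\phi,\phi) = \langle J_L\phi,\phi\rangle = \mp i\|\phi\|^2$, forcing $\phi=0$; hence $W$ is transverse to both $\sp H_\pm$ and is the graph of a bounded linear operator $U:\sp H_+\to \sp H_-$. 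Expanding $\omega_L(\phi_+ + U\phi_+,\psi_+ + U\psi_+)=0$ yields $\langle U\phi_+,U\psi_+\rangle = \langle \phi_+,\psi_+\rangle$, so $U$ is an isometry; maximality of $W$ then forces $U$ to be onto, i.e. unitary, and conversely every $U\in \mathcal{U}(\sp H_+,\sp H_-)$ produces a maximal isotropic graph. The identification with self-adjoint subspaces of $\sp H_L$ follows by applying the same Cayley-type transform used in Section \ref{Cayley_D}, which converts graphs of unitaries into graphs of self-adjoint operators and, correspondingly, $\omega_L$-maximal isotropic subspaces into self-adjoint subspaces.

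The main technical obstacle is analytic rather than algebraic: the boundary map $b$ takes values in the mismatched Sobolev factors $H^{3/2}(\partial\Omega,E_\pO)\oplus H^{1/2}(\partial\Omega,E_\pO)$, while $J_L$ swaps these factors and so does not preserve the natural topology of $\sp H_L$. One must therefore work in the common embedding into $L^2(\partial\Omega,E_\pO)\oplus L^2(\partial\Omega,E_\pO)$ suggested right after \eqref{hilb}, verify that closed $\omega_L$-isotropic subspaces there do pull back through $b$ to genuinely dense domains lying between $H_0^2(\Omega,E)$ and $H^2(\Omega,E)$, and check that the Lions trace theorem provides enough surjectivity of $b$ to identify each unitary on the boundary with an honest extension of $\H_0$. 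Once this functional-analytic setup is in place, the three characterizations (maximal isotropic subspaces, self-adjoint subspaces, and $\mathcal{U}(\sp H_+,\sp H_-)$) become equivalent by the algebraic argument above, completing the proof.
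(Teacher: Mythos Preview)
Your proposal is correct and follows essentially the same approach as the paper: the first part (maximal $\omega_L$-isotropic subspaces $\leftrightarrow$ self-adjoint extensions) matches the paper's argument almost verbatim, and your polarization/unitary argument via the $J_L$-eigenspaces is exactly what the paper does in the continuation of the proof after introducing the Cayley transform $C$ (which is nothing but the change of basis diagonalizing $J_L$). Your explicit identification of the Sobolev-mismatch issue is also on target; the paper handles it the same way, by embedding both factors into $L^2(\partial\Omega,E_{\partial\Omega})$ and deferring the fine points to a remark.
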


\begin{proof} The first part of the theorem is proved
very much like Thm. \ref{unit_D} If
$\H^\prime$ is self--adjoint, we have seen that $b(\sp D^\prime)$ is isotropic.  If it
were not maximal, this would imply that there is an isotropic subspace
$\sp I$ that contains properly the domain of $\H^\prime$, $b(\sp D^\prime)
\subset \sp I$.  Then
$b^{-1}(\sp I)$ is the domain of an extension ($b$ is continuous) of
$\H_0$ and it verifies
$$ \sp D^\prime \subset b^{-1}(\sp I ) \subset b^{-1}(\sp I) \subset
(\sp D^\prime)^\dagger .$$ 
In fact, the isotropy of $\sp I$ implies that,
$$\langle \psi_1 , \H^\prime \psi_2 \rangle = \langle \H^\prime \psi_1,
\psi_2 \rangle , ~~~~~\forall \psi_1, \psi_2 \in b^{-1}(\sp I) ,$$
because $\Sigma_B \mid_{\sp I} = 0$.

Conversely, if $\sp I$ is a maximal isotropic subspace of $\sp H_B$ is
evident that the subspace $b^{-1}(\sp I)$ defines the domain of a
self--adjoint extension of $\H_0$.  In fact, if $b^{-1}(\sp I) \subset
(b^{-1}(\sp I))^\dagger$, then $b((b^{-1}(\sp I))^\dagger)$ is an
isotropic subspace of $\sp H_B$ containing $\sp I$.  Thus, because $\sp
I$ is maximal, they must coincide.  Then, $D_{\sp I} = b^{-1}(\sp I)$
defines a self--adjoint extension of $\H_0$. 

Before proving the other statements, we will introduce the analogue of
the Cayley transformation discussed in \S \ref{Cayley_D} for the Laplace
operator.
\end{proof}


\subsection{The Cayley transformation on the boundary Hilbert space for
Laplacian operators}\label{section:Cayley}

We will concentrate now in the boundary Hilbert space $\sp H_L$ with its
skew-pseudo-hermitian structure $\omega_L$.  It is obvious that the
subspaces 
$\sp L_+ = L^2 (E_\pO ) \times \{ {\bf 0} \}$ $= \set{(\varphi, {\bf
0}) \mid \varphi \in L^2 (E_\pO )}$ and $\sp L_- = \{ {\bf 0}
\} \times L^2 (E_\pO )$ $= \set{({\bf 0}, \dot\varphi) \mid
\dot\varphi \in L^2 (E_\pO )}$ are isotropic and they paired by
$\omega_L$.  In fact, 
$$\omega_L (\varphi_1,{\bf 0}; {\bf 0},\dot{\varphi}_2) = \mitad
\langle \varphi_1, \dot{\varphi}_2 \rangle_{\partial \Omega}  .$$
Thus the block structure of $\omega_L$ with respect to the isotropic
polarization $\sp H_L = \sp L_+ \oplus \sp L_-$ is
$$\omega_L = \matriz{c|c}{ 0 & \mitad \langle
.,.\rangle_{\partial \Omega} \\ \hline -\mitad \langle
.,.\rangle_{\partial \Omega} & 0} .$$

It will be convenient for further purposes to introduce another
pseudo-Hermitean product on $\sp H_L$ that corresponds to the
diagonalization of $\omega_L$, hence of $J_L$.  We shall define the Cayley
transform on the boundary as the map $C \colon \sp H_L \to\sp H_L $,
given by, 
\begin{equation}\label{cayley} C (\varphi, \dot\varphi ) =
(\varphi + i \dot\varphi, \varphi - i \dot\varphi), ~~~~~ \forall
(\varphi, \dot{\varphi}) \in \sp H_L.
\end{equation}
We will denote by $\varphi^\pm = \varphi \pm i \dot{\varphi}$ the
components of $C(\varphi, \dot{\varphi})$.  It is clear that $C$ is an
isomorphism of linear spaces and transforms the pseudo-Hermitean product
$\omega_L$ into the skew-pseudo-hermitian product
$\langle .,. \rangle_{\partial \Omega} \ominus
\langle .,. \rangle_{\partial \Omega}$, i.e., 
\begin{equation}\label{sigma}
\sigma_L (\varphi_1^+, \varphi_1^-; \varphi_2^+, \varphi_2^- ) = 
\langle \varphi_1^+,\varphi_2^+ \rangle_{\partial \Omega} - \langle
\varphi_1^-, \varphi_2^- \rangle_{\partial \Omega} , ~~~~~\forall
(\varphi_a^+, \varphi_a^-) \in \sp H_L, a = 1,2 .
\end{equation}
and 
$$ \omega_L (\varphi_1, \dot\varphi_1 ; \varphi_2, \dot\varphi_2 ) =
\sigma_L (\varphi_1^+, \varphi_1^-; \varphi_2^+, \varphi_2^- ) .$$ 

\medskip

Thus, the complex structure $J_L$ is transformed into diagonal form,
i.e., $CJ_L C^{-1} (\varphi^+, \varphi^-) = (-i\varphi^+, i\varphi^-)$,
and the eigenspaces of $J_L$ define a new polarization of $\sp H_L = \sp
H_+ \oplus \sp H_-$, where $\sp H_\pm = \set{\varphi \pm i\dot\varphi =
0}$.  Then the previous
discussion can be summarized saying that the Cayley transform $C$ is an
pseudounitary transformation among the pseudo-Hermitean spaces $(\sp H_L
= \sp L_+\oplus \sp L_-,
\omega_L)$ and $(\sp H_L = \sp H_+ \oplus \sp H_-, \sigma_L )$. 

It is obvious that because $C$ is an isometry between $\omega_L$ and
$\sigma_L$ it will carry $\omega_L$-isotropic subspaces into
$\sigma_L$-isotropic subspaces.   Then the Cayley transform $C$ induces a
homeomorphism between the corresponding spaces of isotropic subspaces
with respect to the natural topology in the set of subspaces of $\sp
H_L$.  We will discuss the topology of such spaces later on.

Now we will continue the proof of Thm. \ref{maximal} by showing that the set $\sp
M$ of all closed maximal $\sigma_L$-isotropic subspaces of $\sp H_L$ can
be identified with the group of unitary operators
$\sp U$ of $L^2(\partial\Omega,E_\pO)$.

\begin{proof}  (Cont.) It is a straightforward consequence of
the diagonal expressing of the pseudo-Hermitean product $\Sigma$. 
Any maximal linear isotropic subspace $N$ is transverse to the
subspaces $\sp H_\pm$.  Thus $N$ defines a
linear operator $U\colon \sp H_+ \to \sp H_-$ and 
\begin{equation}\label{form} N = \set{(\varphi^+, U\varphi^+ )\in \sp H
\mid \varphi^+ \in \sp H_+} = \mbox{graph} U .\end{equation}
It is also an immediate
consequence of the isotropy of $N$ that 
$$\langle \varphi_1^+, \varphi_2^+ \rangle_{\partial \Omega}
= \langle U\varphi_1^+, U\varphi_2^+ \rangle_{\partial \Omega} ,$$
and we conclude that $U$ is an isometry.  The maximality of $N$ implies
that $U$ is an isomorphism.  Then we have that $\sp M \cong \mathcal{U} (\sp H_+,
\sp H_-) \cong \mathcal{U} (L^2 (\pO, E_\pO ))$ and the identification is
continuous because of the continuous embedding of the space of bounded
operators on the space of closed subspaces.
\end{proof}

\begin{remark}  We should warn the reader that in the statement of
the previous results, we have replaced the Sobolev spaces 
$H^{3/2}(\pO, E_\pO)$ and $H^{1/2}(\pO, E_\pO)$ by $L^2(\pO, E_\pO)$.
Even if this is incorrect, there is a natural way of expression the results
in this way by using the appropriate theory of Hilbert scales (see \cite[Ch. 2.2]{Ca09}).
Now it is obvious that with the above characterization the set
of self-adjoint extensions of $\H $ inherits  the group structure of the
group of unitary operators  $U \left(L^2(\Omega,E_\pO)\right)$.
\end{remark}

\medskip

Notice that the domain of the self-adjoint extension of $\H_0$ defined by
$U$ is the linear subspace of sections $\psi$ in $H^2(\Omega,E)$ such that
their boundary values $\varphi$, $\dot\varphi$ satisfy the condition 
\begin{equation}\label{U_con} 
\varphi - i\dot\varphi = U(\varphi + i \varphi) .
\end{equation}
We shall denote such subspace by $\sp D_U = \set{\psi \in \sp D \mid
b(\psi) \in C^{-1}(\mbox{graph} U)}$ or equivalently
\begin{equation}\label{D_U}
\sp D_U = \set{\psi \in W^2(E) \mid
\varphi - i\dot\varphi = U(\varphi + i \varphi)}.
\end{equation}  

Now it is not hard to see that $\sp M_L$ is the
space of self-adjoint subspaces of $\sp H_L$. 
If the self-adjoint subspace $W$ is transverse to the subspace $\sp
L_\mp$ then,  there will exists a self-adjoint operator $A_\pm \colon
\sp L_\pm \to \sp L_\mp$, such $W$ will be the graph of $A_\pm$ and
we will denote it by $W_{A_\pm}$.   Notice that on the other hand, the
self-adjoint subspaces of $\sp H_L$ are precisely the closed maximal
$\omega_L$-isotropic subspaces of $\sp H_L$ with respect to the
skew-pesudo-hermitian structure
$\omega_L$.  \hfill$\Box$

\bigskip

This characterization is very useful because if allows to write the
boundary conditions that characterizes the self-adjoint extension of
$\H$ quite easily.  If $A_+$ is a self-adjoint operator in the boundary,
then the self-adjoint subspace defining the self-adjoint extension of
$\H$ is given by the space of $(\varphi, \dot\varphi)$ such that
\begin{equation}\label{self_bc}  \dot\varphi = A_+ \varphi   ,
\end{equation}
and similarly for $A_-$, the boundary condition will then be
\begin{equation}\label{self_bcm}  A_-\dot\varphi = \varphi  .
\end{equation}
The corresponding unitary operators are obtained by the Cayley
transform on the boundary, i.e., the unitary operator $U_+$ whose
graph is the isotropic subspace image by $C$ of the self-adjoint
subspace defined by $A_+$,
\begin{equation}\label{varphiA} \varphi - i A_+\varphi = U_+ (\varphi + i A_+
\varphi ) ,
\end{equation}
hence,
$$
U_+ = \frac{I - i A_+}{I + i A_+} \,,
$$
and similarly for $A_-$.
$$
U_- = \frac{I + i A_-}{I - i A_-} \, .
$$

Conversely if $U$ is a unitary
operator on the boundary, we can define its Cayley transform as the
self-adjoint subspace of $\sp H_B$ defined by $C^{-1} N_U$, i.e., the
inverse image by $C$ of the isotropic subspace defined by $U$.  It is
clear that not always such subspace will be of the form $W_{A_\pm}$ for
some self-adjoint operator.  If either $A_+$ or $A_-$ exist, it will be
given by the formula
\begin{equation}\label{AU} 
A_\pm = i\frac{I \pm U}{I \mp U} \, .
\end{equation}
It is obvious that $A_\pm$ will
exist if and only if $\pm1$ does not belong to the spectrum of $U$.
Notice that $A_-$ is the inverse Cayley transform (if it exists) of
$U^\dagger$.   

\bigskip

The previous considerations show that there is a distinguished set of
self-adjoint extensions of $\H$ for which the expression of the
boundary conditions defining their domain are not expressible in the
simple form given by Eq. (\ref{self_bc}) or (\ref{self_bcm}).  These
self-adjoint extensions correspond to the case that $\pm 1$ is an
eigenvalue of the corresponding unitary operator or equivalently that
the self-adjoint subspace defined by the unitary operator is not the
graph of a self-adjoint operator on the boundary.   We will call them
the Cayley surfaces of the space of self-adjoint extensions and they
can be described equivalently as follows:
\begin{equation}\label{cayley_sur} \sp C_\pm = \set{U\in U\left( L^2 (\partial
E)\right) \mid \pm 1 \in \sigma (U)} = \set{ W\in \sp M_L \mid \dim
W\cap \sp L_\mp >0} .
\end{equation}

Notice that the unitary operators $\pm I$ are in the Cayley surfaces
$\sp C_\pm$ respectively and because of Eq. (\ref{varphiA}) $I$ defines
Dirichlet boundary conditions:
\begin{equation}\label{dirich}  \varphi = 0 .
\end{equation}
Moreover the unitary operator $-I$ is not in the Cayley surface $\sp
C_+$ and corresponds to the self-adjoint operator $A = 0$ because of eq.
(\ref{AU}), thus using (\ref{self_bc}) it defines Neumann boundary
conditions
\begin{equation}\label{neuma}  \dot\varphi = 0 .
\end{equation}

These observations provide us a way to classify all self-adjoint
extensions of $\H$ in those which are not in the Cayley surface and
those which are inside.  We will proceed later on to perform this
analysis.  

\subsection{Squaring the space of boundary conditions of Dirac operators
and boundary conditions for Laplace operators}\label{squaring}

In the previous sections we have repeated step by step the theory of
self-adjoint extensions of the Hodge Laplacian for Dirac operators.  The results are
similar but subtly different due to the fundamental different
nature of both families of operators.   

In both theories a crucial role is played by a skew-pseudo-Hermitean structure on the Hilbert space of
boundary data.   In fact, the space of self-adjoint extensions is
identified with a Lagrangian submanifold of the infinite dimensional
Grassmannian naturally defined by such structure.   On the other hand,
we have already noticed that self-adjoint extensions arise in a twisted
way in both cases, i.e., the Cayley transform exchanges unitary and
self-adjoint subspaces in reverse order.   

On the other hand, Dirac operators $\Dsl$ lead naturally to elliptic second
order differential operators, the Dirac Laplacian $\Dsl^2$, which are closely
related to Hodge Laplacians by Bochner's identities that play a
fundamental role in understanding topological properties of Riemannian
manifolds.  Such process of squaring a Dirac operator, should lead to a
precise link between the theory of self-adjoint extensions of Laplace
operators and Dirac operators.  We will discuss this link in this
section finding out that squaring of Dirac operators has a natural
abstract counterpart in the boundary Hilbert spaces of the operators. 
This construction will be called squaring a Hilbert space with a
skew-pseudo-Hermitean structure.

\medskip

Let $\Dsl$ denote a Dirac operator on the Dirac bundle $S\to \Omega$ and
$\sp H_D$ the Hilbert space of boundary data for $\Dsl$ equipped with the
canonical compatible complex structure $J_D$.   We shall consider now
the Dirac Laplacian $\Dsl^2\colon \Gamma_0 (S) \to \Gamma_0 (S)$.  The
Dirac Laplacian is obviously a symmetric operator in the domain $H_0^2
(S)$, which is the closure of $\Gamma_0 (S)$ with respect to the Sobolev
norm $||.||_{2,2}$.   Notice that the operator $\Dsl^2$ with domain
$H_0^2(S)$ is the closure of $\Dsl^2$ defined on $\Gamma_0 (S)$.  We shall
study now the self-adjoint extensions of $\Dsl^2$.  Then we integrate by
parts the $L^2$-product $\langle \Dsl^2 \xi, \zeta \rangle$,
\begin{eqnarray}
\langle \Dsl^2 \xi, \zeta \rangle &=& \langle \Dsl \xi, \Dsl \zeta \rangle
 + \Sigma (\Dsl\xi, \zeta ) \nonumber  \\ 
 \label{int_part_Dsquare} &=& \langle  \xi, \Dsl^2 \zeta \rangle + \Sigma
(\xi, \Dsl\zeta )
 + \Sigma (\Dsl\xi, \zeta ) \, ,
 \end{eqnarray}
where we have used the computation in eqs. (\ref{int_part_D2}).  
But because of Eq. (\ref{bound_D}) we have that,
\begin{equation}\label{split_D} \Sigma (\Dsl\xi, \zeta) = \langle J b(\Dsl\xi), b(\zeta)
\rangle .
\end{equation}
The effect of the boundary map $b$ on $\Dsl\xi$ will be computed
as follows. We choose a collar neighborhood $U_\epsilon$ of the boundary $\partial \Omega$.  Notice that because $\partial \Omega$ is compact, 
there exists $\epsilon > 0$ such that the map $x \mapsto (s,y)$, with $y$ the point in $\partial \Omega$ closest to $x$
and $s = \mathrm{dist}(x,y)]$ defines a diffeomorphism between  $U_\epsilon$ and $(-\epsilon,0]\times \partial
\Omega$.    Using the decomposition of the tangent space defined by the previous diffeomorphism 
we can write $\Dsl$ on $U_\epsilon$ as 
\begin{equation}\label{DpartialOmega}
\Dsl = J (\partial_\nu + \Dsl_{\partial \Omega}) 
\end{equation}
where the vector field $\nu$ on $U_\epsilon$ is defined by $\partial / \partial s$.  Notice that when restricted to $\partial \Omega$, $\nu$ is just the normal vector to the boundary $\partial \Omega$.  Thus, the covariant derivative in the normal
direction is given by partial derivative with respect to this coordinate
and is represented by $\partial_\nu$.   Then, choosing a orthonormal
frame in $U$ with the first vector the extended vector field
$\tilde{\nu}$, the Dirac operator $\Dsl$ splits as in Eq. (\ref{split_D}). 
The symbol $J$ corresponds to Clifford multiplication by $\tilde{\nu}$
and restricted to $\partial \Omega$ gives the complex structure $J_D$. The
operator $\Dsl_{\partial \Omega}$ is the transversal Dirac operator that
restricted to $\partial\Omega$ is the Dirac operator of the Dirac bundle
$S_\pO $ discussed in \S \ref{section:Dirac}.  Thus, 
\be
b(\Dsl\xi) = b((J\partial_\nu +
\Dsl_{\partial\Omega})\xi) = Jb(\partial_\nu \xi) +
\Dsl_{\partial\Omega}b(\xi). \label{split}
\ee
Then,
\begin{equation}\label{first_int_D} \Sigma (\Dsl\xi, \zeta) = \langle J(Jb(\partial_\nu
\xi) + \Dsl_{\partial\Omega}b(\xi)), b(\zeta)\rangle = - \langle
b(\partial_\nu
\xi), b(\zeta) \rangle + \langle J \Dsl_{\partial\Omega} b(\xi), b(\zeta)
\rangle .
\end{equation}
On the other hand, we can compute similarly the last term in the r.h.s.
of Eq. (\ref{int_part_Dsquare}), and we obtain,
\begin{equation}\label{second_int_D} \Sigma (\xi, \Dsl\zeta) = \langle Jb(\xi), b(\Dsl\zeta)
\rangle = \langle Jb(\xi), Jb(\partial_\nu\xi) +
\Dsl_{\partial\Omega}b(\zeta)
\rangle = 
\end{equation}
$$ = \langle b(\xi), b(\partial_\nu\xi) \rangle + 
\langle Jb(\xi),  \Dsl_{\partial\Omega}b(\zeta)
\rangle .$$ 
Collecting the last terms in eqs. (\ref{first_int_D}),
(\ref{second_int_D}), we obtain,
$$ \langle J \Dsl_{\partial\Omega} b(\xi), b(\zeta)
\rangle + \langle Jb(\xi),  \Dsl_{\partial\Omega}b(\zeta)
\rangle = $$
$$\langle J\Dsl_{\partial\Omega} b(\xi), b(\zeta)
\rangle + \langle \Dsl_{\partial\Omega} Jb(\xi),  b(\zeta)
\rangle$$ 
$$ =  \langle J\Dsl_{\partial\Omega} b(\xi), b(\zeta)
\rangle - \langle J\Dsl_{\partial\Omega} b(\xi),  b(\zeta)
\rangle = 0 ,$$
where we have used that $\Dsl_{\partial\Omega}$ is essentially self-adjoint
on the boundaryless manifold $\partial\Omega$ and the $J$ anticommutes
with $\Dsl_{\partial\Omega}$.  Denoting $b(\xi) = \phi$, $b(\partial_\nu
\xi) =\dot{\phi}$, $b(\zeta) = \psi$, $b(\partial_\nu \zeta) =
\dot{\psi}$, we finally obtain,
$$\Sigma (\Dsl\xi,\zeta) + \Sigma (\xi, \Dsl\zeta) = \langle \phi, \dot{\psi}
\rangle - \langle \dot{\phi}, \psi \rangle ,$$
which is exactly the same boundary term we obtained in our analysis of
self-adjoint extensions of the Laplace operator.  We can abstractly
describe the analysis of self-adjoint extensions of the Dirac Laplacian
by considering the Hilbert space $\sp H_B = \sp H_D\oplus \sp H_D$ with
the direct sum inner product $\langle .,. \rangle_B = \langle .,.
\rangle \oplus \langle .,. \rangle $ and the compatible complex
structure $J_B$ on $\sp H_B$ defined by as in Eq. (\ref{J_B}).  Notice
that $J_B$ is not the direct sum of $J_D \oplus J_D$ which becomes
another compatible complex structure on $\sp H_B$.  This process is what
will be called squaring the boundary data and is summarized in the
following table.

\bigskip

\begin{center}
\begin{tabular}{||c|c||} \hline\hline  $\Dsl$ & $\Dsl^2$ \\ \hline
$\sp H_D = \sp H_+ \oplus \sp H_-$ &  $\sp H_B = \sp H_D \oplus \sp H_D$
\\ $\langle .,. \rangle$ & $\langle .,. \rangle_B$ \\ $J_D$ & $J_B$ \\
$\omega_D$ & $B$ \\ \hline $U(\sp H_+, \sp H_-)$ & $\sp M_B$ \\ $\sp
M_D$ & $U(\sp H_D, \sp H_D)$ \\ \hline\hline 
\end{tabular}
\end{center}

The previous discussion implies immediately that the self-adjoint
extensions of the Dirac Laplacian $\Dsl^2$ are described in exactly the
same terms as the self-adjoint extensions of the Hodge Laplacian
described earlier.   We will denote as before by $\sp M_B$ the
self-adjoint Grassmannian of $\Dsl^2$ which is diffeomorphic to the
self-adjoint Grassmannian of $\nabla$.  Notice that because of Bochner's
identity, this was expected in advance. In fact, the Dirac Laplacian
$\Dsl^2$ and the Hodge Laplacian $\nabla$ differ only on a curvature
term that, because is a zero order operator, is self-adjoint, then
both theories should agree.  We shall discuss other aspects concerned
with these facts later on. 

\medskip

It is only natural what is the relation between self-adjoint extensions
of $\Dsl^2$ and self-adjoint extensions of $D$.   As we know from the
previous discussions such extensions are always self-adjoint spaces of
the boundary Hilbert space with respect to a skew-pseudo-hermitian
structure induced by a compatible complex structure.  As we have shown
in this section these structures are related by the ``square
construction'', thereby there must be a definite relation between them.  

First we shall study the self-adjoint extensions of $\Dsl^2$ induced from
those of $\Dsl$.  Let $W$ be a self-adjoint subspace of $\sp H_D = \sp H_+
\oplus \sp H_-$.   If $\Dsl_W$ is self-adjoint, it will be $\Dsl_W^2$.  Notice
that the boundary data for $\Dsl^2$ are decomposed as $\phi = \phi_+ +
\phi_-$, and $\dot\phi = \dot\phi_+ + \dot\phi_-$ where all the factors
are mutually orthogonal because of the decomposition
\begin{equation}\label{decom_LD} \sp H_L = \sp H_D \oplus \sp H_D = (\sp H_+ \oplus
\sp H_-) \oplus  (\sp H_+ \oplus \sp H_-) .
\end{equation}

Because $\Dsl_W$ is self-adjoint we have
$$ \langle \Dsl_W^2 \xi , \zeta \rangle = \langle \Dsl_W\xi, \Dsl_W\zeta \rangle
+ \omega_D (\dot\phi, \psi ) ,$$
and the last term in the r.h.s. must vanish for all $(\dot\phi, \psi)
\in W_{2,3}$.  Notice that we are identifying $W\subset \sp H_D$ with the
diagonal subspace $W_{2,3}$ obtained taking the second and third terms in
the decomposition of $\sp H_L$ given in Eq. (\ref{decom_LD}).  In the same
way, repeating the integration by parts, we will obtain that the term
$\omega_D (\phi, \dot\psi ) = 0$ for all $(\phi, \dot\psi )\in W_{1,4}$,
where $W_{1,4} = \set{ (\phi,0,0,\dot{\psi}) \in \sp H_L \mid
(\phi,\dot\psi)\in W }$.  Thus, to the self-adjoint extension $\Dsl_W$ we
associate the self-adjoint extension, still denoted by $\Dsl_W^2$, with
self-adjoint subspace $\tilde{W} = W_{2,3} \oplus W_{1,4}$.  

Conversely, it is clear that $\Dsl^2$ has far more self-adjoint extensions
than those defined by self-adjoint subspaces of $\sp H_L$ of the form
described above $\tilde{W}$.   However, it is easy to show that if
$\tilde{W}$ defines a self-adjoint extension of $\Dsl^2$, i.e., $\tilde{W}$
contains the boundary conditions $(\phi, \dot\phi)$ that makes $\Dsl^2$
self-adjoint, then, the subspace of $\sp H_D$ defined by the first
component, is going to define an extension of $\Dsl$.  We shall denote by
$\pi_1\colon \sp H_L \to \sp H_D$ the projection on the first factor. 
Then, if $W$ denotes the projected subspace $\pi_1 (\tilde{W})$, then,
we can ask when $W$ will be self-adjoint. 


The characterization of  selfadjoint extensions of $\Dsl^2$ which are induced
by those of $\Dsl$ can be achieved in simple terms.
Let us consider the Clifford algebra element $e_{n+1}=e_1\cdot e_2\cdots\cdot e_n$ which in
even dimensional manifolds is always non-trivial. In  odd dimensional manifolds
one can consider the non-trivial representation of the  Clifford algebra induced from the
$n+1$-dimensional Clifford algebra and then $e_{n+1}$ becomes also non-trivial. 
In both cases due to the special properties of 
Clifford algebra we have that $e_{n+1}\cdot e_i+e_i\cdot e_{n+1}=0$ for any $i=1,2,\cdots, n$
and $e_{n+1}\cdot {\nu}+{\nu}\cdot e_{n+1}=0$.
The selfadjoint extensions of $\Dsl$ can then be characterized by  unitary operators $U$ of
$H^1(\Omega,S_\pO)$ which commute with $\nu$. The corresponding domains are given by
\be
{\cal{D}}_{\dsl_U}=\{ \psi\in L^2(\Omega,S);  P_-\psi|_{_{S_\pO }}= U e_{n+1}P_+\psi|_{_{S_\pO }}\},
\ee
where $P_\pm$ are the projectors
$P_\pm=\frac12(\I\pm \nu)$. The domain of the corresponding Dirac Laplacian $\Dsl_U^2$
is given by the subdomain of ${\cal{D}}_{\dsl_U}$ of spinors $\psi\in {\cal{D}}_{\dsl_U}$ such that $ \Dsl\psi\in {\cal{D}}_{\dsl_U}$.
This requirement imposes further constraints on the normal derivative,
\beq
{\cal{D}}_{\dsl^2_U}&=&\{ \psi\in H^2(S);  P_-\psi|_{_{S_\pO }}= U e_{n+1}P_+\psi|_{_{S_\pO }},\nonumber\\
&& P_-(\I+U e_{n+1})\partial_\nu \psi|_{_{S_\pO }}
= P_-(\I-U e_{n+1}) \Dsl_{_{S_\pO }} \psi|_{_{S_\pO }} \}
\eeq
as required by the boundary conditions of the second order differential operator $\Dsl^2$ 
which give rise to  selfadjoint extensions. 
Notice, that these boundary conditions are not the most general ones that make $\Dsl^2$ selfadjoint, but are
the only ones which guarantee that $\Dsl^2$ is the square of the selfadjoint Dirac operator $\Dsl$.
To illustrate this let us consider a couple of examples. Let  $U$  be of the form
\begin{equation}\label{ubc}
	U=e^{2 i\arctan e^{\alpha} \nu},
\end{equation}
which because of the identities
\begin{eqnarray*}\label{ibc}
	\!\!\!\!\frac{I-U e_{n+1}}{I+U e_{n+1}}\!\!&=&\!\!\frac{I+U }{I-U}(1-e_{n+1})+\frac{I-U }{I+U}(1+e_{n+1})\\
	\!\!&=&\!\!i\cot(\arctan e^{\alpha}) (1-e_{n+1})-i \tan(\arctan e^{\alpha}) (1+e_{n+1})\\
	\!\! &=&\!\!i e^{-\alpha} (1-e_{n+1})-i e^{\alpha} (1+e_{n+1})= -i e^{\alpha e_{n+1}}e_{n+1},
\end{eqnarray*}
corresponds  to the chiral bag boundary conditions:
\begin{equation}\label{cbc}
	\frac12\left(1-i e_{n+1}e^{-\alpha e_{n+1}}\nu\right)  \psi=0.
\end{equation}
In the massive case these extensions can give rise to the existence of edge states  \cite{As13,As15}.
Another example corresponds to the Atiyah-Patodi-Singer boundary conditions  \cite{At75} which are given by
$P_+ \psi|_{S_\pO }=0,$
where $P_+$ is the orthogonal projector on the subspace of $H^1(\Omega,S_\pO)$ corresponding to the
positive spectrum of the selfadjoint operator $\Dsl_{S_\pO }$ (see later on Sect.  \ref{section:elliptic}). In that case the spinors of the domain of 
$\Dsl^2$ must satisfy the extra requirement that also involves the normal derivatiative
\beq
{\cal{D}}_{\dsl^2}&=&\{ \psi\in H^2(S);  P_+ \psi|_{S_\pO }=0,  P_+ \partial_\nu\psi=-P_+ \Dsl_{S_\pO }\}.
\eeq


\section{Von Neumann's theorem and boundary conditions revisited}\label{section:Neumann}


\subsection{Von Neumann's theorem vs. unitary operators at the boundary}

A theory of self-adjoint extensions of symmetric differential operators based on the
geometrical structures induced by them in the corresponding spaces of
boundary data  has been sketched along the previous sections.   
However a general solution to this problem was set up
by von Neumann \cite{Ne29} in the abstract realm of symmetric operators
with dense domains in Hilbert spaces.    

We will show the exact nature
of the link between both approaches, the one discussed in this work based on geometrical boundary
data and von Neumann's based on global information on the bulk.

We have already mentioned the fact that the theory of extensions for the Dirac Laplacian and the covariant Laplacian are the same because Bochner's identity (\cite{La89}, Thm. II.8.2) implies that the difference between both operators is a zeroth order operator.  Hence we will keep the discussion
in this section to the covariant Laplacian without loss of generality. 

We have characterized self-adjoint extensions of the covariant Laplacian
as unitary operators $\sp U\left(\sp H_+, \sp H_-\right)$ whereas the
standard characterization of self--adjoint extensions provided by von
Neumann's theorem \cite{Ne29} is by means of unitary operators $K\colon \sp
N_i \to \sp N_{-i}$ between the  deficiency subspaces $\sp N_{\pm i} =
\{\psi\in L^2(\Omega) \mid \H_0^\dagger \psi= \pm i\psi\}$.  As we have
stressed before, the advantage of the former characterization 
is that it is directly related to conditions that the wave functions must
satisfy on the boundary
$\partial\Omega$.

Before we shall discuss the exact relation between von Neumann's theory
and Thm. \ref{maximal} we should mention that there has been several
refinements of von Neumann's theory extending it to more general
situations.   For instance, if ${}^*$ denotes a conjugation on a Hilbert space, i.e., an
antilinear involution such that $\langle \psi_1^*, \psi_2^* \rangle =
\langle \psi_2, \psi_1\rangle$, a linear operator $A$ with dense domain
is said to be $*$--symmetric if
${}^* A^* \subset A^\dagger$.  If ${}^* A^* = A^\dagger$, then $A$ is
called $*$--self-adjoint.  If $A$ is $*$--real, i.e., ${}^* A^* =
A$, then $A$ $*$--self-adjoint implies that $A$ is self--adjoint in the standard sense.  Then it was proved in
\cite{Ga62} that any $*$--symmetric operator with dense domain has a
$*$--self-adjoint extension.  

There is also a generalization of von Neumann's theory of
extensions of symmetric operators with dense domain to formally normal
operators with non-dense domains \cite{Co73}.  Both generalizations can be
discussed from the viewpoint of the geometry
of boundary conditions.  We will not insist on this and we
will restrict for clarity to the simpler case
of self--adjoint extensions of symmetric operators with dense domains.

We denote as usual by $- \Delta_A^\dagger$ the adjoint, with domain
$\sp D_\mathrm{max} = H^2( E)$, of the Laplacian $- \Delta_A$ with domain
$\sp D_\mathrm{min} = H^2_0 (E)$.  We will also use the notation $\mathcal{D}_0$ for the domain $\mathcal{D}_\mathrm{min}$.  Given any $\lambda\in \C$, $\mathrm{Im}\lambda > 0$,  we define the deficiency
spaces $\sp N_{\lambda}$, $\sp N_{\bar\lambda}$,
by,
\begin{eqnarray}\label{deficiency} \sp N_{\lambda}  &=& \ran
( -\Delta_A + \lambda I)^\perp = \ker (-\Delta_A^\dagger + \bar\lambda I), \\
\sp N_{\bar\lambda}  &=& \ran ( -\Delta_A + \bar\lambda I)^\perp = \ker
(-\Delta_A^\dagger + \lambda I) \, ,
\end{eqnarray}
that are closed spaces of the Hilbert space $L^2(\Omega,E)$.

It is then true that for any nonreal $\lambda$,
\begin{equation}\label{DDNN}
\sp D_\mathrm{max} = \sp D_\mathrm{min} + \sp N_\lambda + \sp N_{\bar\lambda} \, ,
\end{equation}
and the sum is direct as vector spaces. 
Von Neumann's theorem states that:

\begin{theorem}\label{vonNeuman}\cite{Ne29} There exists a one-to-one correspondence between
self-adjoint extensions of $-\Delta_A$ and unitary operators $K$ from
$\sp N_\lambda$ to $\sp N_{\bar\lambda}$, for any $\lambda\in \mathbb{C}$, $\mathrm{Im}\lambda > 0$.
\end{theorem}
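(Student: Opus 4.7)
The plan is to reduce the classification of self-adjoint extensions of $\H_0=-\Delta_0$ to the study of a concrete sesquilinear defect form on $\sp N_\lambda + \sp N_{\bar\lambda}$, and then to recognise the maximal isotropic subspaces of that form as graphs of unitaries $K\colon \sp N_\lambda\to \sp N_{\bar\lambda}$. Using the decomposition (\ref{DDNN}), every intermediate extension $\H_0\subset \H_s\subset \H_0^\dagger$ can be written as $\dom \H_s = \sp D_\mathrm{min} + S$ for a unique vector subspace $S\subset \sp N_\lambda+\sp N_{\bar\lambda}$. Because $\H_0^\dagger$ is the adjoint of $\H_0$, the defect form
\[
Q(\psi,\phi):=\langle\H_0^\dagger\psi,\phi\rangle-\langle\psi,\H_0^\dagger\phi\rangle
\]
vanishes automatically whenever either argument lies in $\sp D_\mathrm{min}$. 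Hence $\H_s$ is symmetric precisely when $Q$ vanishes on $S\times S$, and $\H_s=\H_s^\dagger$ is equivalent to $S$ being \emph{maximal} among subspaces on which $Q$ vanishes.

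The key computation is to evaluate $Q$ explicitly. Using that $\H_0^\dagger\xi_+=-\bar\lambda\,\xi_+$ for $\xi_+\in \sp N_\lambda$ and $\H_0^\dagger\xi_-=-\lambda\,\xi_-$ for $\xi_-\in \sp N_{\bar\lambda}$, a short calculation on $\xi=\xi_++\xi_-$, $\eta=\eta_++\eta_-$ yields
\[
Q(\xi,\eta) \;=\; -2i\,\mathrm{Im}(\lambda)\bigl(\langle\xi_+,\eta_+\rangle-\langle\xi_-,\eta_-\rangle\bigr),
\]
the cross terms $\langle\xi_+,\eta_-\rangle$ and $\langle\xi_-,\eta_+\rangle$ cancelling identically. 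Since $\mathrm{Im}(\lambda)\neq 0$, the form $Q$ has indefinite Krein signature $(+,-)$ with respect to the splitting, so $Q$-isotropy of $S$ becomes the isometric identity $\langle\xi_+,\eta_+\rangle=\langle\xi_-,\eta_-\rangle$ for all $\xi,\eta\in S$.

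It follows that any $Q$-isotropic $S$ is the graph of a partial isometry $K$ from a subspace of $\sp N_\lambda$ into $\sp N_{\bar\lambda}$: strict definiteness of $Q$ on each factor forbids $S$ from containing a non-zero element lying entirely in $\sp N_\lambda$ or entirely in $\sp N_{\bar\lambda}$, so the projections $\pi_+\colon S\to \sp N_\lambda$ and $\pi_-\colon S\to \sp N_{\bar\lambda}$ are injective and $K:=\pi_-\circ \pi_+^{-1}$ is isometric by the above identity; conversely the graph of any such $K$ is automatically $Q$-isotropic. Maximal $Q$-isotropic subspaces then correspond precisely to \emph{unitary} $K\colon\sp N_\lambda\to\sp N_{\bar\lambda}$, for if $\dom K$ or $\ran K$ were a proper closed subspace one could strictly enlarge the graph by pairing a unit vector in the orthogonal complement on one side with a unit vector in the complement on the other, obtaining a larger isotropic subspace.

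The main obstacle is this last maximality step together with the verification that the assignment $K\mapsto \sp D_\mathrm{min} + \{\,\xi_++K\xi_+ : \xi_+\in \sp N_\lambda\,\}$ is well-defined at the level of \emph{closed} operators. Here one needs closedness of the graph of a bounded isometry, the topological (not merely algebraic) direct-sum character of (\ref{DDNN}), and the strictness $\mathrm{Im}(\lambda)\neq 0$, which is what makes the two definite pieces of $Q$ non-degenerate. Once this is in place, independence from the choice of $\lambda$ in the open upper half-plane follows from the generalised Cayley transform $(\H_0^\dagger-\bar\lambda')(\H_0^\dagger-\bar\lambda)^{-1}$ mapping $\sp N_\lambda$ isometrically onto $\sp N_{\lambda'}$, so that the parametrising group $\sp U(\sp N_\lambda,\sp N_{\bar\lambda})$ is canonically independent of $\lambda$.
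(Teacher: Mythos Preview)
Your overall strategy is sound and is one of the two standard routes to von Neumann's theorem. The paper itself does not prove Theorem~\ref{vonNeuman}---it is quoted from \cite{Ne29}---and only sketches the alternative Cayley-transform approach, in which one identifies extensions of $-\Delta_0$ with isometric extensions of the partial isometry $(-\Delta_A+\lambda I)(-\Delta_A+\bar\lambda I)^{-1}$. Your defect-form computation is correct and the cross-term cancellation is exactly as you state.

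There is, however, a genuine gap in the maximality step. You claim that $\H_s=\H_s^\dagger$ is equivalent to $S$ being \emph{maximal isotropic} for $Q$, but the correct characterisation is $S=S^{\perp_Q}$ (the $Q$-orthogonal complement of $S$ in $\sp N_\lambda+\sp N_{\bar\lambda}$), and these two notions do not coincide here. Concretely: if $K\colon\sp N_\lambda\to\sp N_{\bar\lambda}$ is an isometry with full domain but \emph{proper} range (possible since both deficiency spaces are infinite-dimensional), its graph $S$ \emph{is} maximal isotropic---your own enlargement argument fails because there is no room left on the $\sp N_\lambda$ side to pair with the leftover vectors in $\sp N_{\bar\lambda}$---yet one computes $S^{\perp_Q}=\{(K^\dagger\eta_-,\eta_-):\eta_-\in\sp N_{\bar\lambda}\}$, which strictly contains $S$. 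The corresponding extension is then maximally symmetric but not self-adjoint. Your sentence ``if $\dom K$ or $\ran K$ were a proper closed subspace one could strictly enlarge the graph by pairing a unit vector in the orthogonal complement on one side with a unit vector in the complement on the other'' tacitly assumes that \emph{both} complements are non-zero, which need not hold.

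The repair is easy: replace ``maximal isotropic'' by the condition $S=S^{\perp_Q}$ throughout. For $S=\mathrm{graph}(K)$ with $K$ a closed isometry, the computation above gives $S^{\perp_Q}=\{(K^\dagger\eta_-,\eta_-)\}$, and $S=S^{\perp_Q}$ forces $K$ to be surjective, hence unitary; conversely any unitary $K$ gives $S=S^{\perp_Q}$. With this correction your argument is complete and gives a clean alternative to the Cayley-transform route the paper alludes to.
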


The domain of the self--adjoint extension corresponding to the operator
$K$ is $\sp D_0 + \ran (I+K)$ and is defined for a function of the form
$\Psi = \Psi_0 + (I+K) \xi_+$,
$\Psi_0 \in \sp D_0$, $\xi_+\in N_+$, by
$$ \Delta_A^K \psi = \Delta_A \Psi_0 + \bar{\lambda} \xi_+ +  \lambda K\xi_+
\, .$$ 
Notice that the theorem implies that all deficiency spaces $\mathcal{N}_\lambda$ with $\mathrm{Im}\lambda > 0$
are isomorphic.

Different presentations of this theorem, and of Eq. \eqref{DDNN}, can be found for instance in
\cite{Du63}, \cite{Ak63}, \cite{Na68} \cite{Yo65}, \cite{Re75} and \cite{We80} and, as it was already expressed in
the introduction, there exists an abundant literature on the subject.

Once that Eq. \eqref{DDNN} is established, the main idea of the proof is to show that there is a one-to-one
correspondence between extensions of the symmetric operator $-\Delta_A$ and
extensions of its Cayley transform $U_\Delta \colon \ran (-\Delta_A + \bar\lambda I)
\to \ran (-\Delta_A + \lambda I)$ defined by
$$ 
U = \frac{-\Delta_A + \lambda I}{-\Delta_A + \bar\lambda I} \, .
$$ 

To compare with our previous results it will be convenient to 
describe von Neumann extension theorem in the
setting of skew-Hermitean spaces.

We define the total Hilbert deficiency space $\sp N =Ê\sp N_\lambda
\oplus \sp N_{\bar\lambda}$.   Similarly to the results obtained in Sections \ref{section:maximal}, \ref{section:Cayley},
unitary operators from $K\colon \sp N_\lambda \to \sp N_{\bar\lambda}$ are in
one--to--one correspondence with maximal isotropic subspaces of $\sp
N$ with respect to the natural pseudo-Hermitean structure
$\omega_{\sp N}$ defined on $\sp N$ by:
\begin{equation}
\sigma_{\sp N} ((\Psi_1^+, \Psi_1^-),(\Psi_2^+,\Psi_2^-)) = \langle \Psi_1^+
,\Psi_2^+ \rangle - \langle \Psi_1^-,
\Psi_2^- \rangle \, ,  
\end{equation}
for all $\Psi_\alpha^+ \in \mathcal{N}_\lambda, \Psi_\alpha^-
\in \mathcal{N}_{\bar\lambda}$, $\alpha = 1,2$.
Now we can try to identify the deficiency space in the bulk
$\sp H_{VN}$ with the boundary space $\sp H_L$.

\medskip 

The boundary map $b$ restricts to $\sp N \subset H^2 (E)$, and
moreover $b$ restricts to the closed subspaces $\sp N_\lambda$, $\sp
N_{\bar\lambda}$.  We compose $b$ with the Cayley transform on the
boundary $C$ to obtain a continuous linear map $j\colon \sp N
\to \sp H_L$ defined as follows.  Let $j_\pm (\Psi^\pm) = \varphi \pm i
\dot{\varphi}$, where $(\varphi, \dot\varphi) = b (\Psi )$, where $j_\pm$ denote the
restriction of $j$ to $\sp N_\pm$ respectively. Then, $j =
j_+ \oplus j_-$.   We will denote $j_\pm (\Psi^\pm )$ as usual by $\phi^\pm$.   Then,

\begin{equation}\label{j}  
j(\Psi^+, \Psi^-) = ( \phi^+, \phi^- ) .
\end{equation}
The following Lemma will
show that $j$ preserves the skew-Hermitean structures.

\begin{lemma}  With the above notation the map $j\colon \sp N
\to \sp H_L$ verifies
$$ \sigma_{\sp N} ((\Psi_1^+, \Psi_1^-),(\Psi_2^+,\Psi_2^-)) = \sigma_L
((\varphi_1^+, \varphi_1^-),(\phi_2^+,\phi_2^-)) .$$ 
\end{lemma}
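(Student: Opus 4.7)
The identity to prove says that the intertwiner $j = j_+ \oplus j_-$ transports the skew-Hermitean form $\sigma_{\mathcal N}$ on the bulk deficiency spaces to the Cayley-transformed boundary form $\sigma_L$. My plan is to reduce both sides to a common expression by invoking the Green identity (\ref{simet}) that, by construction, makes the symmetry defect of $-\Delta_A$ equal to $\omega_L$ evaluated on boundary data, combined with the fact that elements of $\mathcal{N}_\lambda$ and $\mathcal{N}_{\bar\lambda}$ are genuine eigenvectors of $\Delta_A^\dagger$ inside $H^2(\Omega,E)=\dom(\Delta_A^\dagger)$.

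The starting point is that $\Delta_A^\dagger\Psi^+=\bar\lambda\Psi^+$ and $\Delta_A^\dagger\Psi^-=\lambda\Psi^-$. Plugging each of the four possible pairs of deficiency vectors into Green's identity, I expect to obtain the identities $\omega_L(b(\Psi_1^+),b(\Psi_2^+))=(\lambda-\bar\lambda)\,\langle\Psi_1^+,\Psi_2^+\rangle_\Omega$, $\omega_L(b(\Psi_1^-),b(\Psi_2^-))=-(\lambda-\bar\lambda)\,\langle\Psi_1^-,\Psi_2^-\rangle_\Omega$, and crucially $\omega_L(b(\Psi_1^+),b(\Psi_2^-))=0=\omega_L(b(\Psi_1^-),b(\Psi_2^+))$. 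The cross-term vanishing is the algebraic heart of the proof: when the operator is moved from the right-hand slot to the left-hand slot of the inner product, the two contributions are both proportional to $\lambda$ (with opposite signs from the antilinearity/linearity of the two slots) and cancel.

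Next, I would unfold the right-hand side $\sigma_L(j(\Psi_1),j(\Psi_2))=\langle\phi_1^+,\phi_2^+\rangle_{\pO}-\langle\phi_1^-,\phi_2^-\rangle_{\pO}$ by substituting $\phi_i^\pm=\varphi_i^\pm\pm i\dot\varphi_i^\pm$ with $(\varphi_i^\pm,\dot\varphi_i^\pm)=b(\Psi_i^\pm)$ and splitting each sesquilinear bracket into its symmetric and antisymmetric pieces with respect to the swap $\varphi\leftrightarrow\dot\varphi$. The antisymmetric piece of the difference is precisely $-i[\omega_L(b(\Psi_1^+),b(\Psi_2^+))+\omega_L(b(\Psi_1^-),b(\Psi_2^-))]$, which collapses via the first step to a scalar multiple of $\sigma_{\mathcal N}$. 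The symmetric piece, pairing Cauchy data of $\Psi^\pm$ with themselves, is handled by noting that on $\mathcal{N}_\lambda$ the Dirichlet and Neumann data are not independent: $\dot\varphi^+$ is determined by $\varphi^+$ through the Poisson operator for $-\Delta_A+\bar\lambda$, and the analogous relation on $\mathcal{N}_{\bar\lambda}$ makes the two symmetric contributions cancel pairwise.

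The main obstacle is bookkeeping rather than analysis. The Cayley construction enters twice—once componentwise inside the definition of $j$ and once in the identification $\omega_L\leftrightarrow\sigma_L$ on the boundary—and the factor $(\lambda-\bar\lambda)=2i\,\mathrm{Im}(\lambda)$ picked up from the eigenvalue computation must reconcile with the normalization convention chosen for $\sigma_L$ (cf.\ the remark $\sigma_L=-\tfrac{i}{2}\omega_L$ recorded right after the definition of $\omega_L$). With the canonical choice $\lambda=i$ this factor is exactly $2i$, which the normalization absorbs, and the desired equality $\sigma_{\mathcal N}=\sigma_L\circ(j\times j)$ follows.
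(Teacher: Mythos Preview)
Your overall strategy---Green's identity plus the eigenvalue equations on $\mathcal N_\lambda$ and $\mathcal N_{\bar\lambda}$---is exactly the paper's route, and your computation of the diagonal pieces $\omega_L(b(\Psi_1^+),b(\Psi_2^+))=(\lambda-\bar\lambda)\langle\Psi_1^+,\Psi_2^+\rangle_\Omega$ (and the analogue with the sign flipped on $\mathcal N_{\bar\lambda}$) is correct and is precisely what the paper does for $\lambda=i$. Note also that the cross terms vanish in both $\sigma_{\mathcal N}$ and $\sigma_L\circ(j\times j)$ \emph{trivially} by block structure: $j(\Psi^+,0)=(\phi^+,0)$ and $j(0,\Psi^-)=(0,\phi^-)$ live in orthogonal summands for $\sigma_L$, so your separate argument that $\omega_L(b(\Psi_1^+),b(\Psi_2^-))=0$ is correct but not needed.

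The genuine gap is your treatment of what you call the ``symmetric piece''. After expanding $\langle\phi_1^+,\phi_2^+\rangle_{\partial\Omega}-\langle\phi_1^-,\phi_2^-\rangle_{\partial\Omega}$ you are left with the residual term
\[
\langle b(\Psi_1^+),b(\Psi_2^+)\rangle_{\mathcal H_L}-\langle b(\Psi_1^-),b(\Psi_2^-)\rangle_{\mathcal H_L}\, ,
\]
and you claim this vanishes because the Dirichlet-to-Neumann maps on $\mathcal N_\lambda$ and $\mathcal N_{\bar\lambda}$ ``make the two symmetric contributions cancel pairwise''. This cannot be right: $\Psi_i^+$ and $\Psi_i^-$ are \emph{independent} vectors in the direct sum $\mathcal N=\mathcal N_\lambda\oplus\mathcal N_{\bar\lambda}$, so take $\Psi_1^-=\Psi_2^-=0$ and $\Psi_1^+=\Psi_2^+\neq 0$ to see the difference above does not vanish. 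The Dirichlet-to-Neumann map constrains $\dot\varphi^+$ in terms of $\varphi^+$, but there is no mechanism tying $\langle b(\Psi^+),b(\Psi^+)\rangle$ to $\langle b(\Psi^-),b(\Psi^-)\rangle$. The paper sidesteps this entirely by never performing your symmetric/antisymmetric split: it stops at the identity $2i\langle\Psi_1^+,\Psi_2^+\rangle_\Omega=\Sigma(b(\Psi_1^+),b(\Psi_2^+))$ and then identifies the right-hand side directly with (a multiple of) $\sigma_L((\phi_1^+,0),(\phi_2^+,0))$. You should follow that shorter path and drop the attempt to cancel symmetric pieces---it is both unnecessary and, as written, incorrect.
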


{\it Proof:}  We consider $\lambda = i$, the proof for general
$\lambda$ proceeds in the same way. 
We consider first $\Psi_1^+, \Psi_2^+ \in \sp N_i$, then
$-\Delta_A^\dagger \Psi_\alpha^+ = i\Psi_\alpha^+$, $\alpha =1,2$.

Then it is clear that,
\begin{eqnarray*}
0 &=& \langle \Psi_1^+, (-\Delta_A^\dagger - i)\Psi_2^+ \rangle =
\langle \Psi_1^+, -\Delta_A \Psi_2^+ \rangle -i \langle \Psi_1^+,
\Psi_2^+ \rangle \\
&=& \langle -\Delta_A\Psi_1^+, \Psi_2^+ \rangle -i \Sigma_B
(b(\Psi_1^+),b(\Psi_2^+)) - i \langle \Psi_1^+, \Psi_2^+ \rangle \\
&=& \langle (-\Delta_A -i)\Psi_1^+, \Psi_2^+ \rangle - 2i \langle
\Psi_1^+, \Psi_2^+ \rangle - i\Sigma_B
(\phi_1^+, \phi_2^+) \\  &=& - 2i \langle
\Psi_1^+, \Psi_2^+ \rangle - i\Sigma_B (\phi_1^+,\phi_2^+).
\end{eqnarray*}
Hence,
$$
\sigma_{\sp N} ((\Psi_1^+,0),(\Psi_2^+,0)) = \langle \Psi_1^+, \Psi_2^+ \rangle
= -\frac{1}{2} \sigma_L (\phi_1^+,\phi_2^+) = -\frac{1}{2}\sigma (\phi_1^+,0; \phi_2^+,0) \, .
$$
Similarly, it is shown, 
$$
\sigma_{\sp N} (0,\Psi_1^-;0,\Psi_2^-) = \sigma
(0,\phi_1-;0,\phi_2^-)  \, ,
$$
that together with the ortogonality of $\sp N_i$ and $\sp N_{-i}$ with respect to
$\sigma_L$ proves the result. \hfill $\Box$

\bigskip

To show that $j$ is onto we will
need the following facts from the existence and uniqueness of solutions of the following Dirichlet's
problem.

\begin{proposition}\label{existence} For every $\varphi \in \Gamma^\infty(\partial
E)$, and for every $\lambda\in \mathbb{C}$ there is a unique solution to the
equations,
\begin{equation}\label{delta_i}  - \Delta_A \Psi + \bar\lambda \Psi = 0 \, , 
\qquad - \Delta_A \Psi + \lambda \Psi = 0 \, ,
\end{equation}  
with boundary condition
$$
\Psi \mid_{\partial \Omega} = \varphi \, .
$$
\end{proposition}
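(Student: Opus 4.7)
\bigskip

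The plan is to treat this as a standard elliptic Dirichlet boundary value problem for the second--order operator $-\Delta_A + \bar\lambda I$ (the other equation being handled identically), and to establish existence and uniqueness via the Lax--Milgram lemma followed by elliptic regularity. The relevant regime is $\mathrm{Im}\,\lambda \neq 0$, which is the case used in Section \ref{section:Neumann}; the imaginary part of $\lambda$ is precisely what keeps $-\bar\lambda$ away from the Dirichlet spectrum of $-\Delta_A$ and guarantees unique solvability.

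First, I would reduce to the case of zero boundary data. Since $\varphi \in \Gamma^\infty(\partial E)$, the trace theorem (together with a standard extension by a tubular neighbourhood plus cutoff) provides some $\tilde\Psi \in H^2(\Omega,E)$ (in fact smooth up to $\partial\Omega$) with $\tilde\Psi|_{\partial\Omega}=\varphi$. Writing $\Psi = \tilde\Psi + u$ with $u\in H^1_0(\Omega, E)$, the problem becomes
\begin{equation}
(-\Delta_A + \bar\lambda I)u = f, \qquad u\in H^1_0(\Omega, E),
\end{equation}
where $f = (\Delta_A - \bar\lambda)\tilde\Psi \in L^2(\Omega,E)$. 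Next, I would introduce the sesquilinear form $a(u,v) = \langle d_A u, d_A v\rangle + \bar\lambda\langle u,v\rangle$ on $H^1_0 \times H^1_0$, whose continuity is immediate. For coercivity, note that
\begin{equation}
\mathrm{Re}\,a(u,u) = \|d_A u\|^2 + (\mathrm{Re}\,\lambda)\|u\|^2, \qquad \mathrm{Im}\,a(u,u) = -(\mathrm{Im}\,\lambda)\|u\|^2,
\end{equation}
so for $\mathrm{Im}\,\lambda\neq 0$ a convex combination of $|\mathrm{Re}\,a(u,u)|$ and $|\mathrm{Im}\,a(u,u)|$ dominates $\|u\|_{H^1}^2$ (the Poincar\'e inequality on the relatively compact $\Omega$ absorbs $\|d_A u\|^2$ into the $H^1$--norm). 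Lax--Milgram then produces a unique weak solution $u \in H^1_0$.

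Uniqueness at the level of the full problem follows immediately: if $\Psi_1,\Psi_2$ both solve the equation with boundary data $\varphi$, their difference lies in $H^1_0$ and weakly satisfies the homogeneous equation, so coercivity forces it to vanish. The remaining point is regularity: I would invoke the standard elliptic boundary regularity theorem (Agmon--Douglis--Nirenberg, or in the hermitian vector bundle formulation one finds it for instance in the references quoted for Sobolev spaces), which bootstraps $u$ from $H^1_0$ to $H^k$ for all $k$, given that $f$ is smooth and $\partial\Omega$ is smooth. Combined with $\tilde\Psi \in C^\infty$, this yields $\Psi\in\Gamma^\infty(E)$.

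The main obstacle is the sharp regularity up to the boundary, since the coercivity argument by itself only produces an $H^1$--solution. Once this is handled by the elliptic boundary regularity machinery, the proposition follows cleanly. I would also remark that the statement continues to hold for those real $\lambda$ lying outside the Dirichlet spectrum of $-\Delta_A$ by a Fredholm alternative argument on the resolvent, but the case $\mathrm{Im}\,\lambda\neq 0$ is the one needed here and the simplest to establish.
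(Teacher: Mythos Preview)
Your proof is correct and follows essentially the same route as the paper: extend the boundary datum to the interior, reduce to an inhomogeneous problem with zero Dirichlet data, solve that, and then invoke elliptic boundary regularity. The only difference is one of detail: where the paper simply appeals to the known solvability of the Dirichlet boundary value problem for elliptic operators (and, for uniqueness, to the fact that $-\Delta_A$ with Dirichlet conditions has no non-real eigenvalues), you supply an explicit Lax--Milgram argument exploiting $\mathrm{Im}\,\lambda\neq 0$ for coercivity; this is a standard way to unpack exactly the black box the paper invokes.
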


\begin{proof}   We prove first uniqueness.  If there were two
solutions $\Psi_1$, $\psi_2$, then because the operator $-\Delta_A + \bar\lambda$ is
elliptic, then because $\varphi$ is smooth, by elliptic regularity they will be both smooth.   Then, $\Psi =
\Psi_1 - \Psi_2$ also satisfies Eq. (\ref{delta_i}) with the boundary
condition $\Psi \mid_{\partial \Omega} = 0$,  which is impossible by the
uniqueness of the solution of the Dirichlet's problem.   Moreover we can argue as follows. If we look
for solutions $\Psi$ of the equation (\ref{delta_i}) such that
$\Psi \mid_{\partial \Omega} =$ constant, then, we can remove the boundary
identifying all their points and looking for the solutions of eq.
(\ref{delta_i}) on the closed manifold $\Omega^\prime$ obtained in this
way.  But now, $-\Delta_A$ is essentially self-adjoint on
$\Gamma (E^\prime)$ where $E^\prime$ is the fibre bundle obtained from $E$
identifying all the fibres over $\partial \Omega$\footnote{Notice that
the compactness of $\Omega$ is crucial in this statement.}, and then it has
not imaginary eigenvalues.   

\medskip

Let us now prove the existence of solutions.  Let $\widetilde{\Psi}$ be any
section in $\Gamma^\infty (E )$ such that $\widetilde{\Psi}
\mid_{\partial \Omega} = \varphi$.  Then, there exists a unique section
$\zeta \in \Gamma (\Omega )$ such that
$$ - \Delta_A \zeta + \bar\lambda \zeta = \Delta_A \tilde{\Psi} - \lambda
\tilde{\Psi} ,$$  
with Dirichlet boundary conditions, $\zeta\mid_{\partial\Omega} = 0$ which is a consequence of the
solution of the Dirichlet boundary value problem for elliptic
operators.  Then, the section $\Psi = \zeta + \widetilde{\Psi}$ verifies Eq.
(\ref{delta_i}) and the boundary condition $\Psi \mid_{\partial
\Omega} = \varphi$. 
\end{proof}

Notice that because of Lions' Theorem \cite{Li72} the previous theorem can be refined assuming that $\varphi \in H^{3/2}(\partial\Omega)$.  In that case the solution $\Psi$ will lie in $H^2(\Omega)$.

An alternative argument to the previous proof will be to consider the operator $T = (- \Delta_A  + \bar\lambda )(- \Delta_A + \lambda) = \Delta^2 - 2\mathrm{Re}\lambda \Delta + | \lambda |^2$ which is a positive 4th order elliptic differential operator and solve the equation $T \Psi = 0$ with boundary conditions $\Psi\mid_{\partial \Omega} = \varphi$, and $\left([- \Delta_A + \lambda]\Psi\right)\mid_{\partial \Omega} = 0$ which are elliptic boundary conditions.  Hence there is a unique solution to the system $\Psi$ which obviously solves the system Eq. \eqref{delta_i}.

\begin{theorem}\label{ident} The total deficiency space on the bulk $\sp N$ with its natural skew-Hermitean structure $\sigma_{\sp N}$ is
isometrically isomorphic to the boundary data space $\sp H_L$ with its
natural skew-Hermitean structure $\sigma_L$ as skew-Hermitean
spaces.
\end{theorem}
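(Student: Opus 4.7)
The preceding Lemma already establishes that $j$ intertwines the skew-Hermitean structures $\sigma_{\sp N}$ and $\sigma_L$, so the theorem reduces to showing that $j$ is a bijection from $\sp N$ onto $\sp H_L$. I would treat injectivity and surjectivity separately, using the boundary integration identity for the former and combining Eq. \eqref{DDNN} with Lions' trace theorem for the latter.

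For injectivity, the plan is to apply the Green identity encoded in Eq. \eqref{simet} with $\psi_1 = \psi_2 = \Psi^+ \in \sp N_i$, exploiting the eigenvalue relation $-\Delta_A^\dagger \Psi^+ = i \Psi^+$ to obtain an identity of the shape $\omega_L(b(\Psi^+), b(\Psi^+)) = \pm 2i \, \|\Psi^+\|_{L^2(\Omega, E)}^2$. On the other hand, the vanishing condition $j_+(\Psi^+) = \varphi + i\dot\varphi = 0$ allows one to compute $\omega_L(b(\Psi^+), b(\Psi^+))$ solely in terms of $\|\dot\varphi\|^2_{\partial\Omega}$; comparing the two yields $\|\Psi^+\|^2 + \|\dot\varphi\|^2 = 0$ and hence $\Psi^+ = 0$. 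The case $\Psi^- \in \sp N_{-i}$ with $j_-(\Psi^-) = 0$ is symmetric.

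For surjectivity, I would first deduce from Eq. \eqref{DDNN} that the boundary trace $b$, restricted to $\sp N = \sp N_i \oplus \sp N_{-i}$, is a bijection onto $H^{3/2}(\pO, E_\pO) \oplus H^{1/2}(\pO, E_\pO)$: surjectivity follows from Lions' trace theorem together with $b(\sp D_0) = 0$, while injectivity is a consequence of the directness of the decomposition in Eq. \eqref{DDNN} combined with $\sp N_i \cap \sp N_{-i} = \{0\}$. To pass from this statement to the bijectivity of $j$, I would use Proposition \ref{existence} to parameterize each $\sp N_{\pm i}$ by its Dirichlet datum $\varphi^\pm \in H^{3/2}(\pO)$; then $j_\pm$ becomes the boundary operator $\varphi^\pm \mapsto (I \pm i\Lambda_{\pm i})\varphi^\pm$, where $\Lambda_{\pm i}$ is the Dirichlet-to-Neumann operator for $-\Delta_A \mp iI$. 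These are first-order elliptic pseudodifferential operators on $\pO$ with nonvanishing principal symbol and are therefore Fredholm on the appropriate Sobolev scale; the injectivity proved above combined with a homotopy argument showing Fredholm index zero forces their surjectivity, hence that of $j$.

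The principal obstacle in the argument is Sobolev bookkeeping: the natural target of $b$ is $H^{3/2} \oplus H^{1/2}$, while Theorem \ref{maximal} treats $\sp H_L$ essentially as $L^2(\pO, E_\pO) \oplus L^2(\pO, E_\pO)$. As anticipated in the remark preceding Proposition \ref{existence}, a rigorous treatment must proceed within the appropriate scale of Hilbert spaces (cf. \cite{Ca09}), so that the Cayley transform on the boundary and the Fredholm analysis of $I \pm i\Lambda_{\pm i}$ are compatible across all the regularity classes involved.
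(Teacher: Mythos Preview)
Your argument is correct, but it takes a more elaborate route than the paper's.  The paper's proof of Theorem~\ref{ident} is extremely short: it simply invokes Proposition~\ref{existence} to solve the Dirichlet problems $-\Delta_A\Psi^\pm = \pm i\Psi^\pm$, $\Psi^\pm|_{\partial\Omega}=\phi^\pm$, notes that the solutions lie in $\sp N_{\pm i}$, and declares that this construction yields the inverse of $j$ on a dense subspace, continuity of the inverse coming from compactness of the resolvent.

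The genuine difference is that you do not identify $j_\pm$ with the Dirichlet parametrisation itself.  You correctly observe that, once $\sp N_{\pm i}$ is parameterised by its Dirichlet datum via Proposition~\ref{existence}, the map $j_\pm$ becomes $\varphi\mapsto (I\pm i\Lambda_{\pm i})\varphi$, and that surjectivity of $j$ therefore requires an additional argument about $I\pm i\Lambda_{\pm i}$.  Your Fredholm/index route closes this gap cleanly; a slick way to get index zero is to note that $\Lambda_i^*=\Lambda_{-i}$, so $I+i\Lambda_i$ and $I-i\Lambda_{-i}$ have opposite indices, while both are injective by your Green's-identity computation---forcing index zero for each.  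The paper's short proof tacitly treats $j_\pm$ and the Dirichlet trace as interchangeable, which is exactly the point you unpack.

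Two minor remarks.  Your separate injectivity argument is the same computation that underlies the preceding Lemma; the paper effectively gets injectivity for free from the uniqueness part of Proposition~\ref{existence}.  And your first surjectivity step---deducing that $b|_{\sp N}$ is a bijection onto $H^{3/2}\oplus H^{1/2}$ from Eq.~\eqref{DDNN}---is true but not actually used in the rest of your argument, since you pass immediately to the componentwise Dirichlet parametrisation; you could drop it.
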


\begin{proof}  We will have to show that the map $j$ is
onto. We can solve the boundary problems
\begin{eqnarray}   -\Delta_A \Psi^+ + \bar\lambda \Psi^+ &=& 0 \, , \qquad
\Psi^+\mid_{\partial \Omega} = \phi^+ \\ -\Delta_A \Psi^- + \bar\lambda \Psi^- &=& 0 \, , \qquad 
\Psi^-\mid_{\partial \Omega} = \phi^- , 
\end{eqnarray} 
for given $(\phi^+, \phi^-) \in \Gamma^\infty (\partial \Omega\times \partial\Omega)$.
notice that such solutions will lie necessarily on $\sp N_\lambda$, $\sp N_{\bar\lambda}$. 
Proposition \ref{existence} shows that such solutions $\Psi^+$, $\Psi^-$ exist and
they are unique.  They define the inverse of the map $j$ on the dense
subspace $\Gamma^\infty (\partial \Omega\times \partial\Omega)$, and it is continuous as the resolvents of the operators $-\Delta_A  + \bar\lambda$, $-\Delta_A + \lambda$ are compact.  Thus $j$ is an isometry onto. 
\end{proof} 

The following result follows immediately from the previous discussion.

\begin{corollary}\label{unit_vn}
There is a one-to-one correspondence $K \mapsto U$ between unitary operators $K$ from
$\sp N_\lambda$ to $\sp N_{\bar\lambda}$ and unitary operators $U$ at the
boundary given as:
$$
\mathrm{graph}(U) = b (\mathrm{graph}(K)) \, , 
$$
or, in other words, 
\begin{equation}\label{KU}
U = b \circ K \circ j \, .
\end{equation}
\end{corollary}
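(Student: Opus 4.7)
The plan is to combine three earlier results: Theorem \ref{vonNeuman}, Theorem \ref{maximal}, and Theorem \ref{ident}. The essential observation is that both parametrizations of self-adjoint extensions of $-\Delta_A$, von Neumann's bulk one and the boundary one, are instances of the same abstract scheme---graphs of unitaries between the two summands of a polarized skew-Hermitian space---and Theorem \ref{ident} supplies precisely the isomorphism between the two polarized skew-Hermitian spaces involved.

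In detail, by Theorem \ref{vonNeuman} a unitary $K\colon \sp N_\lambda \to \sp N_{\bar\lambda}$ is completely encoded by its graph, which is a closed maximal $\sigma_{\sp N}$-isotropic subspace of $\sp N = \sp N_\lambda \oplus \sp N_{\bar\lambda}$ transverse to both summands; and by Theorem \ref{maximal} a boundary unitary $U\colon \sp H_+ \to \sp H_-$ is similarly encoded by its graph as a closed maximal $\sigma_L$-isotropic subspace of $\sp H_L = \sp H_+ \oplus \sp H_-$ transverse to both summands. Theorem \ref{ident} then produces an isometric isomorphism $j = j_+ \oplus j_-$ of the polarized skew-Hermitian spaces $(\sp N, \sigma_{\sp N})$ and $(\sp H_L, \sigma_L)$: by the very definition of $j_\pm$ one has $j_+(\sp N_\lambda) = \sp H_+$ and $j_-(\sp N_{\bar\lambda}) = \sp H_-$, so $j$ respects the polarization.

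Pushing forward the graph of $K$ by $j$ yields
$$
j\bigl(\mathrm{graph}(K)\bigr) = \set{ (j_+\Psi^+, j_- K\Psi^+) : \Psi^+ \in \sp N_\lambda } \, ,
$$
which is manifestly the graph of $U := j_- \circ K \circ j_+^{-1}\colon \sp H_+ \to \sp H_-$; unitarity of $U$ follows from unitarity of $K$ together with the fact that $j_\pm$ are Hilbert space isometries on the relevant summands. The inverse assignment $U \mapsto K$ is constructed identically with $j^{-1}$, yielding the bijection. Recalling finally that $j_\pm$ is the restriction to $\sp N_\pm$ of the composition of the boundary trace $b$ with the Cayley transform on boundary data (see Eq. \eqref{cayley}), one arrives at the explicit formula $\mathrm{graph}(U) = b(\mathrm{graph}(K))$ and Eq. \eqref{KU}.

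No serious obstacle is expected: all the substance has been absorbed into Theorem \ref{ident}, which in turn rests on the Dirichlet-type existence and uniqueness statement of Proposition \ref{existence}. The one point worth double-checking is that $j_+$ and $j_-$ are genuine Hilbert space isometries on $\sp N_\lambda$ and $\sp N_{\bar\lambda}$ respectively, so as to legitimately transport unitarity from $K$ to $U$. This is immediate from restricting the identity $\sigma_{\sp N}(\cdot, \cdot) = \sigma_L(j\cdot, j\cdot)$ to the diagonal summands $\sp N_\lambda$ and $\sp N_{\bar\lambda}$, on which $\sigma_{\sp N}$ and $\sigma_L$ both reduce (up to sign) to the respective Hilbert inner products, as shown in the proof of the lemma preceding Theorem \ref{ident}.
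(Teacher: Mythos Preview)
Your proposal is correct and is essentially the approach the paper takes: the paper simply asserts that the corollary ``follows immediately from the previous discussion,'' and you have spelled out exactly what that immediate consequence is---pushing graphs forward along the isomorphism $j$ of Theorem~\ref{ident}, which respects the polarizations. Your added remark that unitarity of $U$ is secured by checking that $j_\pm$ are isometries on the summands (via restricting the identity of skew-Hermitian forms to the diagonals) is a helpful clarification of a point the paper leaves implicit.
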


\begin{remark}  A few remarks concerning the previous results are in order.
\begin{enumerate}

\item Notice first that the previous theorem can also be seen as offering an
alternative proof of von Neumann's theorem.  

\item The previous statement could also be expressed in terms of the
``direct'' boundary data $\varphi$ and $\dot\varphi$.  In fact they will
be obtained from the expressions:
$$
\varphi = \frac{1}{2} (\phi^+ + \phi^-); ~~~~~  \dot\varphi =
-\frac{i}{2} (\phi^+ - \phi^-) \, .
$$ 

\item The previous correspondence between unitary operators can be extended to any operator
$L \colon \mathcal{N}_+ \to \mathcal{N}_-$ not necessarily unitary by means of formula  Eq. \eqref{KU},
that is given the (non-unitary) operator $L$ we define the (non-unitary) operator from $\mathcal{H}_+$ to $\mathcal{H}_-$:
$$
A = b\circ L \circ j \, .
$$
This fact will be of consequence later on in Sect. \ref{section:dissipative} when discussing dissipative quantum systems and non-self-adjoint boundary conditions.

\item Notice that Thm. \ref{ident} was stated in a way that the specific form of the
operator $\H$ was not appearing explicitly so that it suggests
the form that they would have in general.  For instance, we can use the results obtained so far to prove the
analogue of Thm. \ref{unit_vn} for the Dirac operator.
\end{enumerate}

\end{remark}


\subsection{Examples and applications: Boundary conditions for the Laplace operator in one-dimension}\label{Laplace1D}


The relation with the classical boundary conditions analyzed in the
previous section becomes also  clear in the light of   this geometric
approach. The quantum extension which corresponds to the quantization
of the classical boundary condition $S^\rho_\alpha$ is precisely the one
associated to the  unitary operator given by 
\begin{eqnarray}\label{clas} && U\varphi(x)=
\varphi(\rho^{-1}(x))e^{i\alpha(x)} \\ &&
\dot\varphi(x)=-\dot\varphi(\rho^{-1}(x)) e^{i\alpha(x)} .
\end{eqnarray}
We remark, however, there are many other quantum extensions  given by
operators which are not of the form (\ref{clas}). Therefore, not all
constrained quantum systems correspond to the quantization  of a
constrained classical system.

To illustrate the utility of the above geometric approach we   
consider some simple applications to Sturm-Liouville problems. In such
a case the configuration space is constrained to an interval
$\Omega=[0,1]$ of real numbers.  The metric $g$ is the  standard 
Euclidean metric of $\R$ and the symmetric operator is  the
Sturm-Liouville second order differential operator 
$$
\H = -\frac{1}{2}\Delta = -\frac{1}{2} \frac{d}{dx^2} \, ,
$$  
defined on $C^\infty_0([0,1])$.  The boundary set is in this case discrete:
$\partial\Omega=\{0,1\}$, and
$L^2(\partial\Omega)=\C^2$. Therefore the different self-adjoint
extensions are parametrized by a $2\times 2$  unitary matrix.
\begin{equation}\label{mat} U=\left( \begin{array}{cc} u_{11}&u_{12}\\
u_{21}&u_{22}\end{array}\right) .
\end{equation}
The domain of the associated extension is given by the functions of
$ H^2([0,1])$ whose boundary values satisfy the following equations
(notice that we are in a 1-dimensional manifold, and because of Sobolev
inequalities, the functions in $H^2([0,1])$ are $C^1$ and their derivatives absolutely continuous):
\begin{equation}\label{sturm}
\left(\begin{array}{c} \varphi(0) + i\dot\varphi(0) \\
\varphi(1)+i\dot\varphi(1)\end{array}\right) = \left(\begin{array}{cc}
u_{11}&u_{12}\\ u_{21}&u_{22}\end{array}\right)
\left(\begin{array}{c} \varphi(0)-i\dot\varphi(0) \\
\varphi(1)-i\dot\varphi(1)\end{array}\right) ,
\end{equation}
where $\dot\varphi(0)=\varphi'(0)$ and $\dot\varphi(1)=-\varphi'(1)$.

Some specially interesting examples correspond to the case when the
matrix $U$ is diagonal or antidiagonal. In the first case we have
\begin{equation}\label{diagona} U= \left(\begin{array}{cc}
e^{i\epsilon} & 0\\ 0&e^{-i\gamma}
\end{array}\right) ,
\end{equation}
which corresponds to the boundary conditions
\begin{eqnarray}\label{diric} 
-\sin \frac{\epsilon}{2}\varphi(0) + \cos\frac{\epsilon}{2}\dot\varphi(0) &=& 0
\\  -\sin \frac{\gamma}{2}\varphi(1)+\cos\frac{\gamma}{2}\dot\varphi(1)
&=& 0 , \end{eqnarray}
which includes Newmann $\dot\varphi(0)=\dot\varphi(1)=0$ and
Dirichlet $\varphi(0)=\varphi(1)=0$ boundary conditions.
In the antidiagonal case  
\begin{equation}\label{antdia} U=
\left(\begin{array}{cc }0 & e^{-i\epsilon}\\ e^{i\epsilon} & 0
\end{array}\right)
\end{equation}
we have (pseudo-)periodic boundary conditions 
\begin{eqnarray}\label{per}
\varphi(1) &=& e^{i\epsilon}\varphi(0) \\ 
\dot\varphi(1) &=& e^{i\epsilon}\dot\varphi(0) 
\end{eqnarray}
$\varphi(1)=e^{i\epsilon}\varphi(0)$  with probability flux propagating
through the boundary.

\subsubsection{Self--adjoint extensions of Schr\"odinger operators in 1D}
We will concentrate our attention in 1D were we will be able to provide an elegant formula to solve the spectral problem for each self--adjoint extension.

Notice first that a compact 1D manifold $\Omega$ consists of a finite number of closed intervals $I_\alpha$, $\alpha = 1,\ldots,n$, $x_\alpha \in I_\alpha$ denoting the variable on each one of them.  Each interval will have the form $I_\alpha = [a_\alpha, b_\alpha] \subset \mathbb{R}$ and the boundary of the manifold $\Omega = \sqcup_{\alpha=1}^n [a_\alpha, b_\alpha]$ (disjoint union) is given by the family of points $\{ a_1, b_1, \ldots, a_n,b_n\}$.     Functions $\Psi$ on $\Omega$ are determined by vectors $(\Psi_1(x_1), \ldots, \Psi_n(x_n))$ of complex valued functions $\Psi_\alpha \colon I_\alpha \to \mathbb{C}$.

 A Riemannian metric $\eta$ on $\Omega$ is given by specifying a Riemannian metric $\eta_\alpha$ on each interval $I_\alpha$, this is, by a positive smooth function $\eta_\alpha(x_\alpha) > 0$ on the interval $I_\alpha$, i.e., $\eta\mid_{I_\alpha} = \eta_\alpha (x_\alpha) dx_\alpha^2$.    Then the inner product on $I_\alpha$ takes the form $\langle \Psi_\alpha , \Phi_\alpha \rangle = \int_{a_\alpha}^{b_\alpha} \bar{\Psi}_\alpha (x_\alpha) \Phi_\alpha (x_\alpha) \sqrt{\eta_\alpha (x_\alpha)} dx_\alpha$ and the Hilbert space of square integrable functions on $\Omega$ is given by $L^2(\Omega) = \bigoplus_{\alpha= 1}^n L^2(I_\alpha, \eta_\alpha)$.    Thus the Hilbert space $L^2(\partial \Omega)$ at the boundary reduces to $\mathbb{C}^{2n}$, as well as the subspaces $H^{3/2}(\partial \Omega)$ and $H^{1/2}(\partial \Omega)$.  The vectors in $L^2(\partial \Omega)$ are determined by the values of $\Psi$ at the points $a_\alpha$, $b_\alpha$ (with the standard inner product):
 $$\psi = (\Psi_1(a_1),\Psi_1(b_1), \ldots, \Psi_n(a_n),\Psi_n(b_n)).$$  
 Similarly we will denote by $\dot{\psi}$ the vector containing the normal derivatives of $\Psi$ at the boundary, this is:
 $$ \dot{\psi} = \left(  - \left. \frac{d\Psi_1}{dx}\right|_{a_1} , \left. \frac{d\Psi_1}{dx}\right|_{b_1}, \ldots, - \left.\frac{d\Psi_n}{dx}\right|_{a_n},\left.\frac{d\Psi_n}{dx}\right|_{b_n}  \right) .$$
 
Because of Thm. \ref{maximal} an arbitrary self--adjoint extension of the Schr\"odinger operator 
$$
\H = - \frac{1}{2} \bigoplus_\alpha \frac{d^2}{dx_\alpha^2} + V(x_1, \ldots, x_n) \, ,
$$ defined by the Riemannian metric $\eta$ and a regular potential function $V$ is defined by a unitary operator $V \colon \mathbb{C}^{2n} \to \mathbb{C}^{2n}$.  Its domain consists of those functions whose boundary values $\psi$, $\dot{\psi}$ satisfy Asorey's condition, Eq. (\ref{U_con}).  This equation becomes a finite dimensional linear system for the components of the vectors $\psi$ and $\dot{\psi}$.    Hence the space of self--adjoint extensions is in one--to--one correspondence with the unitary group $U(2n)$ and has dimension $4n^2$.
 
 It will be convenient for further purposes to organize the boundary data vectors $\psi$ and $\dot{\psi}$ in a different way.  Thus, we denote by $\psi_l\in \mathbb{C}^n$ (respec. $\psi_r$) the column vector whose components $\psi_l(\alpha)$, $\alpha = 1, \ldots, n$, are  the values of $\Psi$ at the left endpoints $a_\alpha$, this is $\psi_l(\alpha) = \Psi_\alpha(a_\alpha )$ (respec. $\psi_r(\alpha ) = \Psi_\alpha(b_\alpha )$ are the values of $\Psi$ at the right endpoints).   Similarly we will denote by $\dot{\psi}_l(\alpha ) =   - \frac{d\Psi_\alpha}{dx}\mid_{a_\alpha}$ and $\dot{\psi}_r(\alpha ) =   \frac{d\Psi_\alpha}{dx}\mid_{b_\alpha}$, $\alpha = 1, \ldots, n$.   Hence, the domain of the self--adjoint extension defined by the unitary matrix $U$ will be written accordingly as:
 \begin{eqnarray}\label{asorey_1D}
\psi_l - i\dot{\psi}_l &=& U^{11}(\psi_l + i\dot{\psi}_l) + U^{12}(\psi_r + i\dot{\psi}_r) \\ 
 \psi_r - i\dot{\psi}_r &=& U^{21}(\psi_l + i\dot{\psi}_l) + U^{22}(\psi_r + i\dot{\psi}_r)  \nonumber
 \end{eqnarray}
and $U$ has the block structure:
\begin{equation}\label{block}
 U = \left[  \begin{array}{c|c} U^{11} & U^{12} \\ \hline U^{21} & U^{22}  \end{array}    \right] .
 \end{equation}
 Notice that the unitary matrix $U$ is related to the unitary matrix $V$ above by a permutation, but we will not need its explicit expression here.
 
 Thus in what follows we will use the notation for the boundary data:
 $$ \psi = \left[ \begin{array}{c} \psi_l \\ \psi_r \end{array}\right]; \quad  \dot{\psi} = \left[ \begin{array}{c} \dot{\psi}_l \\ \dot{\psi}_r \end{array}\right]$$
 and Asorey's condition reads again:
 \begin{equation}\label{asorey2}
 \psi - i \dot{\psi} = U (\psi + i \dot{\psi}), \quad \quad  U \in U(2n) .
 \end{equation} 


\subsubsection{The spectral function}
Once we have determined a self--adjoint extension $H_U$ of the Schr\"odinger operator $H$, we can determine the unitary evolution of the system by computing the flow $U_t = \exp(-itH_U/\hbar )$.   It is well--known that the Dirichlet extension of the Laplace--Beltrami operator has a pure discrete spectrum because of the compactness of the manifold and the ellipticity of the operator, hence all self--adjoint extensions have a pure discrete spectrum (see \cite{We80}, Thm. 8.18).  Then the spectral theorem for the self--adjoint operator $H_U$ states:
$$ H_U = \sum_{k = 1}^\infty \lambda_k P_k ,$$
where $P_k$ is the orthogonal projector onto the finite--dimensional eigenvector space $V_k$ corresponding to the eigenvalue $\lambda_k$.  The unitary flow $U_t$ is given by:
$$ U_t = \sum_{k = 1}^\infty e^{-it\lambda_k / \hbar} P_k .$$
Hence all that remains to be done is to solve the eigenvalue problem:
\begin{equation}\label{eigenvalue_problem}
H_U \Psi = \lambda \Psi ,
\end{equation}
for the Schr\"odinger operator $H_U$.     We devote the rest of this section to provide an explicit formula to solve Eq. (\ref{eigenvalue_problem}).   

On each subinterval $I_\alpha = [a_\alpha, b_\alpha]$ the differential operator $H_\alpha = H|_{I_\alpha}$ takes the form 
of a Sturm--Liouville operator 
$$H_\alpha = - \frac{1}{W_\alpha} \frac{d}{dx} p_\alpha(x) \frac{d}{dx} + V_\alpha(x),$$ 
with smooth coefficients $W_\alpha = \frac{1}{2\sqrt{\eta_\alpha}} > 0$ (now and in what follows we are taking the physical constants $\hbar$ and $m$ equal to 1), $p_\alpha(x) = \frac{1}{\sqrt{\eta_\alpha}}$, hence the second order differential equation 
\begin{equation}\label{eigen_alpha}
 H_\alpha \Psi_\alpha = \lambda \Psi_{\alpha} 
 \end{equation}
has a two-dimensional linear space of solutions for each $\lambda$.   We shall denote a basis of solutions of such space as $\Psi_\alpha^\sigma$, $\sigma = 1,2$.  Notice that $\Psi_\alpha^\sigma$ depends differentially on $\lambda$.    Hence
a generic solution of Eq. (\ref{eigen_alpha}) takes the form:
$$ \Psi_\alpha = A_{\alpha,1} \Psi_{\alpha}^1 + A_{\alpha,2} \Psi_{\alpha}^2 .$$
Now it is clear that 
$$\psi_l(\alpha) = \Psi_\alpha (a_\alpha) = A_{\alpha,1} \psi_a^1 (\alpha) + A_{\alpha,2} \psi_a^2(\alpha) .$$
Hence:
$$ \psi_l = A_1 \circ \psi_a^1 + A_2 \circ \psi_a^2 ,$$
where $A_\sigma$, $\sigma = 1,2$, denotes the column vector
$$ A_\sigma = \left[  \begin{array}{c}  A_{1,\sigma} \\ \vdots \\ A_{n,\sigma} \end{array} \right] $$
and $\circ$ denotes the Hadamard product of two vectors, i.e., $(X\circ Y)_\alpha = X_\alpha Y_\alpha$ where $X,Y \in \mathbb{C}^n$.
We obtain similar expressions for $\psi_r$, $\dot{\psi}_l$ and $\dot{\psi}_r$.
With this notation Eqs. (\ref{asorey_1D}) become:
\begin{eqnarray}\label{asorey_eigen}
(\psi_l^1 - i\dot{\psi}_l^1)\circ A_1 + (\psi_l^2 - i\dot{\psi}_l^2)\circ A_2 &=& U^{11}(\psi_l^1 + i\dot{\psi}_l^1)\circ A_1 + U^{11}(\psi_l^2 + i\dot{\psi}_l^2)\circ A_2  \nonumber \\  +  U^{12}(\psi_r^1 + i\dot{\psi}_r^1)\circ A_1  &+& U^{12}(\psi_r^2 + i\dot{\psi}_r^2)\circ A_2
\end{eqnarray}
\begin{eqnarray*}
(\psi_r^1 - i\dot{\psi}_r^1)\circ A_1 + (\psi_r^2 - i\dot{\psi}_r^2)\circ A_2 &=& U^{21}(\psi_l^1 + i\dot{\psi}_l^1)\circ A_1 + U^{21}(\psi_l^2 + i\dot{\psi}_l^2)\circ A_2  \\ \nonumber +  U^{22}(\psi_r^1 + i\dot{\psi}_r^1)\circ A_1  &+& U^{22}(\psi_r^2 + i\dot{\psi}_r^2)\circ A_2
\end{eqnarray*}
It will be convenient to use the compact notation $\psi_{l\pm}^\sigma = \psi_{l}^\sigma \pm i \dot{\psi}_l^\sigma$, $\sigma = 1,2$, and similarly for  $\psi_{r\pm}^\sigma$.  

If $T$ is a $n\times n$ matrix and $X,Y$ arbitrary $n \times 1$ vectors, we will define $T \circ X$ as the unique matrix such that $(T\circ X)Y = T(X\circ Y)$.    The rows of the matrix $T\circ X$ are $T_i \circ X$ or alternatively, the columns of $T\circ X$ are given by $T^j X_j$ (no summation on $j$).   It can be proved easily that
\begin{equation}\label{hadamardT}
 T\circ X = T\circ (X \otimes \mathbf{1}) ,
 \end{equation}
where $\mathbf{1}$ is the vector whose components are all ones (i.e., the identity with respect to the Hadamard product $\circ$) and the Hadamard product of matrices in the r.h.s. of Eq. \eqref{hadamardT} is the trivial componentwise product of matrices. 
Using these results Eqs. (\ref{asorey_eigen}) become:
\begin{eqnarray*}
(I_n\circ \psi_{l-}^1 - U^{11}\circ \psi_{l+}^1 - U^{12}\circ \psi_{r+}^1  )A_1 + (I_n\circ \psi_{r-}^2 - U^{11}\circ \psi_{l+}^2 - U^{12}\circ \psi_{r+}^2  )A_2 &=& 0 \\
\nonumber  (I_n\circ \psi_{r-}^1 - U^{21}\circ \psi_{l+}^1 - U^{22}\circ \psi_{r+}^1  )A_1 + (I_n\circ \psi_{r-}^2 - U^{21}\circ \psi_{l+}^2 - U^{22}\circ \psi_{r+}^2  )A_2 &=& 0
 \end{eqnarray*}
Thus the previous equations define a linear system for the $2n$ unknowns $A_1$ and $A_2$.  They will have a non trivial solution if and only if the determinant of the $2n \times 2n$ matrix of coefficients $M(U, \lambda)$ below vanish:

$$ M(U, \lambda) = \left[  \begin{array}{cc}  I_n\circ \psi_{l-}^1 - U^{11}\circ \psi_{l+}^1 - U^{12}\circ \psi_{r+}^1 & I_n\circ \psi_{l-}^2 - U^{11}\circ \psi_{l+}^2 - U^{12}\circ \psi_{r+}^2  \\
I_n\circ \psi_{r-}^1 - U^{21}\circ \psi_{l+}^1 - U^{22}\circ \psi_{r+}^1  & I_n\circ \psi_{r-}^2 - U^{21}\circ \psi_{l+}^2 - U^{22}\circ \psi_{r+}^2  \end{array} \right] . $$

The fundamental matrix $M(U, \lambda)$ can be written in a more inspiring form using another operation naturally induced by the Hadamard and the usual product of matrices.    Thus, consider the $2n \times 2n$ matrix $U$ with the block structure of Eq. (\ref{block}) and the $2n \times 2$ matrices:
$$ [\psi_\pm^1 \mid \psi_\pm^2 ]  = \left[   \begin{array}{c|c} \psi_{l\pm}^1 & \psi_{l\pm}^2 \\  \hline \psi_{r\pm}^1 & \psi_{r\pm}^2 \end{array} \right] ,$$
then we define
$$ \left[  \begin{array}{c|c} U^{11} & U^{12} \\ \hline U^{21} & U^{22}  \end{array} \right]  \odot \left[   \begin{array}{c|c} \psi_{l\pm}^1 & \psi_{l\pm}^2 \\ \hline \psi_{r\pm}^1 & \psi_{r\pm}^2 \end{array} \right]  \equiv \left[   \begin{array}{c|c} U^{11}\circ \psi_{l\pm}^1 + U^{12} \circ \psi_{r\pm}^1& U^{11}\circ \psi_{l\pm}^2 + U^{12}\circ  \psi_{r\pm}^2 \\  \hline U^{21}\circ \psi_{l\pm}^1 + U^{22}\circ \psi_{r\pm}^1 & U^{21}\circ\psi_{l\pm}^2 + U^{22} \circ \psi_{r\pm}^2\end{array} \right] 
$$
and similarly
$$ I_{2n} \odot  [\psi_\pm^1 \mid \psi_\pm^2 ]  = \left[  \begin{array}{c|c}  I_n\circ \psi_{l\pm}^1 & I_n \circ \psi_{l\pm}^2 \\
\hline I_n\circ \psi_{r\pm}^1 & I_n \circ \psi_{r\pm}^2   \end{array}\right] .$$
Finally we conclude that the condition for the existence of coefficients $A_1$ and $A_2$ such that the solutions to the eigenvalue equation lie in the domain of the self--adjoint extension defined by $U$ is given by the vanishing of the spectral function $\Lambda_U (\lambda) = \det M(U, \lambda )$, that written with the notation introduced so far becomes:
\begin{equation}\label{spectral_function}
\Lambda_U(\lambda) =  \det ( I_{2n} \odot [\psi_-^1 \mid \psi_-^2 ]  - U \odot  [\psi_+^1 \mid \psi_+^2 ]  ) = 0 .
 \end{equation}
The zeros of the spectral function $\Lambda$ provide the eigenvalues $\lambda$ of the spectral problem Eq. (\ref{eigenvalue_problem}).

In the particular case $n =1$, the previous equation becomes greatly simplified, the Hadamard product becomes the usual scalar product and the Hadamard--matrix product is the usual product of matrices.  After some simple manipulations, the spectral function $\Lambda_U (\lambda)$ becomes:  
\begin{eqnarray}\label{spectral_1}
\Lambda_U (\lambda) &=& W(l,r,-,-) +U^{11}W(r,l,-,+)+U^{22}W(r,l,+,-) \\ &+& U^{12}W(r,r,-,+)  +U^{21}W(l,l,+,-)+\det U\cdot W(l,r,+,+)\notag
\end{eqnarray}
where we have used the notation:
$$ W(l,l,+,-) = \left|  \begin{array}{cc}  \psi_{l+}^1 & \psi_{l+}^2 \\ \psi_{l-}^1 & \psi_{l-}^2  \end{array} \right|, \quad W(l,r,+,-) = \left|  \begin{array}{cc}  \psi_{l+}^1 & \psi_{l+}^2 \\ \psi_{r-}^1 & \psi_{r-}^2  \end{array} \right|, \text{ etc.}$$
If we parametrize the unitary matrix $U \in U(2)$ as:
$$ U = e^{i\theta/2} \left[  \begin{array}{rc} \alpha & \beta \\ -\bar{\beta} & \bar{\alpha}  \end{array} \right], \quad | \alpha |^2 + | \beta |^2 = 1 ,$$
then the spectral function becomes:
\begin{eqnarray}\label{spectral_2}
\Lambda_U (\lambda) &=& W(l,r,-,-) +\alpha W(r,l,-,+)+\bar{\alpha}W(r,l,+,-)+\beta W(r,r,-,+)\\& &-\bar{\beta}W(l,l,+,-)+e^{i\theta} W(l,r,+,+)\notag
\end{eqnarray}
In particular if we consider a single interval $[0, 2\pi]$ with trivial Riemannian metric,  the fundamental solutions to the equation Eq. (\ref{eigen_alpha}) have the form $\Psi^1 = e^{i\sqrt{2\lambda}x}$ and $\Psi^2 = e^{-i\sqrt{2\lambda}x}$.  Then we have:
\begin{eqnarray*} 
W(l,r,-,-) &=& -2i(1+2\lambda)\sin(2\pi\sqrt{2\lambda})-4\sqrt{2\lambda}\cos (2\pi\sqrt{2\lambda}), \\ 
W(l,l,+,-) &=& 4\sqrt{2\lambda}, \\
W(r,r,-,+) &=& 4\sqrt{2\lambda}, \\
W(r,l,-,+) &=& 2i(1-2\lambda)\sin(2\pi\sqrt{2\lambda}),\\
W(r,l,+,-) &=& 2i(1-2\lambda)\sin(2\pi\sqrt{2\lambda}),\\
W(l,r,+,+) &=& -2i(1+2\lambda)\sin(2\pi\sqrt{2\lambda})+4\sqrt{2\lambda}\cos (2\pi\sqrt{2\lambda}),
\end{eqnarray*}
and finally we obtain the spectral function $\Lambda_U(\lambda)$: 
\begin{eqnarray*} \Lambda_U( \lambda ) &=& -2i(1+2\lambda)\sin(2\pi\sqrt{2\lambda})-4\sqrt{2\lambda}\cos (2\pi\sqrt{2\lambda}) +4i\operatorname{Re}(\alpha) (1-2\lambda)\sin(2\pi\sqrt{2\lambda})\\
& &+  8\operatorname{Im}(\beta) \sqrt{2\lambda}+ e^{i\theta}[-2i(1+2\lambda)\sin(2\pi\sqrt{2\lambda})+4\sqrt{2\lambda}\cos (2\pi\sqrt{2\lambda})].
\end{eqnarray*}

\subsubsection{Quantum wires and quantum Kirchhoff's law}

The discussion in the previous section allows to discuss a large variety of self--adjoint extensions
of 1D systems whose original configuration space  $\Omega = \sqcup_{\alpha=1}^n [a_\alpha, b_\alpha]$ consist of a disjoint union of closed intervals in $\mathbb{R}$.
It is clear that some boundary conditions $U\in U(2n)$ will lead to a quantum system with configuration space a 1D graph whose edges will be the boundary points $\{ a_1, b_1, \ldots, a_n,b_n\}$ of the original $\Omega$ identified among themselves according to $U$ and with links $[a_\alpha, b_\alpha]$. 

We will say that the self--adjoint extension determined by a unitary operator $U$ in $U(2n)$ defines a quantum wire made of the links $[a_\alpha, b_\alpha]$ if there exists a permutation $\sigma$ of $2n$ elements such that Asorey's condition for $U$ implies that $\psi (x_\alpha) = e^{i\beta_\alpha}\psi (x_{\sigma(\alpha)}$, and $x_\alpha$ such that $x_\alpha = a_\alpha$ if $\alpha = 1, \ldots, n$, or $x_\alpha = b_{\alpha-n}$ if $\alpha = n+1, \ldots, 2n$.

Notice that Asorey's condition:
$$\psi -i \dot{\psi} = U(\psi + i \dot{\psi}) $$
guarantees that the evolution of the quantum system is unitary, i.e., if we consider for instance a wave packet localized in some interval $[a_k,b_k]$ at a given time, after a while, the wave packet will have spread out across the edges of the circuit, however the probability amplitudes will be preserved.  In this sense we may consider Asorey's equation above as the quantum analogue of Kirchhoff's circuit laws, or quantum Kirchhoff's laws for a quantum wire.


\newpage

\section{Self-adjoint extensions and semiclassical boundary conditions}\label{section:Feyman}

\subsection{Classical boundary conditions and path integrals}
The action principle governs the classical and quantum dynamics 
of unconstrained systems. The classical dynamics is given by stationary
trajectories from the variational action principle
$$\frac{\delta S({\bf x}(t))}{\delta {\bf x}(t)}\Big |_{{\bf x}(t)={\bf x}_{\rm cl}(t)}=0,$$
 and the quantum dynamics 
is automatically implemented in the path integral formalism by the weight that
the classical action provides for classical trajectories
\begin{equation}\label{path}
\displaystyle 
K_T(x,y)={\rm e}^{-TH}(x,y)= 
\int_{\tiny{\begin{array}{c} x(T)=y \\ x(0)=x\end{array}}} \delta[x(t)] \,\, {\rm e}^{-\frac12\int_0^T \, S({\bf x}(t)) dt}.
\end{equation}
However,
for particles evolving  in a bounded domain $\Omega\subset \R^n$ the variational problem is
not uniquely defined. It is necessary to specify the evolution of the particles
after reaching the  boundary. On the other hand, the  very nature of the physical 
boundary imposes some constraints on the trajectories contributing
to the path integral. 

In fact,  the boundary imposes more severe constraints on the
classical dynamics than to the quantum one. This is due to the point-like
nature of the particle which requires that after reaching the boundary the
individual particle has to emerge back either at  the same point or at a  
different one of the boundary. The only freedom the particle has is where it 
emerges back  and with which momentum it emerges back. The emergence 
of the particle at a different point covers  the possibility that the domain
can be  folded and glued at the boundary  giving rise to
non-trivial topologies. In summary, the classical boundary
conditions are given by two maps:  an isometry of the boundary
$$\alpha:\partial\Omega\to\partial\Omega $$
{\rm and}
a positive density function
$$ \rho:\partial\Omega \to {\rm \R^+}$$
which specify the change of position and normal component of 
momentum of the trajectory of the particle upon reaching the boundary. 
The isometry $\alpha$ encodes the possible geometry and
topology generated by the folding 
of the boundary and the function $\rho$  is associated to 
the reflectivity (transparency or stickiness) properties of the boundary.
Once these two functions are specified the classical variational problem is
restricted to trajectories which satisfy the boundary conditions \cite{aim2}:
\begin{equation}
{\bf {\bf x}}(t_+)=\alpha({\bf x}(t_-)),
\label{cero}
\end{equation}
\begin{equation}
\boldsymbol{\nu}({\bf x}(t_+)))\cdot\dot 
{\bf x}(t_+) =-\rho({\bf x}(t_-))\, \boldsymbol{\nu}({\bf x}(t_-))\cdot\dot {\bf x}(t_-) 
\label{uno}
\end{equation}
and
\begin{eqnarray}
\label{dos}
 &&
\alpha_\ast( \dot {\bf x}(t_-) -  [\boldsymbol{\nu}({\bf x}(t_-))\cdot\dot {\bf
  x}(t_-)] \, \boldsymbol{\nu}({\bf x}(t_-))
\nonumber\\
&&\phantom{adadfa}
=\dot {\bf x}(t_+)- [\boldsymbol{\nu}({\bf x}(t_+))\cdot \dot {\bf x}(t_+)]  \, \boldsymbol{\nu}({\bf
   x}(t_+)
\end{eqnarray}
for any $t$ such that  ${\bf  x}(t)\in \partial\Omega$, where $\boldsymbol{\nu}$ denotes the exterior normal derivative
at the boundary  $\partial\Omega$ and $$ {\bf x}(t_\pm)=\lim_{s\to0} {\bf  x}(t\pm s).$$

This definition of classical boundary conditions is motivated by
the standard physical heuristic interpretation of boundary
conditions. Linear momentum is not conserved because it is 
partially or totally absorbed by the boundary. 
The major constraints on the choice of 
boundary conditions come first by the preservation of the notion
of point-like particle which requires that any trajectory which reaches
the boundary has to emerge as a single trajectory from the same 
boundary. The second requirement concerning the permitted 
changes of linear momentum at the boundary have to be compatible
with the action principle. This implies that classical trajectories 
are determined by the stationary points of the classical action, which for simplicity
we assume to be that of a free particle
$$S({\bf x})=\int dt\, g_{ij}{\,\dot x^i(t)}{\dot x^j(t)}.$$
The variational principle yields the celebrated Euler-Lagrange motion  
equations $\ddot{\bf x}(t)=0$ provided that the boundary term
\begin{equation}
\sum_{m=1}^N\biggl[ \delta{\bf x}(t_m^+)\cdot\dot 
{\bf x}(t_m^+)- \delta {\bf x}(t_m^-)\cdot\dot {\bf x}(t_m^-) \biggr]
\end{equation}
vanishes, where the sum is over all points  $t_m$  where the trajectories
reach the boundary. The simpler way of fulfilling this requirement is
by imposing the vanishing of each individual term on the sum.
These conditions reduce to the boundary conditions (\ref{uno})(\ref{dos}).
provided that the 
permitted variations are tangent to the boundary. In this case the normal
component of $\delta {\bf x}(t_m)$ vanishes, i.e. the  points of trajectories
which reach the boundary are only allowed to move along the boundary.
This condition is reminiscent of Dirichlet condition for D-branes in 
string theory. The analogue of Neumann boundary conditions is
senseless for point-like particles, because it will require to consider 
only trajectories which reach the boundary with null linear momentum.
  
Simple but interesting types of boundary conditions already arise in the
Sturm-Liouville problem, $\Omega=[0,1]$. In such a case
the boundary of the configuration space is a discrete  two-points set, 
$\partial\Omega=\{0,1\}$. Examples of classical boundary conditions in
such a case are \cite{gift,aim2}:

\begin{enumerate}

\item[i)] $\phantom{ii}$ Neumann  (total absorption): $\alpha=I$, $\rho(0) = \rho(1)= \infty$.

\item[ii)] $\phantom{i}$Dirichlet (total reflection): $\alpha=I$, $\rho(0) = \rho(1) = 1$.

\item[iii)] Periodic:  $\alpha(0)=1$, $\alpha(1)=0$, $\rho(0) = \rho(1) = 1$.

\item[iv)]  Quasi-periodic:  $\alpha(0)=1$, $\alpha(1)=0$,
$\rho(0) = \rho(1) = \epsilon$.

\end{enumerate}

All these classical  boundary conditions have a quantum counterpart which
can be derived from the Feynmann's path integral approach \cite{Feynman, Feynman-Higgs,FeyKac}.


\subsection{Path integrals and quantum boundary conditions}

The quantum implementation of classical boundary conditions is straightforward 
via the path integral method. The only paths to be considered in the Feynman's
path integral  \cite{Feynman, Feynman-Higgs,FeyKac} given by Eq. \eqref{path}
are those that satisfy the classical boundary conditions,
The corresponding quantum boundary operators
are
$$\varphi(\alpha(x)) +i \dot\varphi(\alpha(x)) = U \left[\varphi(\alpha(x)) +i \dot\varphi(\alpha(x))\right]=
-\frac{1-\rho(x)+i}{1-\rho(x)-i}\left[
\varphi(x)-i\dot \varphi(x)\right].$$
In the one--dimensional case of Sturm-Liouville problem the space of
quantum boundary conditions   is a four--dimensional Lie  group $U(2)$,
whereas the space of classical boundary conditions is the union of two 
disconnected two--dimensional manifolds,

\begin{eqnarray}
\label{doscon}
{\cal M}_1 &=&\{\psi\in L^2([0,1]), \psi(0)=({1-\rho_1})\dot \psi(1), \psi(1)=({1-\rho_0})\dot
\psi(0)  \} \\
\label{doscond}
{\cal M}_0 &=&\{\psi\in L^2([0,1]), \psi(0)=({1-\rho_0})\dot \psi(0), \psi(1)=({1-\rho_1})\dot \psi(1)  \} 
\end{eqnarray}

Thus, the  Feynman path integral approach
does not  cover the whole  set of boundary conditions.
 One of the reasons behind the failure of the path integral picture is 
the single valued nature of trajectories. Many conditions 
describe a scattering by a singular
potential sitting on the boundary. 
There are two different  types of quantum interactions with the boundary: 
reflection and diffraction. A classical
description of the phenomena without including a potential term will
require a splitting of the ongoing classical trajectory into two outgoing paths
one pointing forward and another one backwards. This picture destroys
the pure  point-like particle approach and leads to multivalued trajectories
which dramatically changes  the simple Feynman's description of path
integrals. Furthermore, there are boundary conditions where one single
trajectory upon reaching the boundary has to be split into an infinite
amount of outgoing trajectories. This behaviour can be explicitly
pointed out by noticing that the quantum evolution of a narrow
wave packet evolves backward after being scattered by  boundary 
as a quite widespread wave packet emerging from all points of the boundary.

In order to have a path integral description of all boundary
conditions we need to incorporate some random behaviour 
for the trajectories reaching the boundary  and complex phases
for those trajectories.  This is possible because the wave functions
are complex and the evolution operator involves complex amplitudes.
Although in this way we are able to describe any type of
unitary evolution in the bounded domain the method goes  far beyond
Feynman's pure action approach.

The prescription is quite involved and proceeds by considering 
instead of the Euclidean time evolution propagator $K_T$  the
resolvent operator $C_z$ of the Hamiltonian 
\be
\displaystyle C_z (x,y)={( z \I + H )^{-1}}(x,y)=\int_0^ \infty\frac{dT}{T} {\rm e}^{-zT} K_T(x,y). 
\ee

The Euclidean time propagator can be recovered from the resolvent by means of the following countour integral
\be
K_T(x,y)= \frac{1}{2\pi i}\oint \, C_z(x,y) {\rm e}^{zT} dz
\ee
along a contour which encloses the spectrum of $H$ on the real axis.

Boundary conditions can be  easily implemented into  the resolvent, whereas as we
shall see, the implementation in the Euclidean time propagator is much harder. 
Let us consider a fixed boundary condition, e.g.
the Neumann boundary conditions $U_0=\I$, and  consider the corresponding Hamiltonian $H_0$ as a
 background selfadjoint operator. The   selfadjoint extension of $H$ defined on the domain 
\be
 i(\I+U)\dot{\varphi}=(\I - U){\varphi}
\ee
by the unitary operator $U$
has a resolvent given by Krein's formula \cite{krein2ref}
\begin{equation}
\displaystyle C^U_z (x,y)= C^0_z (x,y) - \int_{\partial\Omega} dw\, \int_{\partial\Omega} dw'\,  C^0_z (x,w) R_z^U(w,w') C^0_z(w',y)
\label{krein2}
\end{equation}
 where $R^U$ is the operator of $ L^2(\partial\Omega) $ defined by 
\be
R_z^U=((\I - U) C^0_z -i (I+U))^{-1}(\I - U).
\ee

A similar formula could be obtained choosing another boundary condition as background
boundary condition instead of Neumann's condition.

The inverse transform permits to recover a formula for the propagator kernel of the type
\be
K_T(x,y)= K^0_T(x,y)- \frac{1}{2\pi i}\oint \, dz \,{\rm e}^{z T} \int_{\partial\Omega} dw\, \int_{\partial\Omega}
 dw'\,  C^0_z (x,w) R_z^U(w,w') C^0_z(w',y).
\ee
It is easy to rewrite  $K^0_T(x,y)$ as a path integral as in  (\ref{path})
restricting the trajectories to the interior of the domain $\Omega$ and counting twice the trajectories hitting
the boundary  ${\partial\Omega}$. However, in general,  the kernel  $K_T(x,y)$ cannot be rewritten
in terms of a path integral.  Only for a few boundary conditions the reduction can be achieved, but for generic
boundary conditions the kernel $K_T(x,y)$ has to be considered as a genuine boundary condition kernel
containing information about the  boundary jumps amplitudes and phases associated to the different
trajectories hitting the boundary.  The complex structure of this kernel reduces the utility of the path integral
approach and points out the behaviour of the boundary as a genuine quantum device.
This behaviour can be explicitly
pointed out by noticing that under certain boundary conditions  the quantum evolution of a narrow
wave packet  is scattered backward by the boundary 
as a quite widespread wave packet emerging from all points of the boundary.
However, there are cases \cite{amm} where this kernel adopts a simple form and the path integral approach
can be formulated in a very explicit way. In particular, for  Dirichlet boundary conditions  $U=-\I$,
\be
R_z^U=(C^0_z )^{-1}
\ee
and 
\be
C^D_z(x,y) = C^0_z(x,y)  - \int_{\partial\Omega} dw\, \int_{\partial\Omega} dw'\,  C^0_z (x,w) (C^0_z)^{-1}(w,w') C^0_z(w',y) 
\ee
which leads to a propagator  kernel given by the path integral  (\ref{path}) but restricted to paths which do not reach the
boundary  $\partial\Omega$.

The  method of images also permits   to  use unconstrained 
path integrals  to  describe systems with
non-trivial boundary conditions \cite{groschea, amm}.
However, in the case of higher dimensions   the method is not useful in the presence of  non symmetric boundaries 
and  the path integral cannot be
defined by a  simple prescription as in the Feynman original formulation.

However, the method is only restricted to similar  cases and  for generic  boundary conditions a closed form 
expression is not available. 
In higher dimensions the number of boundary conditions for
which the path integral method is useful to describe the
quantum evolutions is even more  limited. 

In summary, it is possible to generalise the   Feynman  approach
to describe the dynamics of quantum systems constrained
to bounded domains. For some  boundary conditions 
the modification of the path integral formula includes a phase factor
or a boundary weight for the trajectories which reach the boundary.
However, the method becomes not useful for generic boundary conditions
because the prescription becomes very intricate.


\newpage

\section{The space of self-adjoint elliptic boundary conditions}\label{section:Grassmann}


\subsection{The elliptic Grassmannian}\label{section:elliptic}

In the previous sections it was shown that self-adjoint extensions
of Dirac and Laplace operators are defined
by a family of subspaces (unitary or self-adjoint respect.) of the Hilbert space
of boundary data for each operator.  However we have not
considered whether or not the extensions $\Dsl_W$ obtained in this way for a Dirac or
Laplacian operator $\Dsl$, define elliptic operators or not, i.e., if the
boundary data given by the subspace $W$ constitute an elliptic boundary problem for
$\Dsl$.  

This is a crucial point because if the extensions considered were not elliptic, this
could affect dramatically the structure of the spectrum (for instance, loosing 
its discreteness), hence affecting the physical properties of the system 
in unwanted ways.  Thus, looking for elliptic extensions of the operator
$\Dsl$ is a way of restricting to a physical sector of the possible theories with 
`good' spectral properties.  

As it was commented in the introduction, the modern theory of elliptic boundary problems for Dirac operators on closed manifolds was developed in the seminal series of papers by Atiyah, Patodi and Singer \cite{At68} and, on manifolds with boundary, \cite{At75}.  The boundary conditions introduced there to study the index theorem for
Dirac operators on even-dimensional spin manifolds with boundary are
nowadays called Atiyah-Patodi-Singer boundary conditions (APS BC's).    
The crucial observation there was
that global boundary conditions were needed in order to obtain an elliptic
problem, contrary to the situation for second order operators
where (local) Dirichlet conditions, for instance, are elliptic.   Such boundary
conditions were extended to include also odd dimensional spin manifolds
with boundary (see \cite{Da94} and references therein).  

Later on, E. Witten (\cite{Wi88}, \S II), pointed out the link between
elliptic boundary conditions for the Dirac operator on 2 dimensions and 
the infinite dimensional Grassmannian manifold.  The infinite
dimensional Grassmannian was introduced previously in the analysis of
integrable hierarchies and discussed extensively by
Segal and Wilson (see \cite{Se85} and references therein).   More
recently Schwarz and Friedlander \cite{Fr97} have
extended Witten's analysis to arbitrary elliptic operators on arbitrary
dimensional manifolds with boundary.  The particular analysis for Dirac
operators we are using follows from \cite{At75} but it can be extended also to higher order operators.  More
comments on this will be found later on. 

The basic idea is that the space of zero modes of a Dirac (or
Laplace) operator $\Dsl$, $\ker \Dsl = \set{\xi\in \Gamma (S) \mid \Dsl\xi = 0}$,
induces a subspace at the boundary $b(\ker \Dsl)$ that in general will be
infinite-dimensional, hence a way to restore ellipticity will be to
restrict to a subspace such that the kernel and cokernel of the
operator defined on this subspace will be finite dimensional.  We will perform
such analysis for the Dirac and Laplace operators and we will refer to
\cite{Fr97} for the general case.

The analysis of such projection requires a detailed description of
solutions near the boundary.   Assuming that the
Riemannian metric on $\Omega$ is a product near the boundary
$\partial\Omega$, we can decompose the operator $\Dsl$ in a collar
neighborhood  $U_\epsilon (-\epsilon,0]\times \partial \Omega$ of the boundary as (Eq. \eqref{DpartialOmega}):
$$
\Dsl = \nu\cdot (\partial_\nu + \Dsl_{\partial\Omega}) \, ,
$$
where $\Dsl_{\partial\Omega}$ is the Dirac operator on the boundary bundle
$S_\pO $.   A natural set of boundary conditions for our problem will be constructed as
follows.

Recall from \S \ref{squaring} that $\Dsl_{\partial\Omega}$ is an
elliptic and, because $\partial\Omega$ is closed, essentially self-adjoint differential operator on
$\partial\Omega$ that anticommutes with $J$, i.e.
$\Dsl_{\partial\Omega} J = - J \Dsl_{\partial \Omega}$.  The Dirac Laplacian
$\Dsl_{\partial\Omega}^2$ is a non-negative self-adjoint elliptic operator
with a real discrete spectrum $\spec \Dsl_{\partial\Omega}^2 = \set{\mu_k
\mid 0 \leq \mu_0 < \mu_1 < \cdots }$ and such that the
eigenspaces $E(\lambda_k)$ are finite dimensional, $E(\mu_k ) =
\set{\phi_k \in \sp H_D \mid \Dsl_{\partial\Omega}^2\phi_k = \mu_k
\phi_k}$.  The kernel $K$ of $\Dsl_{\partial\Omega}$ agrees with $\ker
\Dsl_{\partial\Omega}^2$ and with $E(0)$.  We have thus the following
orthogonal decomposition of $\sp H_D$,
$$ 
\sp H_D = \bigoplus_{k=0}^\infty E(\mu_k) = K \oplus
\bigoplus_{k=1}^\infty \left( E_+(\lambda_k) \oplus E_-(\lambda_k) \right) \, ,
$$
where we have set $E(\mu_k) = E_+(\lambda_k) \oplus E_-(\lambda_k)$, with $\lambda_{\pm k} = \pm \sqrt{\mu_k}$.
Then the spectrum of $\Dsl_{\partial \Omega}$ is given by $\{Ê\pm \lambda_k \}$.
Moreover if we denote by $\spec_+$ the non-negative spectrum of $\Dsl$, and by $\spec_-$ its
negative spectrum, then we may write
$$ 
\mathcal{H}_D = \bigoplus_{\lambda \in \spec_+} E_+(\lambda) \oplus \bigoplus_{\lambda \in \spec_-} E_-(\lambda) \, .
$$ 
The subspaces $T_+ = \bigoplus_{\lambda \in \spec_+} E_+(\lambda)$ and $T_- = \bigoplus_{\lambda \in \spec_-} E_-(\lambda)$
define a polarisation of $\mathcal{H}_D$.  Denoting by $P_\pm \mathcal{H}_ \to T_\pm$ the corresponding orthogonal
projectors, the celebrated Atiyah-Patodi-Singer boundary conditions (APS BC) are given by the (non-local)
condition:
$$
P_+b(\xi) = 0 \, , \qquad \xi \in H^1(\Omega, S) \, .
$$
In other words, APS BC select the negative spectrum of the boundary Dirac's operator $\Dsl_\pO$.

On the other hand the polarization $\sp H_D = \sp H_+ \oplus \sp H_-$
defined by the compatible complex structure $J_D$, $J_D (\sp H_\pm) =
\mp i\sp H_\pm$, induces a decomposition of the eigenspaces
$E(\lambda_k)$ as
$$ E(\lambda_k ) = E_\pm (\lambda_k) = E(\lambda_k) \cap \sp H_\pm .$$
Moreover, $\Dsl_{\partial\Omega}$ restricts to a map $\Dsl_k =
\Dsl_{\partial\Omega} Ê\mid_{E(\lambda_k)} \colon E(\lambda_k) \to
E(\lambda_k)$ and because it anticommutes with $J$, we have that $\Dsl_k
\colon E_\pm (\lambda_k) \to E_\mp (\lambda_k)$, thus $\Dsl_k$ has the
following block structure,
$$ \Dsl_k = \matriz{c|c}{0 & \Dsl_k^+ \\ \hline \Dsl_k^- & 0 } ,$$
and because $\Dsl_k$ is self-adjoint, $(\Dsl_k^-)^\dagger = \Dsl_k^+$.  On the
other hand $\Dsl_k^2 = \Dsl_{\partial\Omega}^2\mid_{E(\lambda_k)} = \lambda_k
I$, hence the spectrum of $\Dsl_k$ on $E(\lambda_k)$ consists of
$\pm\sqrt{\lambda_k}$.   The operator $\Dsl_k$ is invertible in
$E(\lambda_k)$ for $k\geq 1$, hence $\dim E_+ (\lambda_k) = \dim E_-
(\lambda_k)$.  Moreover $K = K_+ \oplus K_-$, and $\dim K_+ = \dim
K_-$.  In fact the index of the operator $\Dsl_0^+$ is zero because
$\partial \Omega$ is cobordant to $\emptyset$ and the index is cobordant
invariant.   We can choose an orthonormal basis
$\phi_{k,\alpha}^\pm \in E_\pm (\lambda_k)$, $\alpha = 1,\ldots, \dim
E_\pm (\lambda_k)$,  such that
$$ \Dsl_k \phi_{k,\alpha}^\pm = \pm i \sqrt{\lambda_k} \phi_{k,\alpha}^\mp
.$$
The Cayley transform discussed in \S \ref{Cayley_D} diagonalizes the operators
$\Dsl_k$, and we have 
$$\xi_{k,\alpha}^\pm = \phi_{k,\alpha}^+ \pm i \phi_{k,\alpha}^- \in \sp
L_{\pm} ,$$
and
$$ \Dsl_k \xi_{k,\alpha}^\pm = \pm \sqrt{\lambda_k} \xi_{k,\alpha}^\pm .$$
Then, it is clear that $b(\ker \Dsl_{\mathrm{APS}}) = \sp L_+$.  Moreover the orthogonal
projectors $\pr_\pm \colon\sp  H_D \to \sp L_\pm$ are pseudodifferential
operators whose complete symbol depends only on the coefficients of $\Dsl$. 
Thus, elliptic boundary conditions will be defined by subspaces $W\subset
\sp H_D$ such that $W\cap \sp L_+$ will be finite dimensional (notice
that such intersection corresponds to solutions of $\Dsl\xi = 0$ with
boundary values on $W$), this means that the projection
$\pr_+\mid_W$ will have a finite dimensional kernel.  Moreover, the
cokernel of $\pr_+$ will have to be finite-dimensional if
$\Dsl^\dagger$ is elliptic too.  Finally, if the extension $\Dsl_W$ is
elliptic, then there will exists left and right parametrics for it, and
this will imply that the projection $\pr_-\mid_W$ will have to be
compact operators.  It is sometimes convenient to restrict the last
assumption to operators of Hilbert-Schmidt class.   We can conclude that
the set of closed subspaces $W$ of $\sp H_D$ verifying the
following conditions:

\begin{enumerate}

\item[i.]  $\pr_+\mid_W \colon W \to \sp L_-$ is Fredholm.

\item[ii.] $\pr_-\mid_W \colon W \to sp L_+$ is Hilbert-Schmidt,

\end{enumerate}

determines all elliptic extensions of the Dirac operator $\Dsl$.  Such
space will be called the elliptic infinite dimensional Grassmannian of
$\Dsl$, or elliptic Grassmannian for short and will be denoted by $\mathrm{Gr}(\sp L_-, \sp L_+)$.   

The elliptic Grassmannian can be constructed also in terms of the
polarization $\sp H_+ \oplus \sp H_-$ instead of $\sp L_+ \oplus \sp
L_-$.   This is the approach taken for instance in \cite{Da94}.  In such
case, we will relate self-adjoint extensions of $\Dsl$ with unitary
operators $U\colon \sp H_+ \to \sp H_-$, hence elliptic boundary
conditions correspond to unitary operators $U$ such that the projection
from its graph to $\sp H_+$ would be Fredholm and the projection onto
$\sp H_-$ would be Hilbert-Schmidt.  It is obvious that the Cayley
transform $C$ defines a one-to-one map from $\mathrm{Gr}(\sp H_-,\sp H_+)$ into
$\mathrm{Gr}(\sp L_-, \sp L_+)$ (the map is actually a diffeomorphism, see
below), but it is important to keep in mind that the objects in the two
realizations of the Grassmannian are different.  

We will call in what
follows the elliptic boundary conditions defined by points in the
elliptic Grassmannian, generalized APS boundary conditions.   It is
important to point it out here that a parallel discussion takes place
for the discussion of elliptic extensions of the Laplacian operator. 
In fact replacing $\sp H_D$ by $\sp H_B$, $J_D$ by $J_B$ and
considering the boundary Laplacian $-\Delta_{\partial\Omega}$ we will
obtain that the elliptic extensions of $-\Delta$ are in one-to-one
correspondence with points in the elliptic Grassmanniand $\mathrm{Gr}(\sp L_-,\sp
L_+)$, and similarly, by using the Cayley
transform, with points in the elliptic Grassmannian $\mathrm{Gr}(\sp H_-, \sp
H_+)$.  Hence, in what follows we will omit the subindex in the notation
of the different boundary Hilbert spaces and operators and we will refer
simultaneously to the Dirac and/or Laplacian operators, and the elliptic
Grassmannian defining their elliptic extensions will be denoted simply
by $\mathbf{Gr}$.

The elliptic infinite dimensional Grassmannian has an important
geometrical and topological structure.  We must recall first (see for
instance Pressley and Segal \cite{Pr86} for more details) that $\mathbf{Gr}$ is
a smooth manifold whose tangent space at the point $W$ is given by
the Hilbert space of Hilbert-Schmidt operators $\sp J_2 (\sp L_-,\sp
L_+)$, from $\sp L_-$ to $\sp L_+$.  The group of linear continuous
invertible operators $GL (\sp H)$ does not act on $\mathbf{Gr}$ but only a
subgroup of it, the restricted general linear group $GL_\mathrm{res} (\sp H)$,
which defines the restricted unitary group $U_\mathrm{res}(\sp H) =
GL_\mathrm{res}(\sp H)Ê\cap U(\sp H)$.   The groups $GL(\sp H)$ and $U(\sp H)$
are contractible but $GL_\mathrm{res}(\sp H)$ and $U_\mathrm{res}(\sp H)$ are not.  The manifold $\mathbf{Gr}$ is
not connected and is decomposed in its connected components defined by
the virtual dimension of their points which is simply the index of the
Fredholm operator $pr_+\mid_W$, then, $\mathbf{Gr} = \cup_{k\in \Z} \mathrm{Gr}^{(k)}$. 

The Grassmannian $\mathrm{Gr}(\sp L_-, \sp L_+)$ carries a natural K\"ahler
structure defined by the hermitian structure given by
$$ h_W (\dot A, \dot B ) = \Tr \dot A^\dagger \dot B ,$$
where $\dot A, \dot B\in T_W \mathrm{Gr}(\sp L_-, \sp L_+)$ are Hilbert-Schmidt
operators from $\sp L_-$ to $\sp L_+$.  The imaginary part defines a
canonical symplectic structure $\omega$,
\begin{equation}\label{symp} \omega_W (\dot A, \dot B) = -\frac{i}{2} \Tr (\dot
A^\dagger \dot B - \dot B^\dagger \dot A ) \, .
\end{equation}
The elliptic Grassmannian is quasicompact in the sense that the only holomorphic
functions are constant.

\subsection{The space of self--adjoint extensions: the self-adjoint
Grassmannian and elliptic self-adjoint extensions}

We have characterized the self-adjoint extensions of a given Dirac or
Laplacian operators as the space $\sp M$ of self-adjoint subspaces of a
boundary Hilbert space $\sp H$ carrying a polarization $\sp H = \sp L_-
\oplus \sp L_+$.   On the other hand, we have seen in the previous
section that the Grassmannian $\mathrm{Gr}(\sp L_-, \sp L_+)$ describes the
elliptic extensions of such operator.  Then, the elliptic self-adjoint
extensions of the given operators will be given by the intersection
$\sp M \cap \mathrm{Gr}(\sp L_-, \sp L_+)$.  This space will be called the
elliptic self-adjoint Grassmannian or the self-adjoint Grassmannian for
short.  It is possible to see that the
self-adjoint Grassmannian is a smooth submanifold of the Grassmannian
and decomposes in connected components which are submanifolds of the
components $\mathrm{Gr}^{(k)}$.  We will denote the elliptic self-adjoint
Grassmannian as $\sp M_{\ellip}$.  The most relevant topological and
geometrical aspects of $\sp M_{\ellip}$ are contained in the following
theorem.

\begin{theorem}\label{Ellip_Lag}  The elliptic self-adjoint Grassmannian $\sp M_{\ellip}$ is a Lagrangian
submanifold of the infinite dimensional Grassmannian $\mathbf{Gr}$.
\end{theorem}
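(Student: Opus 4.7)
The plan is to establish the Lagrangian property locally at each point $W \in \sp M_\ellip$ by identifying $T_W \sp M_\ellip$ as a closed real subspace of the complex tangent $T_W \mathbf{Gr}$, verifying it is $\omega$-isotropic, and showing it coincides with its own symplectic annihilator (the infinite-dimensional analogue of being half-dimensional).

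First, I would fix local coordinates. By Thm. \ref{sa_Dirac} (and its Laplacian analogue Thm. \ref{maximal}), any $W \in \sp M_\ellip$ is the graph of a unitary $U \colon \sp L_+ \to \sp L_-$, and the Hilbert--Schmidt condition defining $\mathbf{Gr}$ singles out the restricted unitary group inside $\mathcal{U}(\sp L_+, \sp L_-)$. A chart on $\mathbf{Gr}$ near $W$ sends $\dot U \in \sp J_2(\sp L_+, \sp L_-) \mapsto \mathrm{graph}(U+\dot U)$, so $T_W \mathbf{Gr} \cong \sp J_2(\sp L_+, \sp L_-)$; after the isometric change of variables $\dot U \mapsto B = U^\dagger \dot U \in \sp J_2(\sp L_+)$, the symplectic form (\ref{symp}) pulls back to
\[
\omega(B_1, B_2) = \mathrm{Im}\,\Tr(B_1^\dagger B_2),
\]
the imaginary part of the Hilbert--Schmidt Hermitian product on $\sp J_2(\sp L_+)$.

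Second, I would read off $T_W \sp M_\ellip$ from the unitarity constraint: $(U + t\dot U)^\dagger(U + t\dot U) = I + o(t)$ linearizes to $B + B^\dagger = 0$, so the tangent space corresponds to the closed real subspace
\[
\mathcal{A} = \{ B \in \sp J_2(\sp L_+) \mid B^\dagger = -B \}
\]
of anti-Hermitian Hilbert--Schmidt operators. Isotropy is then a one-line trace identity: for $B_1, B_2 \in \mathcal{A}$ one has $\Tr(B_1^\dagger B_2) = -\Tr(B_1 B_2)$, which is real by cyclicity together with $B_j^\dagger = -B_j$, so its imaginary part vanishes.

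The maximality $\mathcal{A}^\omega = \mathcal{A}$ is the step carrying the substance. Testing against $B = iH$ with $H$ self-adjoint Hilbert--Schmidt reduces the condition $\omega(B, C) = 0$ to $\mathrm{Re}\,\Tr(HC) = 0$; splitting $C = C_H + C_A$ into Hermitian and anti-Hermitian parts, $\Tr(HC_A)$ is purely imaginary (by the same trace identity) while $\Tr(HC_H)$ is real, so $\mathrm{Re}\,\Tr(HC) = \Tr(HC_H)$. Choosing $H = C_H$ forces $\| C_H \|_{\mathrm{HS}}^2 = \Tr(C_H^2) = 0$, i.e. $C \in \mathcal{A}$. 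The principal obstacle is not this pointwise linear-algebraic computation but the global submanifold statement behind it: one must know that $\sp M_\ellip \hookrightarrow \mathbf{Gr}$ is a smooth embedding whose tangent space is $\mathcal{A}$ uniformly at every point. This ultimately rests on identifying $\sp M_\ellip$, via the Cayley transforms of Sects. \ref{Cayley_D}--\ref{section:Cayley}, with (components of) the restricted unitary group $U_\mathrm{res}$ carrying the standard Banach--Lie structure compatible with the Hilbert--Schmidt charts of $\mathbf{Gr}$; granted this, the preceding computation applies uniformly and yields the Lagrangian property.
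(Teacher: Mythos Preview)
Your proof is correct and follows essentially the same route as the paper's. The only cosmetic difference is the choice of picture: the paper works in the self-adjoint realisation $\mathrm{Gr}(\sp L_-,\sp L_+)$, where tangent vectors to $\sp M_\ellip$ are \emph{self-adjoint} Hilbert--Schmidt operators, uses homogeneity to reduce to the point $W=0$, and then runs the same trace identity (if $\Tr((\dot A^\dagger-\dot A)\dot B)=0$ for all self-adjoint $\dot B$ then $\dot A$ is self-adjoint); you work in the Cayley-transformed unitary picture, where tangent vectors become \emph{anti-Hermitian} after left translation by $U^\dagger$, and arrive at the dual statement. Both reduce to the observation that the Hermitian/anti-Hermitian decomposition of $\sp J_2$ is a Lagrangian splitting for $\mathrm{Im}\,\Tr(\cdot^\dagger\cdot)$. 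Your explicit flagging of the smooth-submanifold issue (which the paper asserts without detail) is a worthwhile addition.
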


\begin{proof} That $\sp M_{\ellip}$ is an isotropic submanifold of $\mathrm{Gr}(\sp
L_-, \sp L_+)$ follows immediately from Eq. (\ref{symp}) and the
observation that tangent vectors to $\sp M_{\ellip}$ at $W$ are defined
by self-adjoint operators.  

Now, all we have to do is to compute $T_W \sp M_{ellip}^\perp$ at $W = 0$
because of the homogeneity of the Grassmannian.  Hence, if $\dot A \in 
T_0 \sp M_{ellip}^\perp$, this means that 
$$ \Tr (\dot A^\dagger \dot B - \dot B \dot A) = 0 ,$$  
for every self-adjoint $\dot B \in \sp J_2 (\sp L_-,Ê\sp L_+ )$, hence
$\dot A^\dagger - \dot A = 0$, and $\dot A$ is self-adjoint, then lying
in $T \sp M_{\ellip}$.
\end{proof}

A Lagrangian submanifold of a compact manifold carries a characteristic
class called the Maslov class which is an element of first cohomology
group of the manifold with integer coefficients.  The Maslov class is
the dual of the Maslov cycle as constructed by Arnold \cite{Ar67}.  For
reasons that will be clear later on we will call the dual of the Maslov
class for $\sp M_{\ellip}$ the Cayley-Maslov surface and we will devote
the remaining part of this section to describe this cycle and the Maslov
class of $\sp M_{\ellip}$, $\nu \in H^1 (\sp M_{\ellip}, \Z)$.

The Cayley-Maslov surface $\sp C$ is the subspace of the self-adjoint
Grassmannian $\sp M_{\ellip}$ that contains the
self-adjoint subspaces that cut $L_-$ in a space of dimension $\geq
1$.  The Cayley-Maslov subspace $\sp C$, is a stratified manifold, $\sp
C = \cup_{k \geq 1}\sp C^{(k)}$,  that contains an open dense submanifold
consisting of the space of self-adjoint subspaces whose intersection with
$\sp L_-$ is exactly 1, the space denoted by $\sp C^{(1)}$.   This is
enough for topological purposes to analyze the crossings of the
Cayley-Maslov surface.   In what follows we will denote $\sp C^{(1)}$
simply by $\sp C$ and we will call it the Cayley-Maslov surface.  In
\cite{Ar67} it is proved that the Cayley-Maslov surface is two sided. 
The proof works exactly the same in our setting, i.e., there is a
non-vanishing vector field transversal to it and given by the tangent
vector to the curve of self-adjoint operators,
$$
A_t = \frac{A- \tan \frac{t I}{2}}{\tan \frac{t A}{2} + I} \, ,
$$
which is simply the image by the Cayley transform of the curve in the
space of unitary operators obtained by multiplication by $\exp it$. 
We will call the positive side of the Cayley surface the exiting of
the previous curve and the negative side the exiting of the curve
$A_{-t}$.

Given a continuous curve $\gamma \colon [0,1] \to \sp M_{\ellip}$ we will
define its index as the sum of positive crossing minus the sum of
negative crossings, where a crossing is positive if it is done from the
negative to the positive side and negative conversely.  For curves in
general position such number will be finite.

The Cayley-Maslov surface defined in this way can be called the
Cayley-Maslov cycle of the self-adjoint Grassmannian and the Cayley index
is simply the intersection index of the Cayley cycle.  The identification
of the self-adjoint  Grassmannian with a subgroup of unitary operators
allows for an alternative cohomological way of computing such index.

We should remark first that if $W_A$ is a generic element in the
self-adjoint Grassmannian, the corresponding unitary operator $U_A$
given by its Cayley transform is of the form
$I + K_A$ where $K_A$ is Hilbert-Schmidt.  
In fact, it is clear that
$$K_A = \frac{2iA}{I - iA} ,$$
hence,
$$K_A^\dagger K_A = \frac{4A^2}{I + A^2} ,$$
and then,
$$ \Tr K_A^\dagger K_A = 4\Tr \frac{A^2}{I + A^2} \leq 4 \Tr A^2 < \infty
.$$ 
Then, we can define the determinant of $U_A$ in an standard way using
the regularized determinant
$$\lg {\det}' \, U = \sum_{i=1}^\infty \lg \frac{1+k_i}{e^{k_i}} ,$$
which is finite for all unitary operators differing from the unity by
a Hilbert-Schmidt operator.

Given a closed curve $\gamma \colon S^1 \to U(\sp H_+, \sp H_-)$
whose image lies in the image by the Cayley transform of the
self-adjoint  Grassmannian, i.e. $\gamma \in C(\sp M_{\ellip})$, we will
define the index of $\gamma$ as the winding number of the curve
$\det' \circ \gamma \colon S^1 \to S^1$, in other words,
$$\sharp (\gamma) = \frac{1}{2\pi} \int_0^{2\pi} {\det} '  (\gamma
(\theta)) d\theta = \frac{1}{2\pi} \int_{S^1} ({\det} ' )^* d\theta
.$$

In order to show that the winding number $\sharp \gamma$ coincides
with the Cayley index of $\gamma$ we are going to introduce an
alternative way of computing such winding number.

Given a unitary operator we will define its degenerate dimension as
the dimension of the eigenspace with eigenvalue $1$.  If $U$ is of
the form above, $U = I + K$ with $K$ Hilbert-Schmidt, then the
dimension of the eigenspace  of eigenvalue $1$ is finite and the
degenerate dimension of the operator is finite.  We shall denote such
number by $\nu (U)$.   If $U_t$ is a curve $\gamma$ of such unitary
operators, then we define the index of such curve as
\begin{equation}\label{nu} \nu (\gamma ) = \int_0^1 \nu (U_t ) dt .
\end{equation}
We can see that $\nu (U_t)$ is of bounded variation because of the
continuity on the norm topology on the space $C(\sp M_{\ellip})$, then the
integral in Eq. (\ref{nu}) is finite.

\begin{theorem}  $\sharp \gamma = \nu (\gamma)$.
\end{theorem}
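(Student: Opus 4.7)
The plan is to compute both $\sharp\gamma$ and $\nu(\gamma)$ from the individual eigenvalue histories of $U_t$ and show they coincide. Writing $U_t=I+K_t$ with $K_t\in\sp J_2$, and denoting the nonzero eigenvalues of $K_t$ by $k_j(t)=e^{i\theta_j(t)}-1$, the regularized determinant evaluates to
\[
{\det}'(U_t)=\prod_j (1+k_j(t))\,e^{-k_j(t)}=\prod_j e^{i\theta_j(t)-(e^{i\theta_j(t)}-1)},
\]
so $\arg{\det}'(U_t)=\sum_j(\theta_j(t)-\sin\theta_j(t))$ modulo $2\pi$; the series converges absolutely because $\theta_j(t)-\sin\theta_j(t)=O(|k_j(t)|^3)$ and $\sum_j|k_j(t)|^2=\|K_t\|_2^2<\infty$.

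I would then reduce to a generic loop by homotoping $\gamma$ slightly so that the eigenvalues of $U_t$ never collide and $\gamma$ meets the principal Cayley stratum $\sp C^{(1)}$ transversally while avoiding $\sp C^{(k)}$ for $k\ge 2$; both $\sharp\gamma$ and $\nu(\gamma)$ are invariants under such small homotopies (the first by continuity of ${\det}'$ on $I+\sp J_2$, the second because the signed multiplicity count is preserved). By Kato analytic perturbation theory each eigenvalue $e^{i\theta_j(t)}$ then traces a continuous closed curve on $S^1$, so $\theta_j(1)-\theta_j(0)=2\pi n_j$ for integers $n_j$, and $\sin\theta_j(t)$ returns to its initial value.

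The total winding of ${\det}'\circ\gamma$ around $0$ is therefore
\[
\sharp\gamma=\frac{1}{2\pi}\sum_j(\theta_j(1)-\theta_j(0))=\sum_j n_j,
\]
and $n_j$ is exactly the signed number of times the $j$-th eigenvalue crosses $1\in S^1$ during the loop. On a generic loop these crossings are precisely the transversal intersections of $\gamma$ with $\sp C^{(1)}$, with $\nu(U_{t_i})=1$ at each crossing. With the co-orientation of $\sp C$ fixed by the transversal field $A_t$ introduced earlier, a counterclockwise passage through $1$ on the unitary side corresponds to the positive side of $\sp C$ on the self-adjoint side, so $\sum_j n_j$ agrees with the signed crossing count $\nu(\gamma)$.

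The main obstacle is verifying the orientation compatibility between the positive side of $\sp C$ (defined via the curve $A_t$ on the self-adjoint side) and the counterclockwise sense on $S^1$ (on the unitary side); this reduces to a single direct Cayley computation showing that multiplication of the unitary by $e^{it}$ moves an eigenvalue at $1$ to a point with positive imaginary part, which must then be matched with the orientation convention built into the definition of $\nu(\gamma)$. A secondary concern is rigorously handling the infinite product for ${\det}'$ near the accumulation point $1$ of the essential spectrum of $U_t$, which is controlled by the $\sp J_2$ bound $\sum_j|k_j(t)|^2<\infty$.
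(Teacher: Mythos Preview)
Your approach is essentially the same as the paper's: both arguments reduce $\sharp\gamma$ to the sum of the winding numbers of the individual eigenvalues of $U_t$ and identify this sum with the signed count of eigenvalue crossings through the distinguished point, which is $\nu(\gamma)$. The paper's proof is a brief sketch of exactly this idea, while you have supplied the supporting details (genericity, Kato continuity of eigenvalue branches, convergence of the infinite product, orientation matching) that the paper leaves implicit.
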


\begin{proof}  The crucial observation to prove the
formula above is to realize that $\nu (\gamma)$ is the net number of
eigenvalues of $U_t$ that cross through $-1$.  On the other hand after
substracting a global term in the definition of the determinant
$\det^\prime U_t$ which corresponds to the eigenvalues that do not
move away of a compact set, the others, a finite number, wind around the
unit circle, and the determinant counts the sum of the winding number
of all of them.  Then, the equality follows.
\end{proof}


The previous index will be called the Cayley-Maslov class and it defines a
nontrivial cohomology class in $H^1 (\sp M_{\ellip}, \Z)$.

\begin{theorem}
The Cayley index of a curve $\gamma$ and the winding number $\sharp
\gamma$ of $\gamma$ agree.
\end{theorem}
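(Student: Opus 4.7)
The plan is to reduce the identity in question to the preceding theorem by showing that each transverse crossing of the Cayley-Maslov surface $\sp C$ contributes, with sign, exactly one unit to the net flow of eigenvalues of $U_t$ across the distinguished point of the unit circle.

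First I would re-express the curve $\gamma$ via the Cayley transform as a curve $U_t = C(\gamma(t))$ of unitaries. Under this correspondence, a self-adjoint subspace lies on $\sp C$ precisely when its image has the distinguished value of $S^1$ in its spectrum, and the open stratum $\sp C^{(1)}$ corresponds to the case of a simple eigenvalue. By the very definition of the positive side of $\sp C$ in terms of the curve $A_t = (A - \tan(tI/2))(\tan(tA/2) + I)^{-1}$, whose Cayley transform is the rotation $U \mapsto e^{it}U$, the positive side of $\sp C$ is precisely the side onto which the spectrum of $U$ is pushed when rotated counterclockwise. Consequently a transverse crossing of $\gamma$ from the negative to the positive side corresponds to a single eigenvalue of $U_t$ traversing the distinguished point counterclockwise, and a crossing in the opposite sense corresponds to an eigenvalue moving the other way.

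For a curve in general position every crossing is transverse and isolates a single simple eigenvalue, so the Cayley index of $\gamma$ equals the signed count of eigenvalues of $U_t$ that sweep across the distinguished point. By the previous theorem this signed count equals $\nu(\gamma) = \sharp \gamma$. A curve in general position approximates any continuous $\gamma$ in $\sp M_{\ellip}$, and both the Cayley index and $\sharp \gamma$ are preserved under small perturbations that do not alter the crossing pattern, so the identity extends to all $\gamma$. The main obstacle is the local verification of the sign conventions at a simple crossing: after reducing via the spectral decomposition to the one-dimensional eigenspace of the crossing eigenvalue, one must check directly that the tangent vector to the curve $A_t$ points into the chamber reached by rotating $U_t$ counterclockwise past the distinguished point. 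This is a finite-dimensional computation and, once carried out, closes the argument.
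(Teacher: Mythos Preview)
Your proposal is correct and follows essentially the same route as the paper's proof: both arguments rest on the observation that a crossing of the Cayley--Maslov surface by $A_{U_t}$ corresponds precisely to an eigenvalue of $U_t$ passing through the distinguished point of $S^1$, so the signed count of crossings agrees with the quantity $\nu(\gamma)$, which the preceding theorem identifies with $\sharp\gamma$. You have supplied considerably more detail than the paper does---in particular the explicit verification of the sign convention via the curve $e^{it}U$ and the general-position/perturbation argument---whereas the paper's proof is a two-sentence sketch that takes these points for granted.
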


This follows easily from the fact that crossing the eigenvalue 
$-1$ for a curve $U_t$ of unitary operators is equivalent to $A_{U_t}$
crossing the Cayley surface in $\Lambda_S$.  Thus counting the
crossings in both pictures gives the same number.


If we perform an adiabatic change in the boundary conditions defining
self-adjoint extensions of the operator $\H$, the spectrum of such
extensions will change.   It could happen that in such
deformation process we will approach an unstable sector of the
theory.   This will happen when crossing the Cayley
surface.  On the other hand, as it was discussed before, the number of crossings of the Cayley surface with
appropriate signs defines an integer number that
is related to the topology of the space of elliptic self-adjoint extensions. 
We will call it the Cayley index and it has a similar
geometrical origin as the Maslov-Arnold index but quite a different
physical meaning.

\newpage


\section{Self-adjoint extensions of dissipative systems}\label{section:dissipative}

In discussing self-adjoint extensions of symmetric elliptic
operators we have found that they are described by a Lagrangian
submanifold in the universal infinite dimensional elliptic Grassmannian. 
The remaining describes extensions that are not
self-adjoint.  We will start in this section a discussion on the meaning
of such extensions, both mathematically and physically.    From this
discussion we will learn that the self-adjointness of these extensions
can be restored adding an external ``effective'' Hilbert space to our
system.  


\subsection{Non-self-adjoint extensions and local evolution}

To set up the discussion in concrete terms we will consider a particle
moving freely on a Riemannian manifold $(\Omega, \eta)$ with boundary $\partial \Omega \neq \emptyset$, and
described in quantum mechanical terms by the Laplace-Beltrami Hamiltonian $H_0 = -\frac{1}{2}\Delta$.  

Let $W$ be a non-selfadjoint subspace of the elliptic Grassmannian
$\mathbf{Gr}$, i.e., $W \neq W^\dagger$.  If $W$ is the graph of an
operator $T$, this means that $T$ is non-selfadjoint.  Moreover the Cayley
transform on $T$ will define a linear isomorphism $C_T \colon \sp H_+
\to \sp H_-$ that will not be unitary.  

This lack of unitarity reflects
the fact that probability is not preserved at the boundary, i.e., the
non-unitary evolution semigroup defined by the extended operator will not
preserve the norm of states and, for instance, states could
``evaporate''.  

Another way of putting it is that this situation is describing a dissipation
of some type acting on the system.  Because of the structure of the
system (the operator is symmetric on the interior of
$\Omega$) the only place where this dissipation can occur is at the
boundary, however localized as it is, it affects instantaneously the 
system as a whole.  We may discuss this aspect briefly.   

It is well known that a free non-relativistic wave packet localized in a bounded region at time $t=0$ spreads instantaneously over all of space \cite{He98}.  Thus
if we consider a quantum system defined on a manifold with boundary with self-adjoint boundary conditions, and 
we modify them to non-self-adjoint ones, i.e., the system becomes dissipative, such modification, even if it is performed adiabatically, will affect the state of the system instantaneously even if such state is localized far away from the boundary contrary to what a naive perturbative analysis would suggest.

To make this analysis more precise, let us consider a fixed smooth section $\Psi_0$ with compact
support $K = \mathrm{sup} \Psi$ on the interior of $\Omega$, that is $\Psi \in C_0^\infty(\Omega)$.
Then we may consider a larger open set $\mathcal{U} \subset \overline{\mathcal{U}} \subset \Omega \backslash \partial \Omega$.   Then consider the smooth manifold $\Omega' = \overline{\mathcal{U}}$ with boundary $\partial \Omega' = \overline{\mathcal{U}}- \mathcal{U}$. If we denote by $\iota \colon \Omega' \to \Omega$ the canonical embedding, then we equip $\Omega'$ with the Riemannian metric $\iota^*\eta$ that we will denote $\eta'$.
In the same way we may pull-back to $\Omega'$ any further structure on $\Omega$, a vector bundle $E\to \Omega$, a connection $\nabla \colon \Gamma(E) \to \Gamma(E\otimes T^*\Omega)$, etc.,
that will be denoted in the same fashion $E'$, $\nabla'$, etc.
Now we may consider the Hilbert space $L^2(\Omega',E')$ of square integrable sections of the pull-back of the bundle $E$ over $\Omega$ to $\Omega'$ with respect to the metric $\eta'$.   The Laplace-Beltrami operator on $\Omega'$ defines a symmetric operator on $L^2(\Omega',E')$ and Dirichlet's boundary conditions provide a self-adjoint extension of it.

There is a natural continuous isometry from $L^2(\Omega',E')$ to $L^2(\Omega,E)$, induced by the embedding map $\iota_* \colon C_c^\infty(E') \to C_c^\infty(E)$ given by $(\iota_*\Psi)(x) = \Psi(x')$, if $x = \iota(x')$, $x'\in \Omega'$, and $0$ otherwise.  Notice that:
$$
\parallel \iota^*\Psi \parallel_{L^2(\Omega,E)}^2 = \int_\Omega \mid \Psi (x) \mid^2 \vol_{\eta} = \int_{\Omega'} \mid \Psi (x') \mid^2 \vol_{\eta'} =  \parallel \Psi \parallel_{L^2(\Omega',E')}^2 \, ÊÊ\forall \Psi \in C_c^\infty(E') \, ,
$$
hence, because both spaces $\colon C_c^\infty(E')$ and $C_c^\infty(E)$ are dense in $L^2(\Omega',E')$ and $L^2(\Omega,E)$ respectively, $\iota_*$ extends to an isometry $\hat{\iota} \colon L^2(\Omega',E') \to L^2(\Omega,E)$.   In this sense we may consider $L^2(\Omega',E')$ as a closed subspace of $L^2(\Omega,E)$.    Notice now that $\Psi_0 \in L^2(\Omega',E')$ by construction.

Let $H_W$ be the elliptic extension of the Laplace-Beltrami operator determined by the boundary conditions defined by the subspace $W$.   Because of the correspondence between elliptic boundary conditions and subspaces $N$ of the deficiency space $\mathcal{N}$ of the operator $H_0$, then the extension of the operator defined by the subspace $N$ has a domain $D_W = D_0 \oplus N$,
and it acts on an state $\Psi = \Psi_0 + \xi$,  $\Psi_0 \in D_0$,$\xi \in N$, as $H_W(\Psi) = H_0\Psi_0 + K\xi$.
But in the situation above the state $\Psi_0$ that we will use as initial data for the evolution problem:
\begin{equation}\label{evolution_W}
i\frac{\partial}{\partial t} \Psi = H_W \Psi \, , \qquad \Psi\in D_K \, , 
\end{equation}
lies in $D_0$, the minimal extension domain, hence:
$$
H_W\Psi_0 = H_0\Psi_0= H_D\Psi_0\, ,
$$
where $H_D$ is the minimal selfadjoint extension corresponding to Dirichlet boundary conditions
 $\Psi\in D_D(\Omega') \, ,  \Psi\mid_{\partial \Omega'} = 0$.

However, one can show that even if any power $n$ of the Hamiltonian satisfies:
$$
H^n_W \Psi_0=H^n_D \Psi_0 \, ,
$$
the actual evolutions of the system governed by $H_W$ is very different from
that governed by Dirichlet boundary conditions. This different behaviour is due to
the fact that the local time evolution of the quantum system is not perturbative in $t$. 



In fact, that is exactly what from a physical perspective we should
expect; the system under study is in contact with an exterior
system represented by the boundary.  The interaction between them
is represented by an ``effective'' action described by the boundary
conditions and they contribute instantaneously to the evolution of the system.
 
For instance we can imagine that the boundary is an actual
boundary made of a semitransparent mirror or membrane, with a given
coefficient of reflection and transmission.  Then, a given fraction of
the probability amplitude will be transmitted to the exterior part of
the membrane and the evolution from the point of view of the system
inside the membrane will not be unitary.  This kind of situations have
been studied in a variety of situations (see for instance a detailed
discussion of this type of boundary conditions for interfaces of two
quantum systems by Popov \cite{Po95} and references therein).


\subsection{Unitarization of non-self-adjoint boundary conditions}

Thus, dissipation at the boundary from the previous viewpoint
will be modeled by a non-unitary
isomorphism $F$ at the boundary.  Such isomorphism will replace the exterior system 
that is in contact with the inner one.  

It is important to observe that the full
system, the initial or interior system plus the exterior system, being closed has to
be described by unitary evolution.  In this sense the non-unitarity of the evolution
of the system under the boundary condition $F$ is
restored adding and external system that takes into account the
dissipation at the boundary.  

Thus, given a dissipative quantum system $H$  described by a non-selfadjoint
extension $F$, the idea to restore its unitarity (within a larger system, of course) will be to construct an
enlarged unitary quantum system such that our system will be a
(non-unitary) subsystem.   A natural requirement to ask to
this enlarged quantum system is to be as small as possible, that is minimal in the class of such
enlargements, i.e., the smallest possible unitary quantum system in which we can embed the non-unitary one. 
We will call such enlargement an unitarization of the dissipative system $(H,F)$.
 
At the classical level this idea can be implemented easily by using
symplectic geometry.   Different approaches can be taken that will
eventually lead to the construction of adequate quantum models.  We will
only sketch them here, leaving a detailed discussion of them and their
quantum counterparts to further work.   

Let $M$ be a symplectic manifold with symplectic form $\omega$, for
instance $M$ could be the cotangent bundle $T^*\Omega$ of a mechanical
system with configuration space $\Omega$.   Let us consider now a
vector field $X$ on $M$ representing the dynamical evolution of a
classical physical system which is not necessarily Hamiltonian, i.e.,
such that $\sp L_X \omega \neq 0$.   Let $\alpha_X$ denote the exact
2-form such that $\alpha_X =  \sp L_X \omega$, $\alpha_X = d(i_X\omega)$.  

If we solve now the equation
\begin{equation}\label{beta_alpha} 
\sp L_X \beta = \alpha_X ,
\end{equation}
with the requirement that $\beta$ is a closed 2-form with maximal
dimensional kernel (notice that $\omega$ is a solution with minimal dimensional kernel), we can redefine the structure form $\omega$ as
$$
\sigma = \omega - \beta \, ,
$$
and obviously, $\sp L_X\sigma = 0$.   

Now our vector field $X$ is a Hamiltonian vector field for the closed 2-form $\sigma$ (which is non-unique), however
the form $\sigma$ can fail to be symplectic because its rank can be
strictly lower than that of $\omega$, i.e., $\sigma$ will define a
presymplectic structure on $M$.  We must point it out that the eq.
(\ref{beta_alpha}) has always  a solution which is $\omega$ itself, but
if it were the only one, then the 2-form $\sigma = 0$.

The triple $(M,\sigma, X)$ can be naturally extended to a Hamiltonian
system using the well-known coisotropic embedding theorem  \cite{Go81},
that states that there is a symplectic manifold $(P,\tilde{\sigma})$ and 
a Hamiltonian vector field $\tilde{X}$ on it such that there is a canonical 
embbeding $j\colon M\to P$ such that $j^*\tilde{\sigma} = \sigma$, $\tilde{X}$ is
tangent to the submanifold $j(M) \subset P$ and $\tilde{X}\mid_{j(M)} = j_*X$.

The total space $P$ is a tubular neighborhood of the zero section of
the bundle $\ker \sigma^* \to M$, where $\ker\sigma^*$ is the dual of the subbundle $\ker\sigma \subset TM$. 
The vector field $X$ induces a function on $P$ as follows
$$
P_X (x,\zeta ) = \langle \zeta (x) , X(x) \rangle \, , \qquad \forall x\in
M, \quad \zeta \in \ker\omega^* \, ,
$$ 
and the corresponding Hamiltonian vector field $\tilde{X}$ on $P$ restricts to $X$
on the submanifold $M$. In this picture the minimal extension is obtained
by ``adding'' the dual of the kernel of $\sigma$ to our original space.  
Notice  again that if Eq. (\ref{beta_alpha}) had only one solution $\beta =
\omega$, then $\sigma = 0$ and $P = T^*M$.  The Hamiltonian vector field $\tilde{X}$
becomes the complete lift $X^c$ of $X$ to $T^*M$.  

There is an alternative way to present the previous discussion.  It
consists in considering again a vector field 
$X$ which is not Hamiltonian.  This vector field will
represent the non-unitary evolution semigroup at the quantum level.  The graph of
the vector field defines a submanifold, denoted again by $X$, of $TM$.

If the vector field were Hamiltonian, the submanifold $X$ would be
Lagrangian with respect to the natural symplectic form $\dot\omega$ in
$TM$.  In general we will obtain that $TX\neq TX^\perp$, where
$\perp$ means the symplectic orthogonal with respect to $\omega$.   
The distribution on $X$ defined by
$TX\cap TX^\perp$ (provided that the intersection is clean) is
integrable as it is easily seen by computing $\omega ([U,V],Z)$ for $U,V\in
TX\cap TX^\perp$ and $Z\in TX$. 

Then the quotient $TX/ TX\cap TX^\perp$
is a symplectic bundle.   Denoting by $\sp F$ the foliation defined
by $TX\cap TX^\perp$, if the space of leaves $\mathcal{S}_X = X/\sp F$ of this foliation is a manifold, then $TX/TX\cap
TX^\perp $ can be identified with its tangent bundle $T(X/\sp F)$. Then,
the induced form from $\hat{\omega}$ on $TX/TX\cap TX^\perp $ will define
a non-degenerate, closed, smooth 2-form on $\mathcal{S}_X$
making in this way  $\sp S_X$ into
a symplectic manifold.  

This symplectic manifold $\sp S_X$
measures the ``non-Hamiltonianess'' of the vector field $X$.  
The submanifold $\mathcal{S}_X$ is the classical analogue of 
von Neumann's deficiency spaces $\mathcal{N}_\pm$ for a symmetric
operator.

Inspired by the same idea as von Neumann's theorem, Thm. \ref{vonNeuman}, one way to make $X$
into a Hamiltonian vector field would be to ``remove'' this symplectic
manifold $\sp S_X$ converting it into a Lagrangian submanifold of a bigger
space.   The details of this construction will be discussed elsewhere.

However the two constructions proposed above are not really addressing
the classical analogue of the problem of unitarization of a symmetric operator 
because they are not ``boundary
problems''.   

The non-Hamiltonian character of the vector field $X$ above does
not come from any boundary condition for the classical system.  Such non-Hamiltonian character is 
local in the interior of the manifold $M$ because the Lie derivative appearing on 
Eq. \eqref{beta_alpha} is
defined locally, contrary to what happens with the effect of boundary conditions
in quantum evolution as it was discussed in the previous section.

Boundary conditions in classical Hamiltonian systems will be described as follows.   
Let us consider a classical
mechanical system with configuration space again a smooth Riemannian manifold $\Omega$ with non-empty
boundary $\partial\Omega$.   Now we impose boundary conditions for the
free system on $\Omega$, but contrary to the discussion in the Introduction, Eq. \eqref{CBCs}, by means of a non-canonical map 
$S\colon T^*\partial \Omega \to T^*\partial \Omega$, $S^*\omega_{\partial \Omega} \neq \omega_{\partial \Omega}$,
with $\omega_{\partial \Omega}$ the canonical symplectic structure on $T^*\partial \Omega $.

It is clear that the mechanical effect of such
non-canonical boundary condition is going to be related to dissipation
of volume density of $T^*\Omega$ at the boundary.  Thus we can think
that this volume density is transmitted to a ``mirror space'' or
external space that has been put in contact with the original one
through its boundary.   Thus, the natural way to recover a symplectic
(hence volume-preserving) evolution, would be to double the space by
adding a mirror image of $T^*\Omega$ and pasting the two of them by
means of the boundary condition $S$.  This requires some care because
$S$ is not a map from $\partial (T^*\Omega)$ into itself, but rather a
map between the symplectic boundaries of the two spaces \cite{gift}.  
As in previous discussions we will not pursue the description of the
classical situation leaving it for later developments and we will
concentrate on the quantum situation.

It has become clear from the previous comments at the classical level,
that a good strategy to restore unitarity for non-self-adjoint
extensions of symmetric operators, in particular the Hodge Laplacian,
would be to double our state space using a mirror image of the original
one and then using the boundary conditions to ``paste'' the domains of
the original operator and its mirror image in such a way that the
dissipation introduced by the former will be transmitted to the later \cite{BW95}. 

Analytically we will proceed as follows.   Let us denote as in Section \ref{section:Laplace} by
$H^2 (\Omega )$ the Sobolev Hilbert space defining the maximal extension of the operator
$-\Delta$.   The boundary data space will be denoted as usual by $\sp
H_B$ and an elliptic extension of $-\Delta_0$
will be defined by the subspace $W\in \mathbf{Gr}$.  In particular
we will assume that $W$ is the graph of a non-self-adjoint operator $A\colon \sp L_+ \to
\sp L_-$, $\dot\varphi = A \varphi$.   

We introduce a mirror Hilbert space $H^2 (\Omega)_{\mirror}$ which is
a copy of $H^2(\Omega)$ and, in the direct sum Hilbert space $H^2(\Omega) \oplus H^2
(\Omega)_{\mirror}$, we will define an extended operator $-\Delta_{\ext}$
as follows.   In the domain $H^2 (\Omega) \oplus H^2
(\Omega)_{\mirror}$, the operator $-\Delta_{\ext}$ is the direct sum of
$-\Delta_0 \oplus -\Delta_0$. Thus the operator $-\Delta_{\ext}$ with domain $H_0^2 (\Omega) \oplus H_0^2
(\Omega)_{\mirror}$ is symmetric. 

 If we denote now by
$\Psi_{\ext}\in H^2(\Omega)
\oplus H^2 (\Omega)_{\mirror}$ a vector on the enlarged space, we will denote by $\Psi$ the projection $\pi(\Psi_\mathrm{ext})$ of
$\Psi_{\ext}$ into its first factor and by $\Psi_{\mirror}$ the
projection $\pi_2(\Psi_\mathrm{ext})$ onto the second factor $H^2(\Omega)_{\mirror}$.  Then, given
$\Psi_{\ext}$ we will define the ordinary boundary values
$b(\pi_1(\Psi_{\ext})) = (\varphi, \dot\varphi )$ and the mirror boundary
values $b (\pi_2 (\Psi_{\ext})) = b(\psi_{\mirror}) = (\varphi_{\mirror},
\dot\varphi_{\mirror})$.  

Then, we define the domain of
$-\Delta_{\ext}$ associated to the operator $A$ as the space of functions
$\psi_{\ext}$ such that:

\begin{equation}\label{bound_diss} 
\dot\varphi_{\mirror} = A^\dagger \varphi, \qquad
\dot\varphi = A \varphi_{\mirror} \, .
\end{equation}

We shall denote this subspace as $b^{-1}(W_A)_{\ext}$.

The following computation shows that $-\Delta_{\ext}$ is self-adjoint in
$b^{-1}(W_A)_{\ext}$.

\begin{eqnarray*} \langle -\Delta_{\ext} \Psi_{\ext}, \Psi_{\ext}' \rangle
&=& \langle -\Delta_{\ext} (\Psi, \Psi_{\mirror}), (\Psi',\Psi_{\mirror}')
\rangle \\ &=& \langle (-\Delta\Psi, -\Delta\Psi_{\mirror}), (\Psi',
\Psi_{\mirror}') \rangle  \\ &=&
\langle -\Delta\Psi, \Psi' \rangle + \langle -\Delta\Psi_{\mirror},
\Psi_{\mirror}' \rangle \\ &=&
\langle \Psi, -\Delta\Psi' \rangle + \langle \Psi_{\mirror},
-\Delta\Psi_{\mirror}' \rangle \\ && + \langle \dot\varphi, \varphi'\rangle
- \langle \varphi, \dot\varphi'\rangle +
\langle \dot\varphi_{\mirror}, \varphi_{\mirror}'\rangle -
\langle \varphi_{\mirror}, \dot\varphi_{\mirror}'\rangle .
\end{eqnarray*}
But using the boundary conditions Eq. (\ref{bound_diss}), the last four
terms in the previous equation become,
\begin{eqnarray*} \langle \dot\varphi, \varphi'\rangle 
- \langle \varphi, \dot\varphi'\rangle &+& 
\langle \dot\varphi_{\mirror}, \varphi_{\mirror}'\rangle
- \langle \varphi_{\mirror}, \dot\varphi_{\mirror}'\rangle \\  &=& 
\langle A\varphi_{\mirror}, \varphi'\rangle
- \langle \varphi,A\varphi_{\mirror}'\rangle \\ &+& 
\langle A^\dagger\varphi, \varphi_{\mirror}'\rangle
- \langle \varphi_{\mirror}, A^\dagger\varphi'\rangle = 0.
\end{eqnarray*}

Hence the operator is self-adjoint as claimed.

We have proved the following theorem.

\begin{theorem} Given the dissipative quantum system defined on a Riemannian manifold $\Omega$ with non-empty boundary $\partial \Omega$ by 
the Hamiltonian $H_0 = -\frac{1}{2}\Delta_A$, with $\Delta_A$ the Bochner Laplacian determined by the metric and a connection $A$, and non-self-adjoint elliptic boundary conditions defined by the non-self-adjoint boundary operator $A$, there exists an unitarization of the system on the enlarged Hilbert space $L^2(\Omega)\oplus L^2(\Omega)_\mirror$
determined by the boundary conditions given by Eq. \eqref{bound_diss}. 

\end{theorem}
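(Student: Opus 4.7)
The plan is to realize the extended operator $-\Delta_\ext$ as a self-adjoint extension of the doubled minimal operator by exhibiting its domain as a maximal $\omega_L^\ext$-isotropic subspace of the doubled boundary data space, and then invoking Theorem \ref{maximal}. To set this up, I first form the Hilbert space $\mathcal{H}_\ext = L^2(\Omega,E) \oplus L^2(\Omega,E)_\mirror$ and the corresponding boundary data space $\mathcal{H}_L^\ext = \mathcal{H}_L \oplus \mathcal{H}_L$, equipped with the direct sum skew-pseudo-Hermitean form $\omega_L^\ext = \omega_L \oplus \omega_L$. The minimal closed symmetric operator here is $\mathcal{H}_0 \oplus \mathcal{H}_0$ with domain $H_0^2(\Omega,E)\oplus H_0^2(\Omega,E)_\mirror$, and its adjoint is the maximal operator with domain $H^2(\Omega,E)\oplus H^2(\Omega,E)_\mirror$. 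Thus, by Theorem \ref{maximal} applied to the doubled setting, self-adjoint extensions correspond bijectively to maximal $\omega_L^\ext$-isotropic subspaces of $\mathcal{H}_L^\ext$.

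Next I would identify the candidate subspace. Given the (possibly non-self-adjoint) operator $A\colon \mathcal{L}_+ \to \mathcal{L}_-$, define
\begin{equation*}
W_A^\ext = \bigl\{ (\varphi, \dot\varphi; \varphi_\mirror, \dot\varphi_\mirror) \in \mathcal{H}_L^\ext \;\bigm|\; \dot\varphi = A \varphi_\mirror,\ \dot\varphi_\mirror = A^\dagger \varphi \bigr\}.
\end{equation*}
The first step is the isotropy computation: for two such vectors, the value of $\omega_L^\ext$ is exactly the sum of boundary terms displayed in the excerpt preceding the theorem, which, upon substituting the two conditions, reduces to $\langle A\varphi_\mirror, \varphi'\rangle - \langle \varphi, A\varphi_\mirror'\rangle + \langle A^\dagger \varphi, \varphi_\mirror'\rangle - \langle \varphi_\mirror, A^\dagger \varphi'\rangle$, which vanishes by the very definition of the adjoint $A^\dagger$. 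This proves $W_A^\ext \subset (W_A^\ext)^\perp$ with respect to $\omega_L^\ext$, i.e. the associated extension is symmetric.

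The critical and less trivial step is \emph{maximality}. I would argue this via the Cayley-type picture of Section \ref{section:Cayley}: applying the Cayley transform to each factor, the condition defining $W_A^\ext$ becomes a relation expressing the ``outgoing'' components $(\varphi^-,\varphi_\mirror^-)$ in terms of the ``incoming'' components $(\varphi^+,\varphi_\mirror^+)$ via an operator $U^\ext$ built block-wise from $(I\pm iA)$, $(I\pm iA^\dagger)$. A direct calculation (using $(A^\dagger)^\dagger = A$) verifies $U^\ext (U^\ext)^\dagger = I$, so $U^\ext$ is unitary and $W_A^\ext$ is the graph of a unitary operator between the polarizing subspaces; by Theorem \ref{sa_Dirac}/Theorem \ref{maximal} such graphs are exactly the maximal isotropic subspaces. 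I expect this unitarity check to be the main technical obstacle, because the natural block operator built from $A$ need not be unitary unless both off-diagonal pairings $A$ and $A^\dagger$ are arranged as above; this is precisely why the mirror condition involves $A^\dagger$, not $A$, and it is also what fails if one naively tries to unitarize with a single copy of $\Omega$.

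Finally, I would tie up two loose ends. First, I would verify the elliptic character of the extension: both boundary relations are of the generalized APS form discussed in Section \ref{section:elliptic}, with Fredholm and Hilbert--Schmidt projections inherited from the factorwise elliptic boundary conditions, so $-\Delta_\ext$ with domain $b^{-1}(W_A^\ext)$ lies in the self-adjoint elliptic Grassmannian of Theorem \ref{Ellip_Lag}. Second, I would check that the canonical embedding $\Psi \mapsto (\Psi,0)$ recovers the original dissipative dynamics in the sense that the projection $\pi_1 e^{-it\Delta_\ext}(\Psi,0)$ reproduces the non-unitary semigroup generated by the extension of $-\Delta$ associated to $A$, up to the probability leaked into the mirror sector, thereby justifying the terminology ``unitarization.''
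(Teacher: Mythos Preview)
Your proposal is correct and in fact more complete than the paper's argument. The paper's proof consists solely of the direct boundary computation you reproduce in your isotropy step: it writes out $\langle -\Delta_{\ext}\Psi_{\ext},\Psi_{\ext}'\rangle - \langle \Psi_{\ext},-\Delta_{\ext}\Psi_{\ext}'\rangle$ as the sum of boundary terms, substitutes the conditions \eqref{bound_diss}, and observes the cancellation via the definition of $A^\dagger$. The paper then declares self-adjointness without addressing maximality, ellipticity, or the sense in which the construction is a unitarization.

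Your additional maximality step is the genuine content you add, and it is correct; however, it can be streamlined. Rather than passing through the Cayley transform and checking $U^\ext(U^\ext)^\dagger = I$ blockwise, you can simply note that $W_A^\ext$ is the graph of the boundary operator
\[
\tilde{A} = \begin{pmatrix} 0 & A \\ A^\dagger & 0 \end{pmatrix} \colon \sp L_+^\ext \to \sp L_-^\ext,
\]
which is manifestly self-adjoint since $(A^\dagger)^\dagger = A$. By the discussion around Eq.~\eqref{self_bc}, graphs of self-adjoint boundary operators are exactly the maximal $\omega_L$-isotropic subspaces transverse to $\sp L_-$, so Theorem~\ref{maximal} applies directly. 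Your Cayley-picture argument is equivalent (the unitarity of $U^\ext$ is precisely the self-adjointness of $\tilde{A}$), just slightly more roundabout.
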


\newpage


\section{Self-adjoint extensions of elliptic operators with symmetry}\label{section:symmetry}

This section will be devoted to the analysis of the structure and the
global properties of self-adjoint extensions of elliptic operators
invariant with respect to a Lie group of transformations.
As before we will discuss the theory for Dirac operators and the ideas
extend in a natural way to higher order differential elliptic
operators.  


\subsection{Dirac bundles with symmetry}\label{section:Dirac_symmetry}

Let us consider the following geometrical setting.  Let $G$ be a Lie group acting on $\Omega$
smoothly, i.e., there is a smooth map
$\Phi \colon G\times \Omega \to \Omega$ such that $\Phi (e,x) = x$ for
all $x\in \Omega$, $\Phi (h, \Phi (g,x)) = \Phi (hg, x)$, for all
$g,h\in G$, $x\in \Omega$, and $\Phi (g,x) \in \partial \Omega$ for
every $x\in \partial \Omega$.  As usual the action $\Phi (g,x)$ will be
denoted simply by $gx$, and the induced action of $G$ on $\partial \Omega$ will
be denoted with the same symbol.  

The space of orbits of the
action will be denoted by $\Omega/G$ and if the action of $G$ on
$\Omega$ is proper and free the quotient space $\Omega/G$ will be a
smooth manifold with boundary $\partial (\Omega/G) = \partial \Omega /G$.
The Riemannian structure $\eta$ can be chosen to be invariant if
the group is compact.  In fact, in that case we can average an arbitrary Riemannian
structure to obtain an invariant Riemannian structure
on $\Omega$.

The action of the group $G$ lifts naturally to the tangent bundle
$T\Omega$ and the action is given by the tangent maps of the
diffeomorphisms defined by the group elements $g\in G$ and the corresponding
action on the space of vector fields on $\Omega$ will be denoted by $g_*X$, $X\in \mathfrak{X}(\Omega)$.

Let us consider as in the previous sections a Dirac bundle $\pi\colon
S\to \Omega$ such that there exists a lifting of the action of $G$ on
$\Omega$ to the total space $S$ of the bundle, i.e.,
there exists an action map $\Psi\colon G\times S \to S$ such that it
commutes with the natural projection maps, that is, 
$$ 
\pi \circ \Psi = \Phi \circ \pi \, ,
$$
and the action of $g\in G$ on $S$ maps linearly the fibre over $x$ into
the fibre over $gx$.  Thus, the action $\Psi$ preseves the boundary
bundle $S_{\partial \Omega}$ over $\partial \Omega$.   We will assume that 
we can choose the Hermitean structure on the Dirac bundle $S$ to be $G$
invariant, i.e.,
$$ 
(g\xi,g\zeta )_{gx} = (\xi, \zeta )_x\, , \qquad \forall g\in G, \quad \xi,
\zeta \in S_x \, ,
$$
and the group $G$ will be represented unitarily on the bundle $S$,
as well as the Hermitean connection $\nabla$, that is, because the group $G$ acts
in the space of sections $\Gamma (S)$ as $(g\cdot \sigma) (x) = \Psi(g, \sigma(\Phi(g^{-1},x)))$, $x\in \Omega$, $g\in G$, then
$$
\nabla_{g_*X} (g\cdot \sigma) = g\cdot \left( \nabla_X \sigma \right)\,, \qquad \forall \sigma \in \Gamma(S)\, , \quad  X \in \mathfrak{X}(\Omega) \, , \quad  g \in G \, .
$$

As the Riemannian metric $\eta$ is $G$-invariant, the action of the group
lifts to the Clifford algebra bundle $\Cl (\Omega )$ over $\Omega$.   The
action $\rho$ of the Clifford algebra bundle $\Cl (\Omega )$ on the Dirac
bundle $S$ defines a homomorphism of algebra bundles $\rho \colon \Cl
(\Omega) \to \mathrm{End} (S)$, where $\mathrm{End} (S)$ denotes the algebra bundle of
endomorphisms of the vector bundle $S$.  The group $G$ acts on the
vector bundle $S$ by endomorphisms, thus this action extends to the
algebra bundle $\mathrm{End} (S)$ in a natural way, that is, if $h\colon S\to S$ is a
bundle homomorphism, than  $h^g = g^{-1}\circ h \circ g$, $g\in G$. 
Thus we have two $G$-spaces, $\Cl (\Omega)$ and $\mathrm{End} (S)$ and a map $\rho$
between them, then if the group $G$ is compact, we can choose this map to be equivariant by averaging.  In
fact let
$$ \rho_G (u) = \int_G \rho (gu)^{g^{-1}} d\mu_G(g) ,$$
where $d\mu_g$ denotes (the normalized Haar measure on the group $G$. 
Then,
$$ \rho_G (hu) = \int_G \rho (ghu)^{g^{-1}} d\mu_G (g) = 
 \int_G \rho (k u)^{kh^{-1}} d\mu_G (k) = \int_G \rho {(k u)^k}^{h^{-1}}
d\mu_G (k) $$ 
$$= {\left( \int_G \rho {(k u)^k}
d\mu_G (k)\right) }^{h^{-1}} = \rho_G (u)^{h^{-1}} .$$

Finally, if the group $G$ is compact the connection $\nabla$ in $S$ can be chosen to be
equivariant by averaging again a given connection.  It is easy to check that if the given
connection were verifying the derivation property (\ref{derD}), then the
averaged connection will satisfy it again. 

Summarizing the previous discussion, we have arrived at the following
result.

\begin{proposition}\label{equiv_dirac}  Let $S$ be a Dirac bundle over the Riemannian
manifold with boundary $\Omega$ and let $G$ be a compact Lie group
acting on the bundle $S$ by bundle isomorphisms, then there exists a
Dirac bundle structure on $S$ which is $G$-invariant.  Besides
the Dirac operator constructed using it will commute with the
action of $G$ and is topologically equivalent to the initial one.
\end{proposition}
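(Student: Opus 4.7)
The plan is to prove the proposition in three stages: (i) produce $G$-invariant versions of all the geometric data that enter the definition of a Dirac bundle by averaging against the normalized Haar measure $d\mu_G$ on $G$; (ii) check that the averaged data still satisfy the Dirac bundle axioms \eqref{unit_cl} and \eqref{derD} so that the resulting Dirac operator commutes with the $G$-action; and (iii) exhibit a path between the original and the averaged structures to deduce topological equivalence.

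For stage (i) I would first average the Riemannian metric on $\Omega$, setting $\eta^G_x(u,v) = \int_G \eta_{gx}(g_*u,g_*v)\,d\mu_G(g)$, which is smooth, positive-definite and $G$-invariant. The Hermitian product on $S$ is averaged in the same fashion, $(\xi,\zeta)^G_x = \int_G (g\xi, g\zeta)_{gx}\,d\mu_G(g)$, producing a $G$-invariant Hermitian product. The Clifford bundle $\Cl(\Omega)$ inherits a canonical $G$-action from $\eta^G$, and the argument displayed in the excerpt for the representation $\rho\colon \Cl(\Omega)\to\mathrm{End}(S)$ yields an equivariant $\rho_G$. Finally I would average the connection by $\nabla^G_X \sigma = \int_G g^{-1}\cdot\nabla_{g_*X}(g\cdot\sigma)\,d\mu_G(g)$, which is manifestly $G$-equivariant and still a linear connection because the averaging is affine over the space of connections.

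Stage (ii) is a sequence of checks that each Dirac axiom is stable under averaging. Unitarity of Clifford multiplication by unit vectors follows because the identity $(u\cdot\xi,u\cdot\zeta)_x = (\xi,\zeta)_x$ holds at every $g$ inside the integral defining $(\cdot,\cdot)^G$, so it descends to the averaged product. The Leibniz-type property \eqref{derD} is preserved because it is linear in both the connection and the Clifford action, and averaging a family of identities gives the same identity for the averaged objects. With these axioms in place, the Dirac operator $\Dsl^G = e_j\cdot\nabla^G_{e_j}$ built from any local orthonormal frame for $\eta^G$ is $G$-equivariant since every ingredient is.

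The main subtlety lies in stage (iii), where one must make precise what ``topologically equivalent'' means. I expect the intended statement to be that the original and averaged Dirac operators are connected by a continuous one-parameter family of Dirac operators on the same underlying bundle $\pi\colon S\to\Omega$, hence share all topological invariants (principal symbol class, analytical index, $K$-homology class). To construct such a family I would fix the smooth Clifford module structure of $S$ and interpolate linearly between the original and the averaged metric, Hermitian product and connection, writing
\begin{equation}
\eta_t = (1-t)\eta + t\eta^G\, , \quad (\cdot,\cdot)_t = (1-t)(\cdot,\cdot)+t(\cdot,\cdot)^G\, , \quad \nabla_t = (1-t)\nabla + t\nabla^G\, .
\end{equation}
The first two are convex combinations of positive forms and remain positive; the third is an affine combination of connections, hence a connection. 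The axioms \eqref{unit_cl} and \eqref{derD} depend affinely on these data once the Clifford module structure is held fixed, so each $(\eta_t,(\cdot,\cdot)_t,\nabla_t)$ defines a Dirac bundle structure on $S$ and $\Dsl_t$ depends continuously on $t$ in the operator-norm topology on symbols. Since the analytical index and the other standard topological invariants of elliptic operators are homotopy invariants, $\Dsl = \Dsl_0$ and $\Dsl^G = \Dsl_1$ are topologically equivalent. The only delicate point is to verify that the averaged Clifford representation $\rho_G$ can itself be connected to $\rho$ without leaving the space of $\Cl(\Omega)$-module structures; this I would handle by noting that the averaging of an intertwiner between two module structures is again an intertwiner, so $\rho_G$ and $\rho$ are intertwined by a convex family of module isomorphisms, which in turn induces the required homotopy at the level of operators.
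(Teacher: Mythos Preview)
Your proposal is correct and follows essentially the same route as the paper: the discussion preceding the proposition already carries out your stage (i) (averaging the metric, Hermitian product, Clifford action, and connection) and your stage (ii) (checking the averaged connection still satisfies \eqref{derD}), and the paper's proof then reduces to the single observation that the space of metrics and connections is contractible, giving a path $\Dsl_t$ from $\Dsl$ to the invariant operator. Your explicit convex interpolation is precisely what realizes that contractibility, and you deserve credit for flagging the one genuine subtlety---that along the path the Clifford algebra itself changes with $\eta_t$, so one must simultaneously deform the module structure---which the paper passes over in silence.
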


Proof.  It is immediate from the previous considerations and the fact
that the space of connections and metrics is contractible, then there
exists a continuous path $\Dsl_t$ connecting the Dirac operator $\Dsl$ and
the averaged one $\Dsl_{inv}$. \hfill$\Box$


\subsection{The quotient Dirac operator}

Under the conditions stated in Prop. \ref{equiv_dirac} we have constructed an equivariant Dirac operator, that we simply denote again by $\Dsl$.  If the action of the group
$G$ on $S$ is ``good enough'', e.g., proper and free, then the quotient
total space $S/G$ will be a smooth bundle over the quotient manifold
$\Omega/G$ with smooth boundary
$\partial \Omega /G$.   Moreover the structures on $S$ will be related to the corresponding structures on the
quotient and the bundle $S/G \to \Omega/G$ will be again a Dirac bundle with Dirac
operator $\Dsl_G$.  

We will denote by $\pi_G$ the projection map between the quotient spaces $S/G$ and $\Omega/G$ above defined as:
$\pi/G ([\xi]) = [\pi(\xi)]$, where $[\xi]$ denotes the orbit of $\xi\in S$ under the action of $G$ and, similarly, $[x]$ is the orbit of $x$ in $\Omega$.  
The Dirac bundle $\pi_G \colon S/G \to \Omega/G$ will be  called the quotient Dirac bundle.  

The space of sections $\Gamma (S/G)$ of the quotient
Dirac bundle $S/G$ are in one-to-one correspondence with the space of equivariant sections of $S$ under the action of $G$, that is: $\Gamma(S/G) = \Gamma(S)^G$, with 
$$
\Gamma(S)^G = \{ \sigma\in \Gamma(S) \mid g\cdot \sigma = \sigma \} \, .
$$
Notice that if $\sigma \in \Gamma(S)^G$ we may define $\tilde{\sigma}([x]) = \sigma (x)$ for all $[x] \in \Omega/G$.
then $\tilde{\sigma}$ defines a section of $E/G$ as it satisfies that $[\sigma([x])] = [\sigma([x'])]$ whenever $[x] = [x']$.
On the other hand, if $\tilde{\sigma} \colon \Omega/G \to S/G$ is a section, then we may define $\sigma (x) = \int_G g^{-1}(\tilde{\sigma}(\pi(x))) d\mu_G(g)$, $x \in \Omega$ which is an invariant section of $S$.

Because of the $G$-equivariance of the Dirac operator
$\Dsl$ on $S$, it will induce an operator on the quotient space that we
will denote by $\Dsl/G$.    Actually, notice that if $\sigma \in \Gamma(S)^G$,
then 
\begin{equation}\label{equiv_sections}
\Dsl g\cdot \sigma = e_i \cdot \nabla_{e_i} g\cdot \sigma = g\cdot \Dsl \sigma
\end{equation}
and then it makes sense to define 
$$
\Dsl_G [\sigma ] = \Dsl ([\sigma]) \, , \qquad \forall [\sigma] \in \Gamma \, .
$$

\begin{remark}
It is clear that under the previous conditions
for the action of $G$ on $S$, then $\Dsl/G = \Dsl_G$.   We shall remark here
that even if the quotient space $S/G$ fails to be a smooth bundle over a
smooth manifold $\Omega/G$, this will happen for instance if the action of $G$ is not
free, the induced operator $\Dsl/G$ will still be defined as the following discussion shows. 
\end{remark}

The group $G$ acts naturally on the space of smooth sections of $S$, as indicated above, i.e.,
\begin{equation}\label{uni_bulk}
(g\cdot \xi) (x) = g(\xi (g^{-1}x))\, ,  \qquad \forall g\in G, x\in \Omega \, .
\end{equation}
By continuity this action extends unitarily to the spaces of sections
$H_0^1(\Omega,S)$, $H^1(\Omega,S)$ and $L^2(\Omega,S)$ because:
\begin{eqnarray*} 
\langle g\cdot \xi, g\cdot \zeta \rangle &=& \int_\Omega (g\cdot \xi
(x), g\cdot \zeta (x) )_x \vol_\eta (x) \\
&=& \int_\Omega (\xi (g^{-1}x),
\zeta (g^{-1}x) )_x \vol_\eta (x) \\
&=& \int_\Omega (\xi (x), \zeta (x) )_x
\vol_\eta (x) = \langle \xi, \zeta \rangle \, ,
\end{eqnarray*}
where we have used that $G$ acts by isometries of $\eta$, then
it preserves the volume form $\vol_\eta$, hence the Jacobian of the
diffeomorphism $g$ will be trivial.  Besides $g$ acts by unitary
transformations on the hermitian bundle $S$.  Thus, the Hilbert spaces
of sections before support unitary representations of the group $G$. 

Similarly, the boundary data Hilbert space $\sp H_D$ will also define a
unitary representation of the group $G$.  The boundary map $b$ is
equivariant because the pull-back map $i\colon \partial \Omega \to
\Omega$ commutes with the action of $G$ and the following diagramme is
commutative
\begin{equation}\label{Dirac_traceable}
\begin{array}{ccccc} H^1(\Omega,S) & & \begin{array}{cc}V(g) \\ \to \\ \phantom{} \end{array} & & H^1(\Omega,S) \\
\downarrow & & & & \downarrow \\ H^{1/2}(\pO,S_\pO) && \begin{array}{cc}\phantom{} \\ \to \\ v(g) \end{array} && H^{1/2}(\pO,S_\pO)
\end{array} \, ,
\end{equation}
where $V(g) \colon H^1(\Omega,S) \to H^1(\Omega,S)$ denotes the unitary representation of the group $G$ defined by Eq. \eqref{uni_bulk} and $v(g)\colon H^{1/2}(\pO,S_\pO) \to H^{1/2}(\pO,S_\pO)$ is the corresponding 
unitary representation induced in the restriction of the Dirac bundle $S$ to the boundary (see next section, Sect. \ref{subsec:6-2}, for more details).

We will not pursue this analysis here, but it is obvious that the
multiplicities of the irreducible representations of $G$ contained in
the Hilbert space $H^1 (S)$ will be related to the multiplicities of
the corresponding ones in the boundary Hilbert space $\sp H_D$.  More
elaborate comments on this will be done later. 

Now we will define the quotient operator $\Dsl/G$.
Let $\Gamma_0 (S)^G$ and $\Gamma (S)^G$ be the subspaces of smooth invariant
sections of $S$ of $\Gamma_0(S)$ and $\Gamma (S)$, the spaces of compact supported smooth sections of $S$ and
smooth sections of $S$, respectively.  Clearly, this amounts to $\xi$ be a fixed point for the action of
$G$ on $\Gamma (S)$.  

Then we will define $\Dsl/G$ as a linear map
$\Gamma_0 (S)^G \to \Gamma_0 (S)^G$ by means of
$$ 
(\Dsl/G) (\xi ) = \Dsl (\xi ) \, , \qquad  \forall \xi \in \Gamma_0^G (S) \, .
$$
Notice that because $\Dsl$ is a differential operator, 
hence local, and  Eq. \eqref{equiv_sections}, then $(\Dsl/G)(\xi) \in \Gamma_0 (S)^G$.

Clearly the operator $\Dsl/G$ is symmetric on the domain $\Gamma_0 (S)^G$ and
we can search for its self-adjoint extensions.  Of course, $\Gamma_0
(S)^G$ is not dense in $H^1(\Omega,S)$ but it is dense in the intersection of the $L^2$-closure of
$\Gamma (S)^G$ and $H^1 (S)$.  Such space, denoted in what follows by
$H^1 (S)^G$, will play the role of the Hilbert space of sections of
Sobolev class 1 in the quotient bundle space $S/G$.  Notice that $H^1
(S)^G$ coincides with the subspace of fixed sections under
the action of $G$ in $H^1 (S)$.  Thus we have,

\begin{proposition}  If we denote by $\mathrm{Fix}_G (H^1 (S))$ the fixed set of
the unitary action $V(g)$ of $G$ in $H^1(\Omega,S)$, then, 
$$ 
\mathrm{Fix}_G (H^1 (S)) = H^1 (S)^G \, ,
$$
and similarly for $H_0^1(\Omega,S)$.  Moreover, if the quotient spaces $S/G$,
$\Omega/G$ are smooth manifolds and the canonical projection is a smooth
submersions, then 
$$ 
H^1 (S)^G  \cong H^1 (S/G) \, ,
$$
where the later identification means that there is a natural
unitary transformation from the Hilbert space $H^1 (S)^G$ and the Hilbert space of order 1 Sobolev 
sections of the bundle $S/G \to \Omega /G$. 
\end{proposition}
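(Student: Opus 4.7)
The plan is to handle the two claims separately, exploiting throughout that $V(g)$ acts unitarily not only on $L^2(\Omega, S)$ but also on $H^1(\Omega, S)$ because the Hermitean metric, connection, and Riemannian metric were all chosen $G$-invariant in Section~\ref{section:Dirac_symmetry}.

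For the first claim, the inclusion $H^1(\Omega, S)^G \subset \mathrm{Fix}_G(H^1(\Omega,S))$ is immediate, since a limit of $G$-invariant sections is fixed by any continuous action of $G$ on the ambient Hilbert space. For the reverse inclusion I would apply the standard averaging trick. Given $\xi \in H^1(\Omega,S)$ with $V(g)\xi = \xi$ for all $g \in G$, pick smooth approximants $\xi_n \in \Gamma(S)$ with $\xi_n \to \xi$ in $H^1$ and form
$$
\bar{\xi}_n = \int_G V(g)\xi_n \, d\mu_G(g) \in \Gamma(S)^G.
$$
Because $V(g)$ is an $H^1$-isometry and $V(g)\xi = \xi$, one has
$$
\|\bar{\xi}_n - \xi\|_{H^1} \leq \int_G \|V(g)(\xi_n - \xi)\|_{H^1}\, d\mu_G(g) = \|\xi_n - \xi\|_{H^1} \longrightarrow 0,
$$
so $\xi$ is the $H^1$-limit, hence a fortiori the $L^2$-limit, of elements of $\Gamma(S)^G$. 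The same argument with compactly-supported approximants, using that $G$ preserves supports, handles $H_0^1(\Omega,S)$.

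For the second claim, the hypothesis that $\pi\colon\Omega\to\Omega/G$ is a smooth submersion and that $S/G$ is a smooth bundle forces the action of $G$ on $\Omega$ to be free and proper, so $\pi$ is a principal $G$-bundle and $S/G\to \Omega/G$ is an associated Dirac bundle. I would first exhibit the natural bijection $\Psi\colon \Gamma(S)^G \to \Gamma(S/G)$ defined by $\Psi(\sigma)([x]) = [\sigma(x)]$, whose inverse is the canonical lift of a section along the bundle projection. The $G$-invariant structures on $(\Omega,\eta)$ and $(S,(\cdot,\cdot),\nabla)$ descend to the analogous structures on the quotient, and fiber integration along orbits gives
$$
\int_\Omega (\sigma_1, \sigma_2)_x\, \vol_\eta(x) = \int_{\Omega/G} V([x])\,(\Psi(\sigma_1), \Psi(\sigma_2))_{[x]}\, d\vol_{\Omega/G}([x])
$$
for $\sigma_i \in \Gamma(S)^G$, where $V([x]) = \vol(G\cdot x)$. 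Since $V$ is smooth and strictly positive on the compact quotient, the weighted and unweighted $L^2$-inner products on $\Gamma(S/G)$ are equivalent, and the analogous identity for $\nabla^\dagger\nabla$, built from the descended connection, upgrades this to an equivalence of $H^1$-norms. Continuity then extends $\Psi$ to the asserted Hilbert-space isomorphism.

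The main technical obstacle is the weight $V([x])$: orbits of a compact isometric action need not have constant volume, so $\Psi$ is in general an isomorphism of Hilbert spaces rather than an \emph{isometry}, unless one redefines the reference measure on $\Omega/G$ to absorb $V$ or assumes constant orbit volume. A related subtlety requiring care is the translation of $\nabla$-derivatives on $\Gamma(S)^G$ into covariant derivatives on $\Gamma(S/G)$: this uses the principal connection on $\pi$ determined by $\eta$ to split $\nabla$ into horizontal and vertical parts, the invariance of $\sigma$ forcing the vertical part to vanish so that only horizontal derivatives contribute to the quotient Sobolev norm. Making this clean is the heart of the second half of the argument.
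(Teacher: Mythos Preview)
The paper states this proposition without proof, so there is no argument to compare your proposal against. Your approach is the natural one and is correct: the averaging trick for the first identity is standard and works exactly as you describe (using that $V(g)$ is an $H^1$-isometry because $\eta$, $(\cdot,\cdot)$, and $\nabla$ are all $G$-invariant), and your identification $\Gamma(S)^G \cong \Gamma(S/G)$ followed by fibre integration over orbits is the right way to handle the second claim.

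You are also right to flag the orbit-volume weight $V([x])$: with the Riemannian volume on $\Omega/G$ induced by the quotient metric, the map $\Psi$ is only a topological Hilbert-space isomorphism, not an isometry, so the word ``unitary'' in the statement should be read as ``bounded with bounded inverse'' unless one absorbs $V$ into the measure on the quotient. Your decomposition of $\nabla$ into horizontal and vertical parts via the principal connection determined by $\eta$, with the vertical part vanishing on invariant sections, is exactly what is needed to match the Sobolev norms; this is routine once stated, and your identification of it as the technical core is accurate.
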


Once the quotient operator $\Dsl/G$ has been defined and it has been shown
to be symmetric in a dense domain of the Hilbert space $H^1(\Omega,S)^G$ we
can, as we did for the Dirac operator $\Dsl$, compute and characterize all
its selfadjoint extensions.   Of course, as it was discussed before, if the action of $G$ defines a
quotient Dirac bundle $S/G \to \Omega/G$, 
the space $H^1(\Omega,S)^G$ is precisely the space $H^1(S/G)$ of sections of the
quotient bundle, and the quotient operator $\Dsl/G$ is precisely the Dirac
operator $\Dsl_G$ in the quotient bundle, thus using the results and the
discussion in Section \ref{Cayley_D}, Thm. \ref{sa_Dirac}, its self-adjoint extensions are given by the
self-adjoint  Grassmannian $\sp M (\Dsl_G)$ on the boundary Hilbert space
$\sp H_{\dsl_G} = H^{1/2}(\pO,S_\pO)$ defined on the boundary $\partial \Omega /G$ and the
problem will be solved.  

In spite of this, we would like to
characterize such self-adjoint extensions in terms of self-adjoint
extensions for $\Dsl$ on $\Omega$, i.e., we are asking how to obtain $\sp
M(\Dsl_G)$ directly from $\sp M (\Dsl)$.   Apart from the intrinsic interest
of being able to compute things in quotient spaces without having to go
to the quotient, avoiding the inherent difficulties of taking
quotients, this approach to the problem has the advantage of
providing an effective method to construct the self-adjoint extensions of the
quotient Dirac operator $\Dsl/G$ when $S/G$ is not a manifold, a situation
which is often found in all sort of problems.  

\begin{figure}[h]
\centering
\includegraphics[width=12cm]{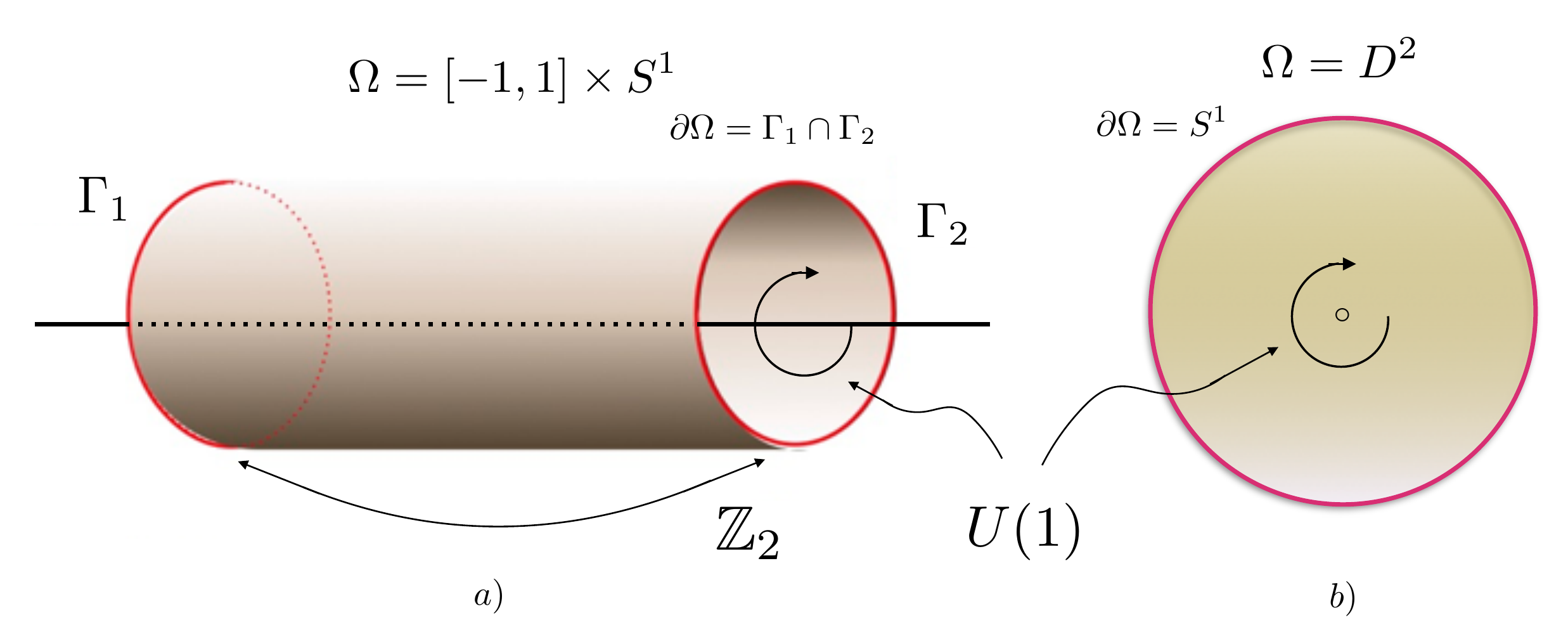}
\caption{The cylinder (a) and disk (b) with the groups $U(1)$ and $\mathbb{Z}_2$ acting on them.}\label{fig:cilindro}
\end{figure}

For instance, consider the following two simple examples.  Let $\Omega =
S^1 \times [0,1]$ be the cylinder with boundary $\partial \Omega =
S^1\times \set{0} \cup S^1 \times \set{1}$ and consider the natural
action of the group $U(1)$ on $\Omega$ by rotations along the symmetry axis of the
cylinder (see Fig. \ref{fig:cilindro}).  
Then the quotient space $\Omega/U(1)$ is clearly the smooth manifold with boundary
$[0,1]$, and we could expect that the $U(1)$-invariant self-adjoint
extensions of a Dirac operator defined on a given Dirac bundle over
$\Omega$ will correspond to the self-adjoint extensions of the Dirac
operator defined by the projection of that bundle to $[0,1]$.  
It is easy to check that if $S$ is a complex line bundle (the Spin bundle of the manifold), then
$\mathcal{M}(\Dsl_{U(1)}) \cong U(1)$.   

Consider now, instead of the cylinder the unit disk. 
That is, $\Omega = \set{z\in \C \mid |z| \leq 1}$ with boundary
$\partial \omega = S^1$.  Consider now the natural action of $U(1)$ on
$\Omega$ by complex multiplication, i.e., rotation around the origin in
$\C$.  The action of $U(1)$ is not free, however the quotient space is
a smooth manifold with boundary, $[0,1]$ again.  What happens to the
self-adjoint extensions of the quotient operator now?  Do we obtain
all self-adjoint extensions of $\Dsl_{U(1)}$ by looking at the
$U(1)$-invariant ones in $\Omega$?  Clearly no, new self-adjoint
extensions on the quotient space arise because of the non-trivial
nature of the action of the group z(the origin is a fixed point for the action).  

We will analye in what follows
these matters.   We will restate some of the notions
introduced above concerning Dirac operators in a slightly more general context,
and we will  proceed then to a direct construction of the
`quotient' self-adjoint  Grassmannian.   


\subsection{Unitaries at the boundary and $G$-invariance}\label{subsec:6-2}

Let us consider now a Hermitean bundle $\pi \colon E \to \Omega$, with $(\Omega, \eta)$ a Riemannian manifold
with smooth boundary $\partial \Omega$.

Let $G$ be a Lie group and $V\colon G \to \mathcal{U}(L^2(\Omega,E))$ be a continuous unitary representation of $G$, 
on the Hilbert space of square integrable sections of $E\to \Omega$, 
i.e., for any $\Phi\in L^2(\Omega,E)$ the map
\[
 G\ni g\mapsto V(g)\Phi \quad 
\]
is continuous in the $L^2$-norm $|| \Phi ||^2 = \int_{\Omega} || \Phi (x) ||^2 \vol_\eta$.

Notice that if the unitary representation $V$ leaves invariant the subspace $H^1(\Omega,E)$, i.e., 
$V(g) H^1(\Omega) \subset H^1(\Omega)$, and it leaves invariant the quadratic form $Q(\Phi) = || \nabla\Phi ||^2$,
where $\nabla$ is a Hermitean connection on $E$ and $Q$ is defined on Neumann's domain $H^1(\Omega,E)$, (we call $Q$ with such domain the Neumann's quadratic form), that is 
$$
Q(V(g) \Phi ) = Q(\Phi) \, \qquad \forall g \in G, \, \Phi \in H^1(\Omega,E) \, ,
$$
then $V$ defines also a continuous unitary representation on $\mathcal{H}^{1}(\Omega)$ with its corresponding
Sobolev scalar product (see for instance \cite{Ib14c}).  Now we can extend the property of equivariant Dirac operators expressed by Eq. \eqref{Dirac_traceable}, that can be used also to study self-adjoint extensions of Laplace operators.

\begin{definition} \label{def:traceable}
The representation $V\colon G\to L^2(\Omega,E)$ has a trace (or is traceable) along the boundary
$\partial \Omega$, if it leaves invariant Neumann's quadratic form $Q$ and there exists another continuous, unitary representation 
$v \colon G\to \mathcal{U}( L^2(E_{\partial \Omega}))$ 
such that 
\begin{equation}\label{VPhi=vphi}
b ( V(g) \Phi ) = v(g)  \gamma (\Phi)  \, ,
\end{equation} 
for all $\Phi \in H^1(\Omega,E)$ and $g\in G$ or, in other words, that the following diagram is commutative:
$$
\begin{array}{ccc}   H^1(\Omega,E) & \overset{V(g)}{\longrightarrow} & H^1(\Omega,E) \\
b \downarrow & & \downarrow b \\
 H^{1/2}(\pO,E_\pO) & \overset{v(g)}{\longrightarrow}& H^{1/2}(\pO,E_\pO)
\end{array}
$$
We will call $v$ the trace of the representation $V$.
\end{definition}

Notice that if the representation $V$ is traceable, its trace $v$ is unique.  
It is not difficult to prove the following theorem: (see the proof in the case of
Laplace operators in \cite{Ib14c}).

\begin{theorem}\label{repcommutation}
Let $G$ be a Lie group, $\pi\colon E \to \Omega$ a Dirac bundle over $\Omega$ and $\Dsl$ a Dirac operator on it.  Let $V\colon G\to \mathcal{U}(L^2(\Omega,E))$ be a traceable continuous, unitary representation of $G$ with unitary trace
$v\colon G\to\mathcal{U}(L^2(\pO,E_\pO))$ along the boundary $\partial\Omega$. Denote by $(\Dsl_U,\D_U)$ the 
self-adjoint extension of the Dirac operator $\Dsl$ determined by the unitary operator $U \colon H^{1/2}(\pO,E_\pO) \to H^{1/2}(\pO,E_\pO)$. Then we have that $[v (g)\,,\,U]=0$ for all $g\in G$ iff $\Dsl_U$ is $G$-invariant.
\end{theorem}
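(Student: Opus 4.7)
The strategy is to reduce the $G$-invariance of $\Dsl_U$ to invariance of its domain $\dom(\Dsl_U)$ under $V(g)$, and then to transport this bulk condition to the boundary via the trace representation $v$. Because the lifted $G$-action on $S$ preserves the Clifford multiplication, the Hermitean structure and the connection (Prop.~\ref{equiv_dirac}), the operators $V(g)$ and $\Dsl$ commute on the maximal domain $H^{1}(\Omega,S)$. Consequently $V(g)\Dsl_U = \Dsl_U V(g)$ holds on $\dom(\Dsl_U)$ iff $V(g)$ preserves $\dom(\Dsl_U)$, and unitarity of $V(g)$ together with the group law upgrades this to the equality $V(g)\dom(\Dsl_U) = \dom(\Dsl_U)$ for every $g\in G$.

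Next I would turn the invariance of $\dom(\Dsl_U)$ into a condition on boundary data. By Thm.~\ref{sa_Dirac}, $\dom(\Dsl_U) = b^{-1}(\mathcal{W}_U)$ where $\mathcal{W}_U \subset \mathcal{H}_D$ is the maximal $\omega_D$-isotropic subspace corresponding to $U$; equivalently, after the Cayley transform, $\mathcal{W}_U$ is the graph of the unitary $U \colon \mathcal{H}_+ \to \mathcal{H}_-$ provided by the polarization of Thm.~\ref{unit_D}. The traceability assumption in Def.~\ref{def:traceable} furnishes the intertwining $b(V(g)\Phi) = v(g)b(\Phi)$, and surjectivity of the trace map $b$ onto $\mathcal{H}_D$ (Lions' theorem) shows that $V(g)\dom(\Dsl_U) = \dom(\Dsl_U)$ for every $g$ is equivalent to $v(g)\mathcal{W}_U = \mathcal{W}_U$ for every $g$.

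The last step is to translate this subspace invariance into the commutator identity. Because the lifted action of $G$ on $S$ preserves the Clifford module structure and fixes the inward unit normal $\nu$, the trace $v(g)$ commutes with Clifford multiplication by $\nu$, hence with the complex structure $J_D$, so it preserves the polarization $\mathcal{H}_D = \mathcal{H}_+ \oplus \mathcal{H}_-$; write $v(g) = v_+(g) \oplus v_-(g)$. Invariance of the graph $\mathcal{W}_U = \{(\phi_+, U\phi_+) : \phi_+ \in \mathcal{H}_+\}$ under $v(g)$ is then equivalent to $v_-(g)\,U\,\phi_+ = U\,v_+(g)\,\phi_+$ for all $\phi_+ \in \mathcal{H}_+$, i.e.\ to $v(g)\,U = U\,v(g)$ as operators on $\mathcal{H}_D$.

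The main obstacle lies in this last step, specifically the assertion that $v(g)$ respects the polarization defined by $J_D$. For an abstract traceable unitary representation (Def.~\ref{def:traceable}) this compatibility is not automatic; in the Dirac setting, however, the $G$-equivariance of the Clifford module structure and of the inward normal forces $v(g)J_D = J_D v(g)$, and one must verify this by following the construction of the trace $v$ from the lifted action on $S$. This is precisely what distinguishes the Dirac-type statement above from its Laplace-type counterpart in~\cite{Ib14c}.
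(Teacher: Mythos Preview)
The paper does not supply its own proof of Thm.~\ref{repcommutation}; it only remarks that ``it is not difficult'' and refers the reader to the Laplace-operator version in \cite{Ib14c}. Your three-step strategy---reduce $G$-invariance of $\Dsl_U$ to $V(g)$-invariance of $\dom(\Dsl_U)$, push this to the boundary via traceability and surjectivity of $b$ to obtain $v(g)\mathcal{W}_U=\mathcal{W}_U$, and then unwind the graph condition into $[v(g),U]=0$---is exactly the natural template that the reference to \cite{Ib14c} is pointing at, so your approach matches the intended one.

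Your self-diagnosed obstacle is also the correct one. The commutator conclusion $v(g)U=Uv(g)$ requires $v(g)$ to respect the polarization $\mathcal{H}_D=\mathcal{H}_+\oplus\mathcal{H}_-$, i.e.\ to commute with $J_D$. As you note, this is \emph{not} a consequence of Def.~\ref{def:traceable} alone; it comes from the geometric hypotheses of \S\ref{section:Dirac_symmetry}: $G$ acts by isometries preserving $\partial\Omega$, so $g_*\nu=\nu$, and the Clifford module structure is $G$-equivariant, whence $v(g)(\nu\cdot\phi)=\nu\cdot v(g)\phi$. Once that is said, nothing further is missing. One small caveat: your first step invokes $[V(g),\Dsl]=0$ on $H^1(\Omega,S)$, which again is a feature of the equivariant setup of Prop.~\ref{equiv_dirac} rather than of traceability per se; the theorem should be read in that geometric context, and you are right to make this explicit.
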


\subsection{Examples:  Groups acting by isometries}

In this section we will discuss some examples with unitaries which satisfy the conditions mentioned in the statements above derived by actions of groups by unitary transformations of the bundle $E$ covering isometries on $\Omega$.

Thus, assume as in Section \ref{section:Dirac_symmetry}, that the group $G$ acts smoothly by isometries on the Riemannian manifold $(\Omega,\pO,\eta)$ and this action can be lifted to an unitary action on the bundle $\pi\colon E \to \Omega$.  
Any $g\in G$ specifies a bundle isomorphism $g_E\colon E \to E$ such that $(g_E\cdot \xi , g_E\cdot \zeta) = (\xi, \zeta)$, for all $\xi, \zeta \in E$ and $( \cdot, \cdot )$ denotes the inner product along the fibers of $E$.  The element $g\in G$ defines also a diffeomorphism $g_\Omega \colon \Omega\to\Omega$ such that $\pi(g_E\cdot \xi) = g_\Omega \cdot \pi(\xi)$.  Moreover, we have that $g_\Omega^*\eta=\eta$,
where $g_\Omega^*$ stands for the pull-back by the diffeomorphism $g$. 
These diffeomorphisms restrict to isometric diffeomorphisms on the Riemannian manifold at the boundary $(\pO,\partial\eta)$ (see, e.g., \cite[Lemma 8.2.4]{marsden01}), 
$$
(g|_{_{\pO}})^*\partial\eta=\partial\eta\; ,
$$ 
hence the action on $E$ restricts to an action on $E_\pO$, the pull-back of the bundle $E$ to the boundary $\pO$.
These actions of the group $G$ induce unitary representations of the group on  the space of square integrable sections of the bundles $E\to \Omega$ and $E_\pO \to \pO$\,.
In fact, consider the following representations:

\begin{eqnarray*}
&& V\colon G\to\mathcal{U}(L^2(\Omega, E))\;, \qquad V(g)\Phi = g_E\cdot \Phi\circ g_\Omega^{-1}\, , \Phi\in L^2(\Omega, E)\, , \\
&& v\colon G\to\mathcal{U}(L^2(\pO,E_\pO))\;, \quad
v (g)\varphi = g_E|_{_{E_\pO}}\cdot \varphi \circ (g|_{_{\pO}})^{-1}\, , \varphi \in L^2(\pO, E_\pO)\, .
\end{eqnarray*}
Then a simple computation shows that,
$$
\scalar{V(g^{-1})\Phi}{V(g^{-1})\Psi} =\scalar{\Phi}{\Psi}\;,
$$
where we have used the change of variables formula and the fact that isometric diffeomorphisms preserve the Riemannian volume, 
i.e., $g^*\negthinspace\d\mu_\eta=\d\mu_{\eta}$\,. The result for the boundary is proved similarly. 
The induced actions are related with the boundary map as in Eq. \eqref{def:traceable}, $(V(g)\Phi)=v (g)b(\Phi)$, $g\in G$, $\Phi\in H^1(\Omega)$, and therefore the unitary representation $V$ is traceable along the boundary of $\Omega$ with trace $v$.

Moreover we have that Neumann's quadratic form $Q$ is $G$-invariant.

\begin{proposition}\label{prop:dphiinvariant}
Let $G$ be a Lie group that acts by unitary bundle isomorphisms on the Hermitean bundle $\pi \colon E \to \Omega$ over the Riemannian manifold with boundary $(\Omega,\pO,\eta)$ and let $V \colon G\to \mathcal{U}(L^2(\Omega,E))$ be the associated unitary representation. 
Then, Neumann's quadratic form $Q_N (\Phi) = \scalar{\nabla\Phi}{\nabla\Phi}$ with domain $H^1(\Omega, E)$ is $G$-invariant, where $\nabla$ is a $G$-invariant connection.
\end{proposition}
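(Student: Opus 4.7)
The plan is to establish $Q_N(V(g)\Phi)=Q_N(\Phi)$ directly, by reducing the integral over $\Omega$ to a pointwise identity for the covariant differential combined with an isometric change of variables. Set $\Psi=V(g)\Phi=g_E\cdot(\Phi\circ g_\Omega^{-1})$ and write
$$
Q_N(\Psi)=\int_\Omega \|(\nabla\Psi)(x)\|_x^{2}\,\vol_\eta(x),
$$
where $\|\cdot\|_x$ is the pointwise norm on $E_x\otimes T_x^*\Omega$ induced by the Hermitean product $(\cdot,\cdot)_x$ on $E$ and the Riemannian metric $\eta_x$.

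First I would compute $\nabla\Psi$ pointwise. Because $\nabla$ is $G$-invariant and $g_\Omega$ is an isometry (so its tangent map preserves both the metric and the Levi-Civita connection on $T\Omega$), the induced connection on $E\otimes T^*\Omega$ is also $G$-invariant. Unpacking this on a vector field $X$ on $\Omega$ and using $g_*((g^{-1})_*X)=X$, one obtains
$$
(\nabla_X\Psi)(x)=g_E\cdot(\nabla_{(g_\Omega^{-1})_*X}\Phi)(y),\qquad y=g_\Omega^{-1}(x),
$$
which in tensorial form reads $(\nabla\Psi)(x)=(g_E)_y\otimes (g_\Omega^{-1})^{*}\,(\nabla\Phi)(y)$, an element of $E_x\otimes T_x^*\Omega$.

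Next I would use the two isometric properties separately on the two tensor factors: $g_E$ preserves the Hermitean inner product on fibres of $E$ by hypothesis, and $g_\Omega$ being an isometry means $(g_\Omega^{-1})^{*}\colon T_y^*\Omega\to T_x^*\Omega$ preserves the dual Riemannian norm. Consequently
$$
\|(\nabla\Psi)(x)\|_x^{2}=\|(\nabla\Phi)(y)\|_y^{2}.
$$
Substituting and invoking the change-of-variables formula with $y=g_\Omega^{-1}(x)$, together with $g_\Omega^{*}\vol_\eta=\vol_\eta$ (again a consequence of $g_\Omega$ being an isometric diffeomorphism), yields
$$
Q_N(\Psi)=\int_\Omega \|(\nabla\Phi)(g_\Omega^{-1}(x))\|^{2}\,\vol_\eta(x)=\int_\Omega \|(\nabla\Phi)(y)\|^{2}\,\vol_\eta(y)=Q_N(\Phi).
$$
A density/continuity remark extends the identity from smooth compactly supported sections to all of $H^1(\Omega,E)$, using that $V(g)$ is bounded on $L^2(\Omega,E)$ and the quadratic form is closed on $H^1(\Omega,E)$.

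The main obstacle is conceptual rather than computational: carefully writing down the $G$-action induced on the tensor product bundle $E\otimes T^*\Omega$ and verifying that the hypothesis that $\nabla$ (on $E$) is $G$-invariant, combined with $g_\Omega$ being an isometry, implies that the induced connection on $E\otimes T^*\Omega$ is also $G$-invariant. Once this bookkeeping is clear, the pointwise identity and the invariance of the Riemannian volume make the proof immediate.
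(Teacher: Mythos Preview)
Your proof is correct and follows essentially the same approach as the paper: a pointwise identity for the covariant differential under the $G$-action, combined with the invariance of the Riemannian volume under the isometric change of variables. The only difference is one of presentation—the paper writes out only the scalar case (trivial line bundle, $\nabla=\d$, where the key identity reduces to $\d(g^*\Phi)=g^*\d\Phi$) and declares the general bundle case a ``trivial extension,'' whereas you carry out the general case directly using the $G$-invariance of the connection and the unitarity of $g_E$ on fibres.
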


\begin{proof}
Let us consider the simpler case of a trivial line bundle over $\Omega$ and trivial unitary action of $G$ along the fibres.
Then the connection $\nabla$ is trivial. The general case 
is a trivial extension.

First notice that the pull-back of a diffeomorphism commutes with the action of the exterior differential. 
Then we have that $$\d(V(g^{-1})\Phi)=\d(g^*\Phi)=g^*\negthinspace\d\Phi\;.$$ 
Hence
\begin{subequations}\label{dphiinvariant}
\begin{eqnarray}
\scalar{\d(V(g^{-1})\Phi)}{\d(V(g^{-1})\Psi)}&=& \int_\Omega \eta^{-1}(g^*\negthinspace\d\Phi,g^*\negthinspace\d\Psi)\d\mu_\eta \nonumber\\
&=& \int_\Omega g^*\negthickspace\left( \eta^{-1}(\d\Phi,\d\Psi) \right)g^*\negthinspace\d\mu_\eta \nonumber\\
&=& \int_{g\Omega}\eta^{-1}(\d\Phi,\d\Psi)\d\mu_\eta \nonumber\\
&=& \scalar{\d\Phi}{\d\Psi}\;,
\end{eqnarray}
\end{subequations}
where in the second inequality we have used that $g:\Omega\to\Omega$ is an isometry and therefore 
$$\eta^{-1}(g^*\negthinspace\d\Phi,g^*\negthinspace\d\Psi)=g^*\negthinspace \eta^{-1}(g^*\negthinspace\d\Phi,g^*\negthinspace\d\Psi)
=g^*\negthickspace\left( \eta^{-1}(\d\Phi,\d\Psi) \right)\;.$$ 
The equations \eqref{dphiinvariant} guaranty also that $V(g) H^1(\Omega)= H^1(\Omega)$ since $V(g)$
is a unitary operator in $L^2(\Omega)$ and the norm 
$\sqrt{\norm{\operatorname{d}\cdot\;}+\norm{\cdot}^2}$ is equivalent to the Sobolev norm of order 1.
\end{proof}

\begin{remark}
Before making explicit the previous structures in concrete examples we notice that the previous discussion 
works in a similar way with the covariant Laplacian $\Delta_A$ discussed in Section \ref{section:Laplace}. Thus if we are given a group acting by unitary bundle isomorphisms on an Hermitean bundle $E \to \Omega$ (and by isometric diffeomorphisms on the Riemannian manifold $\Omega$), then any unitary operator $U$ at the boundary, 
(that in addition satisfies the conditions of possessing gap and being admissible, \cite{Ib14c}, that guarantee that the quadratic form constructed from the operator $\nabla$ with boundary conditions dictated by $U$, read more about self-adjoint extensions determined by quadratic forms in  \cite{Ib14} and \cite{Ib15} this volume), and that verifies the commutation relations of Theorem~\ref{repcommutation} describes a $G$-invariant quadratic form. The closure of this quadratic form characterizes uniquely a  $G$-invariant self-adjoint extension of the Laplace-Beltrami operator.
\end{remark}


\begin{example} Discrete and compact groups of isometries\label{sec:SymmetryExamples}

We will discuss now two particular examples of $G$-invariant quadratic forms. In the first example we are considering a 
situation where the symmetry group is a finite, discrete group. In the second one we consider $G$ to be a compact Lie group.

\begin{enumerate}

\item
Let $\Omega$ be the cylinder $[-1,1]\times[-1,1]/\negthickspace\sim$\,, where $\sim$ is the equivalence relation $(x,1)\sim(x,-1)$\,. 
The boundary $\pO$ is the disjoint union of the two circles $\Gamma_1=\left\{\{-1\}\times[-1,1]/\negthickspace\sim\right\}$ and
$\Gamma_2=\left\{\{1\}\times[-1,1]/\negthickspace\sim\right\}$\,, (see Figure \ref{fig:cilindro} (a)). Let $\eta$ be the euclidean metric on $\Omega$. 
Now let $G=\mathbb{Z}_2$=\{e,f\} be the discrete, abelian group of two elements and consider the following action in $\Omega$:
\begin{eqnarray*}
e:(x,y)& &\to(x,y)\;,\\
f:(x,y)& &\to(-x,y)\;.
\end{eqnarray*}
The induced action at the boundary is
\begin{eqnarray*}
e &:& (\pm1,y)\to(\pm1,y)\;,\\
f &:& (\pm1,y)\to(\mp1,y)\;.
\end{eqnarray*}
Clearly $G$ transforms $\Omega$ onto itself and preserves the boundary. Moreover, it is easy to check that $f^*\eta=\eta$\,.

Since the boundary $\pO$ consists of two disjoints manifolds $\Gamma_1$ and $\Gamma_2$\,, the Hilbert space of 
the boundary is $L^2(\pO, E)=L^2(\Gamma_1, E)\oplus L^2(\Gamma_2, E)$. Any $\Phi\in L^2(\pO, E)$
can be written as
$$
\Phi=\begin{pmatrix}\Phi_1(y)\\ \Phi_2(y)\end{pmatrix}
$$ 
with $\Phi_i\in L^2(\Gamma_i, E)$\,. 
A nontrivial action on $L^2(\pO)$ is given by 
\[
v (f)\begin{pmatrix}\Phi_1(y)\\ \Phi_2(y)\end{pmatrix}
=\begin{pmatrix}0 & v \\ v^\dagger & 0\end{pmatrix}\begin{pmatrix}\Phi_1(y)\\ \Phi_2(y)\end{pmatrix}\; ,
\]
where $v\colon L^2(\Gamma_2, E) \to L^2(\Gamma_1, E)$ is a unitary operator.
The set of unitary operators that describe the closable quadratic forms as defined in the previous section 
is given by suitable unitary operators 
$$U=\begin{pmatrix} U_{11} & U_{12} \\ U_{21} & U_{22} \end{pmatrix}\;,$$ 
with $U_{ij}=L^2(\Gamma_j)\to L^2(\Gamma_i)$. 
According to Theorem~\ref{repcommutation} the unitary operators commuting 
with $v (f)$ will lead to $G$-invariant quadratic forms. Imposing $[v(f) , U] = 0$,
we get the conditions
\begin{eqnarray*}
 U_{12} &=& v U_{12} v\;,\\
 U_{22} &=& v^\dagger U_{11} v \;.
\end{eqnarray*}

Obviously there is a wide class of unitary operators, i.e., boundary conditions, that will be compatible with the symmetry group $G$. 
We will consider next two particular classes of boundary conditions. 
First, consider the following unitary operators 
\begin{equation}
U=\begin{bmatrix} e^{\mathrm{i}\beta_1}\mathbb{I}_1 & 0\\ 0 & e^{\mathrm{i}\beta_2}\mathbb{I}_2 \end{bmatrix}\;,
\end{equation}
where $\beta_i\in C^\infty\left(S^1,[-\pi+\delta,\pi-\delta]\cup\{\pi\}\right)$ for some $\delta>0$. 
Moreover, this choice of unitary matrices corresponds to select Robin 
boundary conditions of the form:
\begin{equation}
\left. b\left(-\frac{\d\Phi}{\d x}\right)\right|_{\Gamma_1}=-\tan(\beta_1/2)b(\Phi)\mid_{\Gamma_1}\, ; 
\left.b\left(\frac{\d\Phi}{\d x}\right)\right|_{\Gamma_2}=-\tan(\beta_2/2) b(\Phi)\mid_{\Gamma_2}\;.
\end{equation}
The $G$-invariance condition above imposes $\beta_1=\beta_2$. Notice that when $\beta_1\neq\beta_2$ 
we can obtain meaningful self-adjoint extensions of the Laplace-Beltrami operator that, however, will not be $G$-invariant.

We can also consider unitary operators of the form
\begin{equation}
U=\begin{bmatrix} 0 & e^{\mathrm{i}\alpha} \\ e^{-\mathrm{i}\alpha} & 0 \end{bmatrix}\;,
\end{equation}
where $\alpha\in C^\infty (S^1,[0,2\pi])$. In this case the unitary matrix corresponds to select so-called quasi-periodic boundary conditions, 
cf., \cite{asorey83}, i.e.,
\begin{equation*}
b(\Phi)\mid_{\Gamma_1}=e^{i\alpha}b(\Phi)\mid_{\Gamma_2}\; , \quad  \left.b\left(-\frac{\d\Phi}{\d x}\right)\right|_{\Gamma_1}=e^{i\alpha} \left.b\left(\frac{\d\Phi}{\d x}\right)\right|_{\Gamma_2}\;.
\end{equation*}
The $G$-invariance condition imposes $e^{i\alpha}=e^{-i\alpha}$ and therefore among all the quasi-periodic conditions only the
periodic ones, $\alpha\equiv0$\,, are compatible with the $G$-invariance. 

\item 
Let $\Omega$ be the unit, upper hemisphere centered at the origin. 
Its boundary $\pO$ is the unit circle on the $z=0$ plane. 
Let $\eta$ be the induced Riemannian metric from the euclidean metric in $\mathbb{R}^3$\,. Consider the compact Lie group 
$G=SO(2)$ that acts by rotation around the $z$-axis. 
If we use polar coordinates on the horizontal plane, then the boundary $\pO$ is isomorphic to the interval $[0,2\pi]$ with 
the two endpoints identified. 
We denote by $\theta$ the coordinate parameterizing the boundary and the boundary Hilbert space is $L^2(S^1)$\,.

Let $\varphi\in H^{1/2}(\pO)$ and consider the action on the boundary by a group element 
$g_\alpha\in G$, $\alpha\in [0,2\pi]$, given by
$$v (g^{-1}_\alpha)\varphi(\theta)=\varphi(\theta+\alpha)\;.$$
To analyze what are the possible unitary operators that lead to $G$-invariant quadratic forms it is convenient to use the Fourier 
series expansions of the elements in $L^2(\pO)$\,. Let $\varphi\in L^2(\pO)$\,, then
$$\varphi(\theta)=\sum_{n\in\mathbb{Z}}\hat{\varphi}_ne^{\mathrm{i}n\theta}\;,$$
where the coefficients of the expansion are given by $$\hat{\varphi}_n=\frac{1}{2\pi}\int_0^{2\pi}\varphi(\theta)e^{-\mathrm{i}n\theta}\d\theta\;.$$
We can therefore consider the induced action of the group $G$ as a unitary operator on ${\ell}_2$\,, the Hilbert space of square summable sequences. 
In fact we have that:
\begin{eqnarray*}
\widehat{(v (g^{-1}_\alpha)\varphi)}_n&
= & \frac{1}{2\pi}\int_0^{2\pi}\varphi(\theta+\alpha)e^{-\mathrm{i}n\theta}\d\theta\\
&= & \sum_{m\in\mathbb{Z}}\hat{\varphi}_me^{\mathrm{i}m\alpha}\int_0^{2\pi}\frac{e^{\mathrm{i}(m-n)\theta}}{2\pi}\d\theta
 =e^{\mathrm{i}n\alpha}\hat{\varphi}_n\;.
\end{eqnarray*}
This shows that the induced action of the group $G$ is a unitary operator in $\mathcal{U}(\ell_2)$ that acts 
diagonally in the Fourier series expansion.
More concretely, we can represent it as $\widehat{v (g^{-1}_\alpha)}_{nm}=e^{\mathrm{i}n\alpha}\delta_{nm}\,$\,.
From all the possible unitary operators acting on the Hilbert space of the boundary, only those whose representation in $\ell_2$
commutes with the above operator will lead to $G$-invariant quadratic forms (cf., Theorem~\ref{repcommutation}). 
Since $\widehat{v (g^{-1}_\alpha)}$ acts
diagonally on $\ell_2$ it is clear that only operators of the form $\hat{U}_{nm}=e^{\mathrm{i}\beta_n}\delta_{nm}$\,, 
$\{\beta_n\}_n\subset\mathbb{R}$\,, will lead to $G$-invariant quadratic forms.

As a particular case we can consider that all the parameters are equal, i.e., $\beta_n=\beta$, $n\in\mathbb{Z}$\,.
In this case it is clear that $(\widehat{U\varphi})_n=e^{\mathrm{i}\beta}\varphi_n$\,, which gives the following 
admissible unitary with spectral gap at $-1$:
$$U\varphi=e^{\mathrm{i}\beta}\varphi\;.$$
This shows that the unique Robin boundary conditions compatible with the $SO(2)$-invariance are those that are defined with a 
constant parameter along the boundary, i.e., 
\begin{equation}
b\left(\frac{\d\Phi}{\d \nu}\right)=-\tan(\beta/2)b(\Phi)\;,\quad\beta\in[0,2\pi]\;,
\end{equation}
where $\nu$ stands for normal vector field pointing outwards to the boundary.
\end{enumerate}
\end{example}


\subsection{Reduction of symplectic manifolds by fixed sets and the reduced elliptic Grassmannian}

This construction of the reduced self-adjoint Grassmannian, i.e., the space of elliptic self-adjoint extensions of the quotient Dirac operator, will be inspired by a natural
construction in symplectic geometry that we will discuss first.

Let $M$ be a symplectic manifold with symplectic form $\omega$ and $G$ a
compact group acting by symplectomorphisms on $M$.  There is a well
developed and very successful way of removing the symmetry degrees of
freedom of $M$ under the symmetry group $G$ known as symplectic
reduction (more specifically Marsden-Weinstein reduction in this particular case, see for instance
\cite[Ch. 7.4]{Ca14}  and references therein).  

However this scheme is
not appropriate for the situation we are considering.  In fact, what we
need is another reduction scheme which is based on the properties of
the singular part of $M$ with respect to the action of $G$ (see Section \ref{section:breaking} for more details
on the stratified structure of $M$).   As usual we will
denote by $G_x$ the isotropy group of $x\in M$.  Then the orbit $G\cdot
x$ through $x$ is diffeomorphic to the homogeneous space $G/G_x$.  Two
points $x,y\in M$ are said to lie in the same stratum $\Sigma$ if the
isotropy groups $G_x$ and $G_y$ are conjugate in $G$.  Thus, the points
in the same orbit are in the same stratum, but other points in different
orbits can be in the same stratum too.  In fact, it is easy to see that
the strata of $M$ are $G$-invariant and thus they are union of orbits. 

There is a natural map from the space of strata into the space of orbits
of $G$ acting  by conjugation on its lattice of subgroups.   In the space
of strata there are two distinguished ones: the maximal stratum $M_\mathrm{reg}$, which
is the union of orbits with minimal isotropy group (when the action is
effective, this is the set of points where the group has a locally free action);
and the minimal stratum, $M_{\mathrm{min}}$, made of the fixed points of the action.  If the
action is totally ineffective, i.e., trivial, this stratum is the
manifold itself. 

The manifold $M/G$ is a stratified manifold too.  The strata of the
orbit space are in one-to-one correspondence with the strata of $M$. 
The set $M/G$ decomposes in this way in a disjoint union of smooth
manifolds $S_\alpha$,  $M/G = \cup_\alpha S_\alpha$, where $\alpha$ labels
the strata in $M$.  The canonical projection $\pi\colon M \to M/G$
is a smooth submersion on each strata. Thus $\pi^{-1}(S_{\mathrm{min}}) =
\mathrm{Fix}_G(M)$ is a smooth submanifold (possibly non connected) of $M$.

  We will concetrate now in this minimal stratum $\mathrm{Fix}_G
(M)$ and we will show that it carries a (possibly trivial) symplectic
structure.

\begin{lemma}\label{fix_sym} The set $\mathrm{Fix}_G(M)$ of fixed points in $M$ under the
action of the compact group $G$ is a smooth symplectic submanifold of $M$.
\end{lemma}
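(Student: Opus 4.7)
The plan has two parts: first establish the smooth submanifold structure of $\mathrm{Fix}_G(M)$, then upgrade this to a symplectic submanifold.

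For the smooth submanifold part, I would exploit compactness of $G$ to pick a $G$-invariant Riemannian metric $g_0$ on $M$ (obtained by averaging an arbitrary Riemannian metric over the Haar measure of $G$, as was done repeatedly earlier in the paper). Pick $x\in \mathrm{Fix}_G(M)$. Because the action fixes $x$, it induces a linear representation of $G$ on $T_xM$, and the exponential map $\exp_x\colon T_xM \to M$ associated with $g_0$ is equivariant with respect to this linear action and the original action on $M$. Restricted to a small enough ball around $0\in T_xM$, $\exp_x$ is a diffeomorphism onto a $G$-invariant open neighborhood $U$ of $x$, and it carries the linear subspace $(T_xM)^G$ of $G$-invariant vectors onto $U\cap \mathrm{Fix}_G(M)$. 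This produces a smooth chart of $M$ in which $\mathrm{Fix}_G(M)$ is cut out as a linear subspace, hence $\mathrm{Fix}_G(M)$ is a smooth (possibly disconnected) embedded submanifold with $T_x(\mathrm{Fix}_G(M)) = (T_xM)^G$.

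For the symplectic part, I would first note that $\omega$ restricted to $\mathrm{Fix}_G(M)$ is automatically closed (pullback of a closed form along the inclusion). The real content is non-degeneracy, and here the key step is to choose a $G$-invariant $\omega$-compatible almost complex structure $J$. Since the space of $\omega$-compatible almost complex structures on $M$ is non-empty and contractible, and $G$ acts on this space, a standard averaging argument (using the contractibility to make the averaged object well defined) produces a $G$-invariant compatible $J$. Then $g(\cdot,\cdot)=\omega(\cdot,J\cdot)$ is a $G$-invariant Riemannian metric on $M$.

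With this $J$ in hand, the equivariance $gJ = Jg$ for every $g\in G$ implies that $J$ preserves each $G$-isotypic component of $T_xM$, and in particular $J\bigl((T_xM)^G\bigr)\subset (T_xM)^G$. Now fix $v\in T_x(\mathrm{Fix}_G M) = (T_xM)^G$ with $v\neq 0$; then $Jv\in (T_xM)^G$ as well, and
\begin{equation*}
\omega(v,Jv) = g(v,v) > 0,
\end{equation*}
so the restricted form $\omega|_{T_x(\mathrm{Fix}_G M)}$ is non-degenerate at every $x\in \mathrm{Fix}_G(M)$. Combined with the smooth submanifold structure from the first part, this gives the result.

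The step I expect to be the main obstacle is the existence of a $G$-invariant $\omega$-compatible almost complex structure: averaging does not preserve the non-linear compatibility condition directly, so one has to invoke the contractibility of the space of compatible $J$'s and use the fact that a fixed point of the $G$-action on this contractible convex-like space exists (e.g.\ via the Cartan fixed point theorem or the polar decomposition construction that produces $J$ out of any $G$-invariant metric, which is arguably the cleanest route: take the averaged metric $g_0$, form the skew endomorphism $A$ defined by $\omega(\cdot,\cdot)=g_0(A\cdot,\cdot)$, and set $J=(AA^*)^{-1/2}A$, which is manifestly $G$-invariant and compatible with $\omega$).
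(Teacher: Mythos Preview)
Your proof is correct and shares the same core idea as the paper's: both hinge on the existence of a $G$-invariant $\omega$-compatible almost complex structure $J$, which then preserves $(T_xM)^G$ and forces $\omega|_{(T_xM)^G}$ to be non-degenerate. The differences are in execution. You treat the smooth-submanifold part explicitly via the equivariant exponential map, whereas the paper essentially takes it for granted. For non-degeneracy you give the one-line argument $\omega(v,Jv)=g(v,v)>0$, while the paper instead picks a fixed vector $v$, constructs its symplectic partner $v'$ as the Hamiltonian vector field of $f_v(u)=\eta_x(v,u)$, checks that $v'$ is again $G$-fixed, and proceeds inductively on $2$-dimensional symplectic subspaces. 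Your polar-decomposition construction of the invariant $J$ is also sharper than the paper's appeal to averaging $\eta$ and $J$ separately, which (as you note) does not directly preserve the compatibility condition. Overall your argument is cleaner and more complete, but conceptually it is the same route.
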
 

\begin{proof}  We notice first that the action of $G$ on $M$ induces a linear action of
$G$ on the vector space $T_xM$ for each $x\in\Fix_G(M)$
denoted by $g_*v$, $g\in G$, $v\in T_xM$.
Now it is simple to check that $g_*$ defines a linear representation
of $G$ on $T_x\Fix_G(M)$ (in general it is easy to check that $g_*$
defines a linear representation of $G_x$ on $T_xM$).  

Now let us suppose that $\Fix_G(M)$ is not zero dimensional (in that
case, the symplectic form will be trivial).  On the other hand let us choose a $G$-invariant metric
on $M$ compatible with $\omega$.  Such metric
exists because the group $Sp (2n,\R)$ is contractible to the subgroup
$U(n)$.  Moreover, this allows to find and adapted almost complex structure $J$ for
$\omega$, hence a metric $\eta$ which is related to $\omega$ by
\begin{equation}\label{compwJ} \eta_x (u,v) = \omega_x (J_x u,v)\, , \qquad \forall
u,v\in T_xM .
\end{equation}
Then, averaging $\eta$ and $J$ over the group $G$ we will
find the demanded structures pointwise.  

Let us take now a unitary eigenvector $v$ of $g_*$.
Consider now the linear function $f_v\colon T_x M \to \R$ defined
by $f_v(u) = \eta_x(v,u)$, for all $u\in T_xM$.  Now the 1-form $df_v =
\hat{\eta}_x (v)$ defines a hamiltonian vector field $X$ on $T_xM$ via
the symplectic structure $\omega_x$, i.e.
$$ 
i_X \omega_x = df_v \, .
$$
Because $\omega_x$ and $df_v$ are constant in $T_xM$, the same happens
for $X$ defining a constant tangent vector to $T_xM$ that can be
identified with a vector in $T_xM$.  Let us denote such vector by $v'$.
Then, computing $\omega_x (v,v')$ we get
$$ \omega_x (v',v) = i_X \omega_x (v) = df_v (v) = \eta_x (v,v) = 1 .$$ 
On the oher hand, it is clear that the covector $df_v$ is an eigenvector of $g_*$ 
with the same eigenvalue than $v$ because both $\eta_x$, $J$ are $g_*$ -invariant.  Hence,
because $\omega$ is also invariant with respect to $g_*$, $v'$ is an eigenvector
with eigenvalue the same eigenvalue for the action of $G$ on $T_xM$.  

Thus we conclude that the pair $v,v'$ just constructed span a symplectic 
subspace $V$ of dimension 2 in $T_x\Fix_G(M)$.  Then we can take the
orthogonal subspace $V^\perp$ with respect to the metric $\eta_x$ on
$T_x\Fix_G(M)$  and repeat the argument.  

Notice that because the
compatibility property of $\omega$ and $\eta$, Eq. (\ref{compwJ}), the
restriction of $\omega_x$ to $V^\perp$ is symplectic too, thus the
splitting $T_x\Fix_G(M) = V \oplus V^\perp$ is a symplectic
splitting.

\end{proof}


Consider now $L$ to be a Lagrangian submanifold of a symplectic manifold $M$.  Let
$S$ be a symplectic submanifold of $M$.   Weinstein's theorem on the
local structure of Lagrangian embeddings asserts that there exists a tubular
neighborhood of $L$ on $M$ symplectomorphic to a tubular neighborhood of
$L$ on $T^*L$ as the zero section of the cotangent bundle. 
Then we can construct a Lagrangian foliation in the neighborhood of a
given point $x\in L$ by pulling back the (local) Lagrangian foliation of
$T^*L$ defined by a family of closed 1-forms $\alpha_y$ parametrized by
$y$, such that $\alpha_0 = 0$, in an open subset $U\subset \R^n$, $0\in U$, $n = \dim L$. 

Then, we will say
that the Lagrangian submanifold $L$ is transverse to the symplectic
submanifold $S$ at $x\in L\cap S$ in the symplectic category if $S$ is
transverse to a local Lagrangian foliation in a neighborhood of $x$
that contains $L$. We will say that $L$ and $S$ a symplectically
transversal if they are transversal in the symplectic category at each
$x\in L\cap S$.  Such local foliation always exist by the previous remarks
and the notion of symplectic transversality implies that the intersection
$L\cap S$ is a Lagrangian submanifold of $S$. 

\begin{lemma}\label{Lag_inter} If the Lagrangian submanifold $L$ and the symplectic
submanifold are symplectically transversal, the intersection $L\cap S$
is a Lagrangian submanifold of $S$.
\end{lemma}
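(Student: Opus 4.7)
My plan is to verify the two defining properties for $L \cap S$ to be Lagrangian in $S$: isotropy with respect to $\omega|_S$ and the correct (half) dimension.

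Isotropy is immediate from the Lagrangian character of $L$. For any $x \in L \cap S$ and any $v, w \in T_x(L \cap S) \subset T_xL$, the Lagrangian condition gives $\omega_x(v,w) = 0$; since the symplectic form on $S$ is the restriction $\omega|_S$, this identity persists in $(T_xS, \omega_x|_{T_xS})$, showing that $T_x(L\cap S)$ is isotropic in $T_xS$.

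For the dimension, I would apply the Weinstein Lagrangian neighborhood theorem to identify a tubular neighborhood of $L$ in $M$ symplectomorphically with a neighborhood of the zero section in $T^*L$, sending $L$ to the zero section and the local Lagrangian foliation $\mathcal{F}$ of the hypothesis to the foliation whose leaves are the graphs of the closed $1$-forms $\alpha_y$ (with the leaf $\alpha_0 = 0$ corresponding to $L$ itself). The symplectic transversality of $L$ and $S$ at $x$ then translates into a compatibility between $T_xS$ and the splitting $T_xM = T_xL \oplus N_x$ induced by the foliation, where $N_x$ is the direction transverse to the leaf at $x$. Exploiting both the Lagrangian nature of the leaves and the symplectic nature of $S$, one can then show $\dim T_x(L \cap S) = \frac{1}{2}\dim T_xS$, completing the proof.

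The main obstacle, as I see it, is this dimension count: a naive transversal intersection gives only $\dim L + \dim S - \dim M$, which is generically smaller than $\frac{1}{2}\dim S$. To overcome this I would pick an $\omega$-compatible almost complex structure $J$ adapted to the Weinstein model so that $J(T_xL) = N_x$, and show that symplectic transversality forces the decomposition $T_xS = (T_xL \cap T_xS) \oplus J(T_xL \cap T_xS)$. This single decomposition yields both isotropy and the correct dimension, and is naturally available in the intended application where $S = \mathrm{Fix}_G(M)$ for a compact group $G$ of symplectomorphisms preserving $L$: one chooses $J$ to be $G$-invariant, and then both $T_xS$ and $T_xL$ are $J$-invariant, whence the decomposition follows at once from the uniqueness of the splitting $T_xM = T_xL \oplus J(T_xL)$ applied to $G$-fixed vectors.
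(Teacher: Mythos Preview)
Your treatment of isotropy matches the paper's: both observe that $T_x(L\cap S)\subset T_xL$ makes the restriction of $\omega$ vanish. For the remaining step the paper takes a different route: it argues that transversality of $S$ to the local Lagrangian foliation makes $L\cap S$ a smooth submanifold, and then attempts a direct symplectic-orthogonal argument, taking $X\in T(S\cap L)^{\perp}$ and invoking the splitting $TS\oplus TS^{\perp}=TM$ for the symplectic submanifold $S$. That argument is only sketched in the paper, so your more explicit strategy via a compatible $J$ is a genuine alternative and, in the intended application, a cleaner one.

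There is, however, a slip in your final paragraph. You write that ``both $T_xS$ and $T_xL$ are $J$-invariant''; but a Lagrangian subspace is \emph{never} invariant under an $\omega$-compatible $J$, since $J$-invariant subspaces are symplectic. What you need (and what is actually true in the application $S=\mathrm{Fix}_G(M)$ with $L$ a $G$-invariant Lagrangian and $J$ a $G$-invariant compatible structure) is that $T_xL$ and $J(T_xL)$ are each \emph{$G$-invariant} subspaces of $T_xM$. The splitting $T_xM=T_xL\oplus J(T_xL)$ is then $G$-equivariant, and passing to $G$-fixed vectors gives
\[
T_xS=(T_xM)^G=(T_xL)^G\oplus (J(T_xL))^G=(T_xL\cap T_xS)\oplus J(T_xL\cap T_xS),
\]
which is exactly the decomposition you want. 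So your argument is correct once ``$J$-invariant'' is replaced by ``$G$-invariant'' at that one spot.

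A second, more substantive point: in the general statement of the lemma you do not have a group action, and you assert that ``symplectic transversality forces'' the decomposition $T_xS=(T_xL\cap T_xS)\oplus J(T_xL\cap T_xS)$. This step is not justified by what you wrote; ordinary transversality of $S$ to the leaves of a Lagrangian foliation only gives $\dim(T_xL\cap T_xS)=\dim S-\tfrac12\dim M$, and there is no mechanism in your argument linking the foliation to the chosen $J$ so as to upgrade this to $\tfrac12\dim S$. The paper's proof is itself terse at this point, but its strategy is to exploit the full foliation (not just the single leaf $L$) together with the symplectic splitting $TM=TS\oplus TS^{\perp}$; your $J$-argument, as written, works cleanly only in the equivariant setting you describe at the end.
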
  

\begin{proof}  Because the submanifold $S$ is transversal to a local foliation
of $M$ then the intersection with the leaves are submanifolds.  In
particular the intersection with $L$ is a submanifold.  This holds in a
neighborhood of each point, than we have patches covering the
intersection $L\cap S$.  But it is clear from the definition of local
foliation in the neighborhood of a point in $L$ that this patches can be
chosen to be contained in local coordinate neighborhoods of $L$ hence
they define on $L$ the same differentiable structure.

It is obvious that $L\cap S$ is an isotropic submanifold of $M$, hence
it is isotropic of the symplectic submanifold $S$.  Let now $X$ be in
$T(S\cap L)^\perp$. Then, $\omega (X,Y) = 0$ for all $Y\in T(S\cap L)$.
Because $T(S\cap L) \subset TS$, then $TS^\perp \subset T(S\cap
L)^\perp$, but $TS\oplus TS^\perp = TM$.
\end{proof}



\subsection{The reduction of the Grassmannian and the space of virtual
self-adjoint extensions}

We can apply the discussion in the previous section to the
space of elliptic self-adjoint extensions.  In fact, the space of elliptic self-adjoint
extensions of the Dirac operator $\Dsl$, the elliptic self-adjoint Grassmannian $\sp
M_{\ellip}(\Dsl)$ is a Lagrangian submanifold of the infinite dimensional
Grassmannian $\mathrm{Gr} (\Dsl)$.  Moreover the group $G$ acts on $\mathrm{Gr} (\Dsl )$
because unitary representations of a compact group transform a closed subspace $W$
into another one which still is in $\mathbf{Gr}$.  In addition $G$ acts on $\mathbf{Gr}$ symplectically.  

We can check this easily because in the dense
subset of the Grassmannian made of subspaces $W_T$ which are graphs of closed
operators $T$, the group $G$ acts as $g\cdot W_T =
W_{U(g)^\dagger T U(g)}$.  Hence the group $G$ will act on tangent vectors $\dot{A} \in T_{W_T}\mathrm{Gr}$, as 
$$
g_* \dot{A} = U(g)^\dagger \dot{A} U(g) \, .
$$ 
The following computacion shows that $\omega_W$ is actually $G$-invariant (the subindex at $\omega$ 
will be omitted):
\begin{eqnarray*} \omega (g_*\dot{A}, g_*\dot{B}) &=& 
\frac{i}{2} \Tr ((g_*\dot{A})^\dagger g_*\dot{B} - (g_*\dot{B})^\dagger g_*\dot{A}) \\ 
&=& \frac{i}{2} \Tr ((U(g)^\dagger\dot{A}U(g))^\dagger U(g)^\dagger\dot{B}U(g) -
(U(g)^\dagger\dot{B}U(g))^\dagger U(g)^\dagger\dot{A}U(g)) 
\\ &=&  \frac{i}{2} \Tr (U(g)^\dagger\dot{A}^\dagger \dot{B}U(g) -
U(g)^\dagger\dot{B}^\dagger \dot{A}U(g)) = 
\\ &=& \frac{i}{2} \Tr ( \dot{A}^\dagger \dot{B} - \dot{B}^\dagger
\dot{A}) = \omega (\dot{A}, \dot{B}) \, .
\end{eqnarray*}

On the other hand, the fixed set of $G$ in $\mathrm{Gr}(\Dsl)$ is formed precisely by the
invariant subspaces in $\mathrm{Gr}(\Dsl)$. That is, 
$$
\Fix_G (\mathrm{Gr}(\Dsl)) = \set{W\in \mathrm{Gr}(\Dsl) \mid g\cdot W = W, ~ \forall g\in G}\, .
$$  
Notice that if $W = W_T$, i.e., $W$ is the graph of the operator $T$, then
$g\cdot W = W$ iff $U(g) T = T U(g)$.  That means that the set of fixed
points under $G$ is a the closure in the Grassmannian of the set of $G$-invariant
operators.

Moreover, because of Lemma \ref{fix_sym}, $\mathrm{Fix}_G (\mathrm{Gr}(\Dsl))$ is symplectic.  On
the other hand because of Thm. \ref{Ellip_Lag}, $\sp M_{\mathrm{ellip}} (\Dsl)$ is a Lagrangian submanifold in $\mathrm{Gr} (\Dsl)$.   In addition both $\mathcal{M}(\Dsl)$ and $\mathcal{M}_\mathrm{ellip}(\Dsl)$ are $G$-invariant.  Obviously, if $W$ is a self-adjoint
subspace, $g\cdot W$ is also self-adjoint because $G$ acts unitarily in $\sp H_D$. 

Finally, notice that  $\Fix_G (\mathrm{Gr} (\Dsl)) \cap \sp M_{\ellip} (\Dsl) = \Fix_G (\sp M_\ellip
(\Dsl))$.  Then, because of Lemma \ref{Lag_inter}, we conclude that $\Fix_G (\sp M(\Dsl))$
is a Lagrangian submanifold of $\Fix_G (\mathrm{Gr}(\Dsl))$ that will be
denoted by $\sp M_0$.  We have concluded the proof of the following
theorem.  

\begin{theorem}\label{Dirac_G_inv} The space $\Fix_G (\mathrm{Gr} (\Dsl))$ of $G$-invariant subspaces of the Grassmannian
$\mathrm{Gr} (\Dsl)$ is a symplectic submanifold. Moreover the space of $G$-invariant elliptic self-adjoint extensions of $\Dsl$ is a Lagrangian submanifold of $\Fix_G (\mathrm{Gr} (\Dsl))$.
\end{theorem}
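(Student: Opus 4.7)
The plan is to assemble the statement directly from the machinery developed in the preceding paragraphs: the $G$-action on the boundary Hilbert space induces a symplectic $G$-action on $\mathbf{Gr}(\Dsl)$, and the two key geometric lemmas (Lemma \ref{fix_sym} and Lemma \ref{Lag_inter}) then handle, respectively, the symplectic nature of the fixed set and the Lagrangian nature of the intersection with $\sp M_{\ellip}(\Dsl)$. My strategy therefore has four clean steps, and I would present them in that order.

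First I would verify the $G$-equivariance and symplectic character of the action on $\mathbf{Gr}(\Dsl)$. The unitary representation $v(g)$ induced on the boundary Hilbert space $\sp H_D$ (Sect.~\ref{subsec:6-2}) sends closed subspaces to closed subspaces and preserves the projections $\pr_\pm$ up to conjugation by unitaries, hence maps $\mathbf{Gr}(\Dsl)$ to itself; on the dense subset of graph subspaces $W_T$, the action reads $g\cdot W_T = W_{v(g)^\dagger T v(g)}$, inducing on tangent vectors $\dot A\in T_{W_T}\mathbf{Gr}$ the transformation $g_*\dot A = v(g)^\dagger \dot A v(g)$. The trace cyclicity computation already displayed before the theorem statement shows that $\omega(g_*\dot A,g_*\dot B) = \omega(\dot A,\dot B)$, so the action is by symplectomorphisms.

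Next I would identify $\Fix_G(\mathbf{Gr}(\Dsl))$. Because $G$ is compact and acts by symplectomorphisms, Lemma \ref{fix_sym} applies verbatim to this (infinite-dimensional) setting once one has chosen a $G$-invariant compatible almost complex structure; such a structure exists because the space of compatible complex structures on $\sp H_D$ is contractible (indeed, the stabilizer argument used in the finite-dimensional proof carries over by averaging over the Haar measure of the compact group $G$). This yields the first claim: $\Fix_G(\mathbf{Gr}(\Dsl))$ is a (possibly disconnected) symplectic submanifold of $\mathbf{Gr}(\Dsl)$.

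Finally I would pass to the Lagrangian statement. By Theorem \ref{Ellip_Lag}, $\sp M_{\ellip}(\Dsl)$ is Lagrangian in $\mathbf{Gr}(\Dsl)$; and because $G$ acts by unitaries on $\sp H_D$, if $W$ is self-adjoint so is $g\cdot W$, hence $\sp M_{\ellip}(\Dsl)$ is $G$-invariant. Therefore the set of $G$-invariant elliptic self-adjoint extensions equals $\sp M_{\ellip}(\Dsl)\cap \Fix_G(\mathbf{Gr}(\Dsl))$. It remains to invoke Lemma \ref{Lag_inter} to conclude that this intersection is Lagrangian in $\Fix_G(\mathbf{Gr}(\Dsl))$.

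The main obstacle is the symplectic-transversality hypothesis required by Lemma \ref{Lag_inter}. In the finite-dimensional prototype this is automatic from the compatibility condition \eqref{compwJ}, but in the Hilbert-manifold setting one must argue, pointwise at each $W\in\sp M_{\ellip}\cap\Fix_G(\mathbf{Gr})$, that the tangent space $T_W \mathbf{Gr}$ of Hilbert--Schmidt operators splits as $T_W\Fix_G(\mathbf{Gr})\oplus T_W\Fix_G(\mathbf{Gr})^\perp$ compatibly with the symplectic form and that $T_W\sp M_{\ellip}$ meets this splitting transversally. This is where I would spend most of the detailed work, using the orthogonal projector onto the $G$-invariant subspace (well-defined by Haar averaging) to realize the splitting explicitly and check that the restriction of $\omega$ to $T_W\Fix_G(\mathbf{Gr})$ is non-degenerate, thereby reducing symplectic transversality to the clean linear-algebraic statement already present in Lemma \ref{fix_sym}.
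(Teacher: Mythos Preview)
Your proposal is correct and follows essentially the same route as the paper: verify that the boundary representation makes $G$ act symplectically on $\mathbf{Gr}(\Dsl)$ via the trace-cyclicity computation, apply Lemma~\ref{fix_sym} to obtain that $\Fix_G(\mathbf{Gr}(\Dsl))$ is symplectic, invoke Theorem~\ref{Ellip_Lag} and the $G$-invariance of $\sp M_{\ellip}(\Dsl)$, and conclude with Lemma~\ref{Lag_inter}. You are in fact more careful than the paper on one point: the paper simply invokes Lemma~\ref{Lag_inter} without checking the symplectic-transversality hypothesis, whereas you flag it and sketch how Haar averaging produces the required $G$-invariant splitting of $T_W\mathbf{Gr}$.
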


As it was pointed before, the spaces $\Fix_G (\sp M_\ellip(\Dsl))$ and $\Fix_G (\sp M(\Dsl))$ are closely
related to the spaces of elliptic self-adjoint extensions and self-adjoint extensions of the
quotient Dirac operator $\Dsl/G$ respectively, but they still consist of functions on
$\partial \Omega$ and not in $\partial \Omega/G$.  In order to compare them we will have to relate $\mathrm{Gr}
(\sp H_{D/G})$ with $\Fix_G (\sp H_D)$. However that is easily done, noticing
that on each $G$-invariant subspace $W$ we can select the closed
subspace made of $G$-invariant vectors, i.e., the eigenspace with
eigenvalue 1 for the unitary representation of $G$.  This space will be
denoted by $W^G$ and it coincides with the intersection $W\cap \sp
H_D^G$.  

\begin{proposition}  For a compact group action of $G$ on $\Omega$ with smooth quotient space $\Omega/G$ and a $G$-invariant Dirac operator $\Dsl$, the spaces $\Fix_G(\mathcal{M}_\ellip(\Dsl ))$ and $\mathcal{M}_\ellip(\Dsl/G))$ are symplectically diffeomorphic. 
\end{proposition}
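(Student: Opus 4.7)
The plan is to exhibit the natural restriction map $\Phi\colon \Fix_G(\mathcal{M}_\ellip(\Dsl)) \to \mathcal{M}_\ellip(\Dsl/G)$ sending a $G$-invariant subspace $W \subset \mathcal{H}_D$ to its fixed part $W^G = W \cap \mathcal{H}_D^G$, and then to verify that $\Phi$ is a symplectomorphism. The first preliminary step is to identify $\mathcal{H}_{\Dsl/G}$ with $\mathcal{H}_D^G$: since the quotient is assumed smooth, $\partial(\Omega/G) = \partial\Omega/G$ and $S_{\partial\Omega}/G$ is a Dirac bundle over $\partial\Omega/G$ whose $H^{1/2}$-sections correspond bijectively with $G$-invariant sections of $S_{\partial\Omega}$; this gives $\mathcal{H}_{\Dsl/G} \cong \mathcal{H}_D^G$ as Hilbert spaces. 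The second preliminary step is to show that this identification intertwines all the geometric structures: because $G$ acts by isometries preserving the normal $\nu$ to $\partial\Omega$, Clifford multiplication by $\nu$ commutes with $G$, so $J_D$ is $G$-equivariant and restricts to $J_{\Dsl/G}$; consequently $\omega_D$ restricts to $\omega_{\Dsl/G}$ and the polarization $\mathcal{H}_D=\mathcal{H}_+\oplus \mathcal{H}_-$ descends to $\mathcal{H}_D^G=\mathcal{H}_+^G\oplus \mathcal{H}_-^G$.

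The core of the proof consists of showing that $\Phi$ is well-defined and a diffeomorphism. Well-definedness splits into three checks. That $W^G$ is closed and $\omega_{\Dsl/G}$-isotropic is immediate. The nontrivial check is maximality: given $\phi \in \mathcal{H}_D^G$ with $\omega_{\Dsl/G}(\phi,\psi)=0$ for all $\psi \in W^G$, compactness of $G$ permits the Haar averaging $\bar\zeta := \int_G g\cdot \zeta \, d\mu_G(g) \in W^G$ for any $\zeta \in W$, and $G$-invariance of $\phi$ together with invariance of $\omega_D$ yields $\omega_D(\phi,\zeta) = \omega_D(\phi,\bar\zeta) = 0$; hence $\phi$ is $\omega_D$-orthogonal to all of $W$, and maximality of $W$ forces $\phi \in W \cap \mathcal{H}_D^G = W^G$. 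The ellipticity of $W^G$ follows from the ellipticity of $W$ together with the $G$-equivariance of the Atiyah--Patodi--Singer projectors $\pr_\pm$ onto $\mathcal{L}_\pm$: these projectors commute with the unitary $G$-representation and therefore restrict to the corresponding projectors for $\Dsl_{\partial(\Omega/G)}$, so the Fredholm and Hilbert--Schmidt conditions descend.

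The main obstacle is bijectivity of $\Phi$, and here the compactness of $G$ and the isotypical decomposition $\mathcal{H}_D = \bigoplus_\pi \mathcal{H}_D^\pi$ provide the essential tool. The inverse would assign to $W' \in \mathcal{M}_\ellip(\Dsl/G) \subset \mathcal{H}_D^G = \mathcal{H}_D^{\pi_0}$ its canonical $G$-invariant extension in the total Grassmannian; the delicate point is to pin down this extension in such a way that it is uniquely determined by $W'$, so that $\Phi$ is genuinely a bijection rather than only a fibration. I expect the argument to use that, on each non-trivial isotypical component, the joint requirements of being elliptic, self-adjoint, and equivariant with respect to the particular representation at play force the component $W^\pi$ to be the unique $\Dsl_{\partial\Omega}$-spectral Lagrangian equivariant under the $G$-action on $\mathcal{H}_D^\pi$; alternatively, one may identify $W$ with the $G$-equivariant lift of $W^G$ specified by the decomposition of the boundary data into equivariant packages. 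Once bijectivity is obtained, the symplectic compatibility is essentially formal: the Hilbert--Schmidt trace form defining $\omega$ in Eq.~\eqref{symp} is the restriction of the ambient one under the isometric inclusion $\mathcal{H}_D^G \hookrightarrow \mathcal{H}_D$, so both the Lagrangian condition and the smooth manifold structure (tangent spaces being self-adjoint Hilbert--Schmidt operators between the appropriate polarizations) are transported verbatim, and differentiability of $\Phi$ and its inverse follows from the continuity of the orthogonal projection onto $\mathcal{H}_D^G$ obtained via Haar averaging.
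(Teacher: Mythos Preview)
The paper supplies no proof beyond the paragraph immediately preceding the proposition, which simply names the map $W \mapsto W^G = W \cap \mathcal{H}_D^G$ and asserts that the comparison ``is easily done''. Your construction is exactly this map, and your verifications that $W^G$ is closed, maximally $\omega_{\Dsl/G}$-isotropic (via Haar averaging), and elliptic already go beyond what the paper provides.

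Your hesitation about bijectivity is not a defect of your write-up but a genuine problem that the paper does not resolve. In the unitary parametrization of Thm.~\ref{unit_D}, a point of $\Fix_G(\mathcal{M}_\ellip(\Dsl))$ is a $G$-equivariant unitary $U\colon \mathcal{H}_+\to\mathcal{H}_-$; under the Peter--Weyl decomposition $\mathcal{H}_\pm = \bigoplus_\pi \mathcal{H}_\pm^\pi$ this splits as $U = \bigoplus_\pi U^\pi$ over \emph{all} irreducible representations $\pi$ of $G$, whereas a point of $\mathcal{M}_\ellip(\Dsl/G)$ records only the trivial component $U^{\pi_0}$. The blocks $U^\pi$ for $\pi\neq\pi_0$ are free parameters (subject to ellipticity), so unless $G$ acts trivially on $\mathcal{H}_D$ the map $W\mapsto W^G$ is a fibration with non-trivial fibre, not a bijection. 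The paper's own cylinder example with $G=U(1)$ already exhibits this: the $U(1)$-equivariant unitaries on the boundary form an infinite-dimensional family (one block per Fourier mode), while $\mathcal{M}_\ellip(\Dsl/G)$ for the quotient interval is finite-dimensional. The proposition should therefore be read as an informal identification; note that immediately afterwards the paper \emph{defines} the space of ``virtual'' self-adjoint extensions of $\Dsl/G$ to be $\Fix_G(\mathcal{M}_\ellip(\Dsl))$, which would be superfluous if the proposition held as stated. Your instinct that extra data on the non-trivial isotypical components is needed for an inverse is correct.
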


We conclude this discussion by remarking that even if the space $\sp H_D^G$ of invariant
vectors is always well defined for arbitrary $G$-actions on $\Omega$, it
is not the same for $\mathrm{Gr} (\Dsl/G)$ because if the space of leaves $\Omega$ is
singular, it is not obvious at all what is the meaning of $\Dsl/G$.   

Thus we define the space of `virtual elliptic self-adjoint extensions' of the
quotient Dirac operator $\Dsl/G$ as the family of closed subspaces in $\Fix_G(\mathcal{M}_\ellip(\Dsl ))$.  
We shall denote this space by $\sp M (\Dsl/G)$ and it is a Lagrangian submanifold of $\mathrm{Gr}
(\sp H_D^G)$ whenever both are manifolds. 


\newpage

\section{Spontaneous symmetry breaking and self-adjoint extensions}\label{section:breaking}

A basic ingredient in the construction of gauge invariant theories of
interacting quantum fields was the discovery of the mechanism of
spontaneous symmetry breaking that allowed to explain the mass spectrum
of the theories \cite{Go61}, \cite{Hi64}.   

\subsection{The notion of spontaneous symmetry breaking}

Spontaneous symmetry breaking
has since then received a detailed analysis from the mathematical and
physical viewpoints \cite{Ki67}, \cite{We74}.  L. Michel and L. Radicati
in particular obtained the firsts theorems on the subject \cite{Mi68},
\cite{Mi73} and a series of refinements and developments followed (see
for instance \cite{Ga97} and references therein).

Roughly speaking, spontaneous symmetry breaking happens when the
equations describing a physical theory possess a given symmetry but the
physical states of the theory have a smaller symmetry algebra.  The way
this situation is modeled in Quantum Field Theories is based on the
fact that the symmetry of the Lagrangian has not to the same as the
symmetry of the vacuum state of the theory.  The specific mechanism that
selects the physical vacuum stated among the orbits of the symmetry
group of the theory on the Hilbert space of the quantum system is not
clearly stated.   

We will discuss a concrete mechanism of spontaneous
symmetry breaking on our simplified model of first quantized quantum
systems.    The basic idea is extremely simple.  As it was described in
the previous section, for a given symmetry group $G$, invariant Dirac
and Laplacian operators can be constructed.  Thus the
equations of the theory will be $G$-invariant.   However, if the
manifold of classical states, in our case $\Omega$,  has a
nontrivial boundary, the self-adjoint extensions of the operators could
break this theory in the sense that the domain of the given
self-adjoint extension will not be invariant under the group $G$.

As it was stated in Thm. \ref{Dirac_G_inv} the elliptic self-adjoint
extensions of the quotient Dirac operator $\Dsl/G$ are in one-to-one
correspondence with the space $\sp M_G(\Dsl)$, the space of fixed points
of the action of $G$ on the space $\mathcal{M}_\ellip(\Dsl)$ of elliptic self-adjoint extensions of $\Dsl$.  Thus all subspaces
in its complement are breaking the symmetry
of the theory, in the sense that the functions in the corresponding domains not only do not
have to be $G$-invariant, but that its transformed under $G$ could lie out
of the domain of the operator itself.  Thus the operator $\Dsl_W$ defined by $W$ not in $\Fix_G$ will
of course commute with the action of $G$, but its spectrum will
certainly not be $G$-invariant.  We shall explore this idea in detail
in what follows.

Let us consider for instance a simple example that will illustrate this idea.
Consider the 1D free particle on the interval $[-\pi,\pi]$ with Hamiltonian $H  = -d/dx^2$. The the parity group $\mathbb{Z}_2$ acts on the Hilbert space $L^2([-\pi,\pi])$ as $P\Psi (x) = \Psi (-x)$ and clearly $[P,H] = 0$ \cite{Ag01}.  
However not all allowed self-adjoint extensions of the Hamiltonian operator are invariant
with respect to $P$.   Actually the space of self-adjoint extensions of $H$ is parametrised by the 
unitary group $U(2)$. The induced action on the boundary space Hilbert space $\mathbb{C}_{-\pi} \oplus \mathbb{C}_{\pi}-$,
where $\mathbb{C}_{\pi}$ denotes the space of boundary values $\psi_\pm^\pi = \Psi(\pi) \pm i\Psi'(\pi)$ and
$\mathbb{C}_{\pi}$, $\psi_\pm^{-\pi} = \Psi(-\pi) \mp i\Psi'(-\pi)$ respectively.  The element $P$ acts on this space as 
the matrix 
$$
P = \left[ \begin{array}{cc} \, 0 & \, 1 \\ \, 1 & \, 0 \end{array}\right] \, ,
$$
hence, only those self-adjoint boundary conditions defined by unitary matrices of the form:
$$
U(\alpha, \delta) = e^{i\delta}  \left[ \begin{array}{cc} \cos \alpha & e^{i\pi/2} \sin \alpha \\ e^{i\pi/2} \sin \alpha &  \cos \alpha\end{array}\right] \, ,
$$
will define domains invariant under $P$.  Thus, for instance, quasi-periodic boundary conditions will break the symmetry $P$.

Another, less trivial, example in the same spirit is provided
by a result by Falceto and Esteve on a generalization of the Virial Theorem \cite{Es12}.
Consider a free particle in one dimension restricted to
move in $[0,\infty)$ and subject to Robin boundary conditions, i.e. 
$\Psi'(0) + \alpha\Psi(0) = 0$ with $\alpha > 0$.  In this case the free Hamiltonian has a single eigenfunction
$\Psi_0 (x) = \sqrt{2}\alpha  e^{-\alpha x}$ with eigenvalue
$$
E_0 = -\frac{\hbar^2\alpha^2}{2m} \, .
$$
If we use the Virial Theorem:
$$
2\langle \Psi_0 \mid T \mid \Psi_0 \rangle = \langle \Psi_0 \mid x V'(x) \mid \Psi_0 \rangle \, ,
$$
 to compute the expectation value of the kinetic energy in this state we obtain that it vanishes, which is in contradiction with the real result:
$$
 \langle \Psi_0 \mid T \mid \Psi_0 \rangle = E_0 \, . 
$$
The reason for this apparent contradiction is the fact that the domain of the Hamiltonian is not preserved by the generator of the group of scale transformations
$$
E = -\frac{i}{\hbar} xp - \frac{1}{2} \, ,
$$
and the virial theorem has to be modified.

The group theoretical analysis of this example is as follows.  
The group $G$ of scale transformations acts on the 
space of $L^2(\mathbb{R}^+)$ functions as:
$$ 
(\delta_s \Psi)(x) = e^{s/2}\Psi (e^s x) \, , \qquad s \in \mathbb{R} , x \in \mathbb{R}^+  \, .
$$
Notice that this action defines a unitary representation of the group as:
$\parallel \delta_s \Psi \parallel = \parallel \Psi \parallel$ for all $s \in \mathbb{R}$,
$\Psi \in L^2(\mathbb{R}^+)$.   The free particle Hamiltonian operator $H_0$ transforms
under the action of $\delta_s$ as:
$$
\delta_s^{-1} H_0 \delta_s = e^{2s} H_0 \, ,
$$
thus the virial theorem is obtained by observing that
\begin{eqnarray*}
\langle \Psi_n \mid [H,E] \mid \Psi_n \rangle &=& \langle \Psi_n \mid HE  \mid \Psi_n \rangle - \langle \Psi_n \mid EH  \mid \Psi_n\rangle  \\ &=&  \langle H\Psi_n \mid E  \mid \Psi_n \rangle  -  \langle \Psi_n \mid  EH  \mid \Psi_n \rangle  = 0  \, ,
\end{eqnarray*}
where $H$ is a sel-adjoint operator with domain $D(H)$, $\mid \Psi_n\rangle$ is an eigenvector, and $E\mid \Psi_n \rangle$ must lie in
the domain of $H$.
However if $E\mid \Psi_n \rangle \notin D(H)$ then the theorem fails.
In the situation above, it is easy to realise that the vector $E(\Psi_0)$
will not be in the domain $D_\alpha \subset L^2(\mathbb{R}^+)$ defined
by Robin's boundary conditions, because $(\delta_s\Psi) (0) = e^{s/2}\Psi(0)$, whereas 
$(\delta_s\Psi')(0) = \Psi'(0)$, hence if $\Psi \in D_\alpha$, $\alpha \neq 0$, then
$\delta_s\Psi \in D_{e^{s/2}\alpha} \neq D_{\alpha} $.


\subsection{The bifurcation diagram of the space of self-adjoint extensions}


We will analyze now in more detail the structure of the Lagrangian
Grassmannian $\mathcal{L}_\Dsl$ of the $G$-invariant Dirac operator $\Dsl$. 

Let us recall that the action of the group $G$ induces an action on
subspaces of the boundary data Hilbert space $\mathcal{\Dsl}$, hence
it induces an action on the space of self-adjoint extensions $\mathcal{M}_\Dsl$
of $\Dsl$.  Moreover it also induces and action on the elliptic Grasmmanian
$\mathrm{Gr}(\Dsl)$. Notice that because the representation $V$ of the group is unitary and continuous, then the transformed of a given subspace $W^g = V(g)^\dagger W V(g)$ will have Fredholm
and Hilbert-Schmidt projections on the corresponding polarization.
Hence the action of $G$ will map the elliptic self-adjoint Grassmannian $\mathcal{L}_{\dsl}$
onto itself.

The action of $G$ on $\mathcal{L}_{\dsl}$ will induce a stratified
structure on this manifold.  If we denote by $G_W$ the isotropy group
of the vector subspace $W\in \mathcal{L}_{\dsl}$, all the subspaces whose isotropy groups
will lie in the conjugacy class of $G_W$ will constitute a submanifold,
the stratum $\sp O_W$ of $\mathcal{L}_{\dsl}$.  

The fixed point set, i.e., the strictly invariant subspaces, that correspond to 
self-adjoint extensions of the reduced Dirac operator, define the minimal stratum $\mathcal{O}_{\mathrm{min}} = \mathrm{Fix}_G(\mathcal{L}_{\dsl})$, and the set of
completely non-invariant subspaces, i.e., those subspaces $W$ such that
$ W^g \cap W = \mathbf{0}$, constitute the maximal strata denoted by $\mathcal{O}_{\mathrm{max}}$. 
It is noticeable that $\mathcal{O}_{\mathrm{max}}$ is an open dense submanifold of $\mathcal{L}_{\dsl}$.

The manifold $\mathcal{L}_{\dsl}$ is decomposed then as a union 
$$ \mathrm{Gr}(\sp H_D) = \cup_{\alpha} \sp O_\alpha ,$$  
of the different strata $\sp O_\alpha$ labeled by an index $\alpha$
that runs over the conjugacy classes of subgroups of $G$.  
The decomposition above is such that the boundary of any
stratum is a union of strata,
$$
\partial \sp O_\alpha = \cup_{\alpha'} \sp O_{\alpha'} \, ,
$$
and under generic conditions, the strata $\mathcal{O}_\alpha$ are submanifolds with boundary 
given by the expression above.

A partial ordering can be defined on the space of strata as follows $\mathcal{O}_\alpha \prec \mathcal{O}_\beta$ if 
$\mathcal{O}_\alpha \subset \partial \mathcal{O}_\beta$. 
This partial relation induces a natural partial ordering in the space of
conjugacy classes of subgroups of $G$ which non-void strata.  Denoting
by $G_\alpha$ a representative of the conjugacy class of subgroups
of the strata $\sp O_\alpha$, we will define
$G_\alpha \prec G_\beta$ iff $\sp O_\alpha \prec \sp O_\beta$ or, in other words,
that is $\alpha \prec \beta$ iff $\mathcal{O}_\beta \prec \mathcal{O}_\alpha$. The lattice of
subgroups $\{G_\alpha \}$ with the partial order $\prec$ will be called the
bifurcation diagram of the $G$-invariant operator $\Dsl$.  

The bifurcation diagram of $\Dsl$ represents the possible schemes of
breaking the symmetry $G$.  The subgroups $G_\alpha \prec G_\beta$ can
actually be chosen to be closed subgroups of each other, hence, for finite-dimensional Lie groups, the
bifurcation diagram will necessarily have a finite number of levels (the
maximum number of subgroups needed to go from the trivial group to
$G$).  

The intersection of the space $\sp M(\Dsl)$ of self-adjoint extensions of
$\Dsl$ with the corresponding strata $\sp O_\alpha$ will describe the
self-adjoint extensions of $\Dsl$ with symmetry group
$G_\alpha$ determined by the boundary conditions.  We will say that the symmetry is broken from $G_\alpha$ to
$G_\beta$ if there is a change from the boundary conditions on $\sp
O_\beta \subset \partial \sp O_\alpha$ to the interior of $\sp
O_\alpha$.  

The problem of understanding such symmetry breaking is to
describe how the properties of the operators $\Dsl_W$ change when $W$
changes from $\sp O_\beta$ to $\sp O_\alpha$, e.g., how their spectral invariants change. 

We will only describe here the breaking from the trivial symmetry to
$G$, i.e., from the minimal strata to the maximal one.  Notice that this
can always be done because $\sp O_\mathrm{min} \subset \partial \sp O_\mathrm{max}$ and
$\sp O_\mathrm{max}$ is a dense open submanifold.

We denote by $W$ a small generic perturbation of a $G$-invariant
subspace $W_0$, i.e., $W\in \sp O_{max}$, and $W^g \cap W = \mathbf{0}$ for all $g\in G$.  Let $\xi$ a section in the domain of $\Dsl_W$, i.e.,
$\xi\mid_{\partial \Omega} \in W$.  Then defining a new variable
$\theta = \xi - \xi_0$ where $\xi_0$ is a lowest positive eigenvalue
solution of the eigenvalue equation $\Dsl_W \xi = m \xi$, we obtain that the
spectral equation for $\Dsl_W$ becomes the inhomogeneous equation for  $\theta$:
$$ 
\Dsl\theta = E\theta + (E-m_0) \xi_0 
$$ 
and $\theta\mid_{\partial\Omega} = 0$, that is Dirichlet's boundary conditions which are always symmetric boundary
condition. Thus the bifurcation $G$ to $e$ modifies the operator $\Dsl$
adding a nontrivial inhomogeneous term.  

\newpage


\section{Conclusions and further developments}

Unfortunately there is no room in these notes to discuss many other examples and problems from quantum systems
involving boundary conditions and/or (non-)self-adjoint extension
that deserve a detailed analysis and for which some of
the results and ideas presented are relevant.  We will just
quote some of them.

The physical role, generation and control of edges states
is paramount among them.   We have briefly discussed
how edge states are formed when approaching the Cayley
surface, however a more detailed analysis is needed. 
 
Close to this is the relation between topology change and
dynamics of boundary conditions.  Notice that states corresponding
to boundary conditions close to the Cayley manifold can exhibit
(when considered in a Quantum Field Theory) quantum corrections 
corresponding to changes in the topology of the underlying space. 
Again, these aspects should be investigated further.
 
The theory of self-adjoint extensions of Laplace or
Dirac operators on manifolds with singularities, arising either
from group actions or by suitable boundary conditions. like orbifolds, quantum systems with meromorphic
or distributional, like $\delta'$, potentials, etc.
 

\section*{Acknowledgments} M. A.  work has been partially 
supported by the Spanish MICINN grants  FPA2012-35453 and CPAN 
Consolider Project CDS2007-42 and DGA-FSE (grant 2014-E24/2).
A.I. was partially supported by the Community of Madrid project QUITEMAD+, S2013/ICE-2801, and MICIN MTM 2010-21186-C02-02.    M.A. and A.I. would also like to thank the warm hospitality
of the Dipartimento di Fisica dell'Universita di Napoli were
this work was started.    The authors would like to thank the Erwing
Schr\"odinger Institute for Mathematical Physics for the friendly and
stimulating atmosphere they enjoyed during their stay there and were this project
was partially developed.  The authors would like to thank the hospitality of the
NIThP at Stellenbosch were the project was completed.  We would also like to thank
the  suggestions and comments by Balachandran, Mu\~noz-Castaneda, P\'erez-Prado, 
Santagata, Rajeev, Vaydia and Falceto.

\newpage

\end{document}